\documentclass[journal]{IEEEtran}
\IEEEoverridecommandlockouts

\newcommand{\IEEEversion}{false}  
\usepackage[normalem]{ulem}      
\makeatletter
\font\redwavefont=lasyb10 scaled 850 
\newcommand{\revise}{%
  \bgroup
  \markoverwith{\lower3.5\p@\hbox{\sixly\textcolor{red}{\redwavefont\char58}}}%
  \ULon
}
\makeatother

\usepackage{tikz}
\usepackage{tkz-euclide}
\usetikzlibrary{arrows.meta}

\usepackage{setspace}
\usepackage{bm}
\usepackage{mathtools}
\usepackage{graphicx}
\usepackage{relsize}
\usepackage{cite}
\usepackage{amsmath}
\usepackage[dvipsnames]{xcolor}
\usepackage{amssymb}
\usepackage{dblfloatfix}
\usepackage{tabularx}
\usepackage{makecell}
\usepackage{lipsum}
\usepackage{float}
\usepackage{subfiles}
\usepackage{epstopdf}
\usepackage{verbatim}
\usepackage{amsmath}
\usepackage{ifthen}
\usepackage{pgfplots}
\usepackage{pgfplotstable}
\usepackage{tikzscale}
\usepackage{acronym}
\usepackage{amsfonts}
\usepackage{tikz-3dplot}
\usepackage{cuted}

\usepackage{capt-of}

\usetikzlibrary{intersections}  
\usepgfplotslibrary{fillbetween}

\usetikzlibrary{spy,backgrounds}
\usetikzlibrary{fit}
\usetikzlibrary{calc}

\usepackage{placeins}
\usepackage{tikzscale}		
\usepackage{tcolorbox}

\usepackage{ifthen}

\usepackage[framemethod=TikZ]{mdframed}	

\usepackage{pgfplots}
  \tcbset{shield externalize} 
  \pgfplotsset{compat=newest}
  \usetikzlibrary{plotmarks}
  \usetikzlibrary{arrows.meta}
  
  \usetikzlibrary{arrows}
  \usetikzlibrary{mindmap,trees}
  \usetikzlibrary{positioning,spy}
  \usetikzlibrary{calc,fadings,decorations.pathreplacing}
  \usetikzlibrary{decorations.pathmorphing,decorations.text}
  \usetikzlibrary{datavisualization.polar}      
  \usepackage{pgfplotstable}
  \usepgfplotslibrary{polar}
  \usetikzlibrary{patterns}

  \usetikzlibrary{plotmarks}
  \usepgfplotslibrary{patchplots}
  \usepackage{grffile}
  \usepackage{amsmath}
  \usepackage{tikzscale}		
  \usepackage{tikz-3dplot} 
  \usetikzlibrary{positioning}
  \usetikzlibrary{arrows}
  \usetikzlibrary{fit,backgrounds}  

\pgfdeclarelayer{behindBackground}
\pgfdeclarelayer{background}
\pgfdeclarelayer{foreground}
\pgfsetlayers{behindBackground,background,main,foreground}   

\tikzset{%
  >=latex,
  inner sep=0pt,%
  outer sep=2pt,%
  mark coordinate/.style={inner sep=0pt,outer sep=0pt,minimum size=3pt,
  fill=black,circle}%
}

\DeclareMathOperator*{\argmax}{arg\,max}

\renewcommand{\baselinestretch}{1} 

\colorlet{veccol}{green!50!black}
\colorlet{projcol}{blue!70!black}
\colorlet{myblue}{blue!80!black}
\colorlet{myred}{red!90!black}
\colorlet{mydarkblue}{blue!50!black}
\tikzset{>=latex} 
\tikzstyle{proj}=[projcol!80,line width=0.08] 
\tikzstyle{area}=[draw=veccol,fill=veccol!80,fill opacity=0.6]
\tikzstyle{vector}=[-stealth,myblue,thick,line cap=round]
\tikzstyle{unit vector}=[->,veccol,thick,line cap=round]
\tikzstyle{dark unit vector}=[unit vector,veccol!70!black]
\usetikzlibrary{angles,quotes} 

\definecolor{RDlightgreen}{RGB}{141 192 69}
\definecolor{RDgreen}{rgb}{0.3647, 0.4275, 0.2667}
\definecolor{RDdarkgreen}{rgb}{0.2196, 0.2196, 0.2196}
\definecolor{RDmaroon}{rgb}{.522,.22,.353} %

\definecolor{IEEEblue}{RGB}{0 98 155}
\definecolor{IEEElightblue}{RGB}{0 181 226}
\definecolor{IEEEturquoise}{RGB}{0 156 166}
\definecolor{IEEEred}{RGB}{186 12 47}
\definecolor{IEEEgreen}{RGB}{0 132 61}
\definecolor{IEEElightgreen}{RGB}{120 190 32}
\definecolor{IEEEorange}{RGB}{225 163 0}
\definecolor{IEEEyellow}{RGB}{255 209 0}
\definecolor{IEEEviolet}{RGB}{152 29 151}
\definecolor{IEEEdarkmaroon}{RGB}{134 31 65}
\definecolor{IEEEdarkorange}{RGB}{232 119 34}

\colorlet{red}{IEEEred}
\colorlet{blue}{IEEEblue}
\colorlet{orange}{IEEEorange}
\colorlet{green}{IEEEgreen}
\definecolor{violet}{RGB}{152 29 151}

\definecolor{sand}{rgb}{0.88235,0.63922,0.00000}
\colorlet{IEEEyellow}{sand}
\colorlet{SFVcolor}{IEEElightblue}

\newcommand{\ist}{\hspace*{.3mm}}

\newcommand{\iist}{\hspace*{1mm}}

\newcommand{\nn}{\nonumber}

\DeclareMathAlphabet{\mathsfbr}{OT1}{cmss}{m}{n}
\SetMathAlphabet{\mathsfbr}{bold}{OT1}{cmss}{bx}{n}
\DeclareRobustCommand{\msf}[1]{%
	\ifcat\noexpand#1\relax\msfgreek{#1}\else\mathsfbr{#1}\fi
}

\makeatletter
\newcommand{\msfgreek}[1]{\csname s\expandafter\@gobble\string#1\endcsname}
\makeatother

\DeclareRobustCommand{\mcal}[1]{%
	\ifcat\noexpand#1\relax\mathnormal{#1}\else\cal{#1}\fi
}
\DeclareRobustCommand{\BM}[1]{%
	\ifcat\noexpand#1\relax\bm{\boldUppercaseItalicGreek{#1}}\else\bm{#1}\fi
}
\makeatletter
\newcommand{\boldUppercaseItalicGreek}[1]{\csname var\expandafter\@gobble\string#1\endcsname}
\makeatother

\newcommand{\rv}[1]{\msf{#1}}
\newcommand{\RV}[1]{\bm{\msf{#1}}}

\newcommand{\V}[1]{\bm{#1}}
\newcommand{\M}[1]{\BM{#1}}

\let\geq\geqslant

\let\leq\leqslant

\let\succeq\succcurlyeq

\providecommand{\IEEEQED}{\IEEEQEDopen}

\newcommand{\E}[0]{\mathbb{E}}

\DeclareMathVersion{timesmath}
\SetSymbolFont{letters}{timesmath}{OML}{ntxmi}{m}{it}
\DeclareMathVersion{timesmathbold}
\SetSymbolFont{letters}{timesmathbold}{OML}{ntxmi}{b}{it}

\makeatletter
\newif\ifAC@uppercase@first%
\def\Aclp#1{\AC@uppercase@firsttrue\aclp{#1}\AC@uppercase@firstfalse}%
\def\AC@aclp#1{%
	\ifcsname fn@#1@PL\endcsname%
	\ifAC@uppercase@first%
	\expandafter\expandafter\expandafter\MakeUppercase\csname fn@#1@PL\endcsname%
	\else%
	\csname fn@#1@PL\endcsname%
	\fi%
	\else%
	\AC@acl{#1}s%
	\fi%
}%
\def\Acp#1{\AC@uppercase@firsttrue\acp{#1}\AC@uppercase@firstfalse}%
\def\AC@acp#1{%
	\ifcsname fn@#1@PL\endcsname%
	\ifAC@uppercase@first%
	\expandafter\expandafter\expandafter\MakeUppercase\csname fn@#1@PL\endcsname%
	\else%
	\csname fn@#1@PL\endcsname%
	\fi%
	\else%
	\AC@ac{#1}s%
	\fi%
}%
\def\Acfp#1{\AC@uppercase@firsttrue\acfp{#1}\AC@uppercase@firstfalse}%
\def\AC@acfp#1{%
	\ifcsname fn@#1@PL\endcsname%
	\ifAC@uppercase@first%
	\expandafter\expandafter\expandafter\MakeUppercase\csname fn@#1@PL\endcsname%
	\else%
	\csname fn@#1@PL\endcsname%
	\fi%
	\else%
	\AC@acf{#1}s%
	\fi%
}%
\def\Acsp#1{\AC@uppercase@firsttrue\acsp{#1}\AC@uppercase@firstfalse}%
\def\AC@acsp#1{%
	\ifcsname fn@#1@PL\endcsname%
	\ifAC@uppercase@first%
	\expandafter\expandafter\expandafter\MakeUppercase\csname fn@#1@PL\endcsname%
	\else%
	\csname fn@#1@PL\endcsname%
	\fi%
	\else%
	\AC@acs{#1}s%
	\fi%
}%
\edef\AC@uppercase@write{\string\ifAC@uppercase@first\string\expandafter\string\MakeUppercase\string\fi\space}%
\def\AC@acrodef#1[#2]#3{%
	\@bsphack%
	\protected@write\@auxout{}{%
		\string\newacro{#1}[#2]{\AC@uppercase@write #3}%
	}\@esphack%
}%
\def\Acl#1{\AC@uppercase@firsttrue\acl{#1}\AC@uppercase@firstfalse}
\def\Acf#1{\AC@uppercase@firsttrue\acf{#1}\AC@uppercase@firstfalse}
\def\Ac#1{\AC@uppercase@firsttrue\ac{#1}\AC@uppercase@firstfalse}
\def\Acs#1{\AC@uppercase@firsttrue\acs{#1}\AC@uppercase@firstfalse}
\robustify\Aclp
\robustify\Acfp
\robustify\Acp
\robustify\Acsp
\robustify\Acl
\robustify\Acf
\robustify\Ac
\robustify\Acs
\makeatother

\makeatletter
\def\underbracex#1#2{\mathop{\vtop{\m@th\ialign{##\crcr
				$\hfil\displaystyle{#2}\hfil$\crcr
				\noalign{\kern3\p@\nointerlineskip}%
				#1\crcr\noalign{\kern3\p@}}}}\limits}

\def\upbracefilla{$\m@th \setbox\z@\hbox{$\braceld$}%
	\bracelu\leaders\vrule \@height\ht\z@ \@depth\z@\hfill 
	\leaders\vrule \@height\ht\z@ \@depth\z@\hfill\bracerd
	\braceld\leaders\vrule \@height\ht\z@ \@depth\z@\hfill
	\kern\p@\vrule \@width\p@\kern\p@\vrule \@width\p@\kern\p@\vrule \@width\p@ 
	$}

\def\upbracefilld{$\m@th \setbox\z@\hbox{$\braceld$}%
	\vrule \@width\p@\kern\p@\vrule \@width\p@\kern\p@\vrule \@width\p@\kern\p@
	\leaders\vrule \@height\ht\z@ \@depth\z@\hfill\braceru$}

\makeatother

\newcommand{\DEBUG}{false}       

\usepackage[final]{pdfpages}    

\newcommand{\wallrefA}[0]{%
  \tikzexternaldisable%
  \tikz[baseline=(char.base)]{%
    \node[rectangle,inner sep=0.5pt,rounded corners=0.65mm,minimum size=1.0em,
          fill=IEEEblue,
          text=white] (char) {\adjustbox{max width=0.65em}{$1$}};
  }%
  \tikzexternalenable%
}

\newcommand{\wallrefD}[0]{%
  \tikzexternaldisable%
  \tikz[baseline=(char.base)]{%
    \node[rectangle,inner sep=0.5pt,rounded corners=0.65mm,minimum size=1.0em,
          fill=IEEElightblue,
          text=white] (char) {\adjustbox{max width=0.65em}{$4$}};
  }%
  \tikzexternalenable%
}

\usepackage[hidelinks]{hyperref}
\usepackage{cite}
\usepackage{amsmath,amssymb,amsfonts}
\usepackage{algorithmic}
\usepackage{graphicx}
\usepackage{textcomp}
\usepackage{xcolor}
\usepackage{cite}
\usepackage{amsmath,amssymb,amsfonts}
\usepackage{algorithmic}
\usepackage{graphicx}
\usepackage{textcomp}
\usepackage{xcolor}
\usepackage{orcidlink}
\usepackage{cuted}          
\usepackage{mathtools}      
\usepackage{enumitem}       
\usepackage{comment}

\usepackage{array, booktabs, xltabular}
\usepackage{colortbl}         
\usepackage{multirow}	      
\usepackage{longtable}	      
\usepackage{nicefrac}	      
\usepackage{adjustbox}        
\usepackage{upgreek}          
\usepackage[per-mode=symbol]{siunitx}    		
\DeclareSIUnit{\dBm}{dBm}	  
\usepackage{balance}          
\usepackage{graphicx}         
\usepackage{placeins}         
\usepackage{comment}

\usepackage[
    indexonlyfirst, 
    section=section,
    nonumberlist,numberedsection
]{glossaries}
\makenoidxglossaries%
\newacronym{2d}{2D}{two-dimensional}
\newacronym{3d}{3D}{three-dimensional}
\newacronym{aoa}{AoA}{angle of arrival}
\newacronym{aod}{AoD}{angle of departure}
\newacronym{awgn}{AWGN}{additive white Gaussian noise}
\newacronym{bp}{BP}{belief propagation}
\newacronym{cdf}{CDF}{cumulative distribution function}
\newacronym{csi}{CSI}{channel state information}
\newacronym{crlb}{CRLB}{Cram\'er--Rao lower bound}
\newacronym{pcrlb}{PCRLB}{posterior CRLB}
\newacronym{dm}{DM}{diffuse multipath}
\newacronym{dmc}{DMC}{diffuse multipath component}
\newacronym{efim}{EFIM}{equivalent FIM}
\newacronym{fg}{FG}{factor graph}
\newacronym[longplural={Fisher information matrices}]{fim}{FIM}{Fisher information matrix}
\newacronym{gpu}{GPU}{graphics processing unit}
\newacronym{iot}{IoT}{Internet of Things}
\newacronym{isac}{ISAC}{integrated sensing and communications}
\newacronym{los}{LoS}{line-of-sight}
\newacronym{mc}{MC}{Monte Carlo}
\newacronym{meb}{MEB}{mapping error bound}
\newacronym{mimo}{MIMO}{multiple-input multiple-output}
\newacronym{miso}{MISO}{multiple-input single-output}
\newacronym{ml}{ML}{maximum likelihood}
\newacronym{dmimo}{D-MIMO}{distributed MIMO}
\newacronym{xlmimo}{XL-MIMO}{extremely large-scale multiple-input multiple-output}
\newacronym{mmse}{MMSE}{minimum mean square error}
\newacronym{mmwave}{mmWave}{millimeter wave}
\newacronym{mpc}{MPC}{multipath component}
\newacronym{mpslam}{MP-SLAM}{multipath-based simultaneous localization and mapping}
\newacronym{mrt}{MRT}{maximum ratio transmission}
\newacronym{mse}{MSE}{mean squared error}
\newacronym{mt}{MT}{mobile terminal}
\newacronym{mva}{MVA}{master virtual anchor}
\newacronym{nlos}{NLoS}{non-LoS}
\newacronym{nc}{NC}{noncoherent}
\newacronym{ncv}{NCV}{nearly constant velocity}
\newacronym{oeb}{OEB}{orientation error bound}
\newacronym{olos}{OLoS}{obstructed LoS}
\newacronym{ota}{OTA}{over-the-air}
\newacronym{pa}{PA}{physical anchor}
\newacronym{bs}{BS}{base station}
\newacronym{peb}{PEB}{position error bound}
\newacronym{pdf}{PDF}{probability density function}
\newacronym{pm}{PM}{physical mobile}
\newacronym{pmf}{PMF}{probability mass function}
\newacronym{pf}{PF}{potential feature}
\newacronym{pg}{PG}{path gain}
\newacronym{pl}{PL}{path loss}
\newacronym{pr}{PBR}{particle-based representation}
\newacronym{ppr}{PR}{potential ray}
\newacronym{prop}{PROP}{proposed algorithm}
\newacronym{rcs}{RCS}{radar cross section}
\newacronym{ref}{REF}{reference algorithm}
\newacronym{rf}{RF}{radio frequency}
\newacronym{rfid}{RFID}{radio frequency identification}
\newacronym{rmse}{RMSE}{root mean square error}
\newacronym{roi}{ROI}{region of interest}
\newacronym{rssi}{RSSI}{received signal strength indicator}
\newacronym{rv}{RV}{random variable}
\newacronym{s-parameter}{S-parameter}{scattering parameter}
\newacronym{sfv}{SFV}{surface feature vector}
\newacronym{siso}{SISO}{single-input single-output}
\newacronym{sir}{SIR}{sequential importance resampling}
\newacronym{sis}{SIS}{sequential importance sampling}
\newacronym{slam}{SLAM}{simultaneous localization and mapping}
\newacronym{smc}{SMC}{specular multipath component}
\newacronym{snr}{SNR}{signal-to-noise ratio}
\newacronym{spa}{SPA}{sum-product algorithm}
\newacronym{tdoa}{TDOA}{time-difference-of-arrival}
\newacronym{toa}{TOA}{time-of-arrival}
\newacronym{tosm}{TOSM}{through – open – short – match}
\newacronym{ue}{UE}{user equipment}
\newacronym{ura}{URA}{uniform rectangular array}
\newacronym{uwb}{UWB}{ultrawideband}
\newacronym{va}{VA}{virtual anchor}
\newacronym{vm}{VM}{virtual mobile}
\newacronym{vna}{VNA}{vector network analyzer}
\newacronym{wb}{WB}{wideband}
\newacronym{wpt}{WPT}{wireless power transfer}
\newacronym{xets}{XETS}{cross exponentially tapered slot}

\newacronym{nzm}{NZM}{nonzero-mean}
\newacronym{zm}{ZM}{zero-mean}

\usepackage{ifthen}
\newcommand{\exportFigures}{true}       

\usepackage{etoolbox}
\usepackage{bibunits}
\defaultbibliographystyle{IEEEtran}

\newcommand{\herm}{\mathsf{H}}

\newcommand{\trp}{\mathsf{T}}

\newcommand{\binsetone}[1]{ \mathbb{B}^{#1}  }
\newcommand{\realset}[2]{ \mathbb{R}^{#1 \times #2}  }
\newcommand{\realsetone}[1]{ \mathbb{R}^{#1}  }
\newcommand{\complexset}[2]{ \mathbb{C}^{#1 \times #2}  }
\newcommand{\complexsetone}[1]{ \mathbb{C}^{#1}  }

\DeclareMathOperator{\tr}{tr}
\DeclareMathOperator{\arctantwo}{arctan2}
\newcommand{\condIndep}{\mathrel{\perp\mspace{-10mu}\perp}} 
\newcommand{\eye}[1]{\mathbf{I}_{\scriptscriptstyle#1}}       

\newcommand{\CN}{{\mathcal{CN}}}

\newcommand{\Nz}[0]{\mathrm{N}_{\scriptscriptstyle \mathrm{z}}}             
\newcommand{\Nfrequency}[0]{\mathrm{N}_{\scriptscriptstyle \mathrm{f}}}     
\newcommand{\Nantennas}[0]{\mathrm{M}}                                      
\newcommand{\Nantennasy}[0]{\mathrm{M}_{\scriptscriptstyle \mathrm{y}}}     
\newcommand{\Nantennasz}[0]{\mathrm{M}_{\scriptscriptstyle \mathrm{z}}}     
\newcommand{\RVNfeatures}[1]{\rv{S}_{\scriptscriptstyle #1}}                
\newcommand{\Nfeatures}[1]{S_{\scriptscriptstyle #1}}                       
\newcommand{\Nfeaturest}[1]{\widetilde{S}_{\scriptscriptstyle #1}}                       
\newcommand{\Ncomponents}[0]{ K }                          
\newcommand{\setAnchors}[0]{\mathcal{J}}                                            
\newcommand{\setFeatures}[1]{\mathcal{S}_{\scriptscriptstyle #1}}                   
\newcommand{\setFeaturest}[1]{\widetilde{\mathcal{S}}_{\scriptscriptstyle #1}}      
\newcommand{\setFeaturesLegacy}[1]{\underline{\mathcal{S}}_{\scriptscriptstyle #1}} 
\newcommand{\setFeaturesLegacyt}[1]{\widetilde{\underline{\mathcal{S}}}_{\scriptscriptstyle #1}} 
\newcommand{\setFeaturesNew}[1]{\overline{\mathcal{S}}_{\scriptscriptstyle #1}}     
\newcommand{\RVstate}[1]{\RV{x}_{\scriptscriptstyle #1}}        
\newcommand{\state}[1]{\V{x}_{\scriptscriptstyle #1}}           
\newcommand{\stateHat}[1]{\widehat{\V{x}}_{\scriptscriptstyle #1}}  
\newcommand{\RVpos}[1]{\RV{p}_{\!\scriptscriptstyle #1}}        
\newcommand{\pos}[1]{\V{p}_{\!\scriptscriptstyle #1}}           
\newcommand{\posHat}[1]{\hat{\V{p}}_{\!\scriptscriptstyle #1}}           
\newcommand{\RVvel}[1]{\RV{v}_{\!\scriptscriptstyle #1}}        
\newcommand{\vel}[1]{\V{v}_{\!\scriptscriptstyle #1}}        
\newcommand{\RVPFy}[2]{\RV{y}_{\!\scriptscriptstyle #1\!,#2}}    
\newcommand{\PFy}[3]{\V{y}_{\!\scriptscriptstyle #1\!,#2}}       
\newcommand{\PFyn}[1]{\V{y}_{\scriptscriptstyle #1}}                                    
\newcommand{\RVPFyn}[1]{\RV{y}_{\!\scriptscriptstyle #1}}                               
\newcommand{\RVPFynj}[2]{\RV{y}_{\!\scriptscriptstyle #1}}     
\newcommand{\PFynj}[2]{\V{y}_{\!\scriptscriptstyle #1}}        
\newcommand{\RVPFr}[3]{\rv{r}_{\!\scriptscriptstyle #1,#2}}    
\newcommand{\PFr}[3]{{r}_{\!\scriptscriptstyle #1,#2}}         
\newcommand{\PFrBar}[2]{{r}_{\!\scriptscriptstyle #1,#2}}                               
\newcommand{\existenceProb}[2]{p_{\!\scriptscriptstyle #1,#2}}                          
\newcommand{\existenceProbPRs}[2]{p_{\!\scriptscriptstyle #1,#2}^{\scriptscriptstyle\mathrm{PR}}}                          
\newcommand{\existenceProbPA}[3]{p_{\!\scriptscriptstyle #1,#2}^{\scriptscriptstyle(#3)}}        
\newcommand{\RVPFphi}[2]{\unslant[-.25]{\V{\phi}}_{\!\scriptscriptstyle #1,#2}}        
\newcommand{\PFphiHat}[3]{\widehat{\V{\phi}}_{\!\scriptscriptstyle #1,#2}}                  
\newcommand{\PFphi}[3]{\V{\phi}_{\!\scriptscriptstyle #1,#2}}  
\newcommand{\RVpsfv}[3]{\RV{p}_{\!\scriptscriptstyle #1,#2}^{\text{\tiny sfv}}}      
\newcommand{\psfv}[3]{\V{p}_{\!\scriptscriptstyle #1,#2}^{\text{\tiny sfv}}}         
\newcommand{\psfvHat}[3]{\widehat{\V{p}}_{\!\scriptscriptstyle #1,#2}^{\text{\tiny \,sfv}}}         
\newcommand{\RVPRr}[3]{\rv{r}_{\!\scriptscriptstyle #1,#2}^{\scriptscriptstyle(#3)}}    
\newcommand{\RVPRrj}[2]{\RV{r}_{\!\scriptscriptstyle #1}^{\scriptscriptstyle(#2)}}    
\newcommand{\PRr}[3]{r_{\!\scriptscriptstyle #1,#2}^{\scriptscriptstyle(#3)}}    
\newcommand{\PRrj}[2]{\V{r}_{\!\scriptscriptstyle #1}^{\scriptscriptstyle(#2)}}    
\newcommand{\RVPRrInfrastructure}[2]{\rv{r}_{\!\scriptscriptstyle #1,#2}^{\scriptscriptstyle\mathrm{PR}}}    
\newcommand{\PRrInfrastructure}[2]{r_{\!\scriptscriptstyle #1,#2}^{\scriptscriptstyle\mathrm{PR}}}    
\newcommand{\RVPRrBar}[1]{\RV{r}_{\!\scriptscriptstyle #1}}    
\newcommand{\PRrBar}[1]{\V{r}_{\!\scriptscriptstyle #1}}    

\newcommand{\amplitude}[3]{\rho_{\!\scriptscriptstyle #1,#2}^{\scriptscriptstyle (#3)}}         
\newcommand{\RVamplitude}[3]{\unslant[-.25]{\rho}_{\!\scriptscriptstyle #1,#2}^{\scriptscriptstyle (#3)}}         
\newcommand{\PFmu}[3]{\mu_{\!\scriptscriptstyle #1,#2}}                              
\newcommand{\RVPFmu}[3]{\unslant[-.25]{\mu}_{\!\scriptscriptstyle #1,#2}}           
\newcommand{\PFgamma}[3]{\gamma_{\!\scriptscriptstyle #1,#2}}                       
\newcommand{\RVPFgamma}[3]{\unslant[-.25]{\gamma}_{\!\scriptscriptstyle #1,#2}}     
\newcommand{\observation}[2]{\V{z}_{\scriptscriptstyle #1}^{\scriptscriptstyle(#2)}}         
\newcommand{\RVobservation}[2]{\RV{z}_{\scriptscriptstyle #1}^{\scriptscriptstyle(#2)}}      
\newcommand{\observationn}[1]{\V{z}_{\scriptscriptstyle #1}}                    
\newcommand{\RVobservationn}[1]{\RV{z}_{\scriptscriptstyle #1}}                 
\newcommand{\RVetan}[1]{\unslant[-.25]{\eta}_{\scriptscriptstyle #1}}           
\newcommand{\etan}[1]{{\eta}_{\scriptscriptstyle #1}}                           
\newcommand{\RVetann}[1]{\unslant[-.25]{\bm{\eta}}_{\scriptscriptstyle #1}}                           
\newcommand{\etann}[1]{\bm{\eta}_{\scriptscriptstyle #1}}                           
\newcommand{\etanHat}[1]{\widehat{\eta}_{\scriptscriptstyle #1}}                
\newcommand{\etaXi}[1]{{\eta}_{\scriptscriptstyle\xi,#1}^{\scriptscriptstyle (j)}}                       
\newcommand{\RVnoise}[2]{\RV{n}_{\scriptscriptstyle #1}^{\scriptscriptstyle(j)}}   
\newcommand{\noise}[2]{\V{n}_{\scriptscriptstyle #1}^{\scriptscriptstyle(j)}}   
\newcommand{\elevation}{\theta}                                     
\newcommand{\azimuth}{\vartheta}                                    
\newcommand{\delay}{\tau}     

\newcommand{\steerVec}[1]{\V{\psi}^{\scriptscriptstyle (#1)}}      
\newcommand{\steerVecx}[3]{\V{\psi}_{\!\scriptscriptstyle #1,#2}^{\scriptscriptstyle (#3)}} 

\newcommand{\RVelVecx}[2]{\V{\unslant[-.25]{\elevation}}_{\scriptscriptstyle #1}^{\!\scriptscriptstyle (#2)}}           
\newcommand{\RVazVecx}[2]{\V{\unslant[-.25]{\azimuth}}_{\scriptscriptstyle #1}^{\!\scriptscriptstyle (#2)}}             
\newcommand{\RVdelayVecx}[2]{\V{\unslant[-.25]{\tau}}_{\scriptscriptstyle #1}^{\!\scriptscriptstyle (#2)}}                               

\newcommand{\range}[3]{\bm{r}_{\scriptscriptstyle#1,#2}^{\scriptscriptstyle (#3)}} 
\newcommand{\ranget}[3]{\widetilde{\bm{r}}_{\scriptscriptstyle#1,#2}^{\scriptscriptstyle (#3)}} 
\newcommand{\rangep}[3]{\acute{\bm{r}}_{\scriptscriptstyle#1,#2}^{\scriptscriptstyle (#3)}} 
\newcommand{\rx}[0]{\acute{r}_{\scriptscriptstyle x}}
\newcommand{\ry}[0]{\acute{r}_{\scriptscriptstyle y}}
\newcommand{\rz}[0]{\acute{r}_{\scriptscriptstyle z}}
\newcommand{\houseDet}[1]{ \mathbf{H}_{\scriptscriptstyle #1} }                 
\newcommand{\pwk}[0]{ \bm{p}^\text{\tiny{w}}_{\scriptscriptstyle k} }           
\newcommand{\nw}[0]{ {\bm{n}^\text{\tiny{w}}_{\scriptscriptstyle k}} }          
\newcommand{\fc}{\mathrm{f}_{\text{\tiny c}}}                       
\newcommand{\lightspeed}{\mathrm{c}}                                
\newcommand{\posPA}[1]{ \mathbf{p}_{\scriptscriptstyle\mathrm{pa}}^{\scriptscriptstyle(#1)}  }           
\newcommand{\rotM}[1]{ \mathbf{M}_{\scriptscriptstyle#1}  }         
\newcommand{\origin}[0]{ \mathbf{0}  }                              

\newcommand{\posVAdet}[2]{ \mathbf{p}_{\scriptscriptstyle\mathrm{va},\scriptscriptstyle #1}^{\scriptscriptstyle (#2)} }    
\newcommand{\psfvDet}[1]{\mathbf{p}_{\scriptscriptstyle\mathrm{sfv},\scriptscriptstyle#1} }           
\newcommand{\Rknj}[3]{\bm{R}_{\scriptscriptstyle #1,#2}^{\scriptscriptstyle (#3)}}    

\newcommand{\psurvival}[0]{p_{\scriptscriptstyle\mathrm{S}}}    
\newcommand{\psurvivalPR}[0]{p_{\scriptscriptstyle\mathrm{S}}^{\scriptscriptstyle\mathrm{PR}}}    
\newcommand{\precoveryPR}[0]{p_{\scriptscriptstyle\mathrm{R}}^{\scriptscriptstyle\mathrm{PR}}}    
\newcommand{\transitionmatrix}{{\mathbf{\Phi}}}                         
\newcommand{\processNoiseCov}{{\mathbf{Q}}}                             

\newcommand{\belief}[0]{\widetilde{f}}                                          
\newcommand{\beliefp}[0]{\widetilde{p}}                                          
\newcommand{\Mbeta}[0]{\beta_{\scriptscriptstyle }}                             
\newcommand{\Miota}[1]{\iota_{\scriptscriptstyle }^{\scriptscriptstyle (#1)}}   
\newcommand{\Malpha}[1]{\alpha_{\scriptscriptstyle #1}}                         
\newcommand{\Mkappa}[2]{\kappa_{\scriptscriptstyle #1}^{\scriptscriptstyle (#2)}} 
\newcommand{\Mzeta}[2]{\zeta_{\scriptscriptstyle #1}^{\scriptscriptstyle (#2)}}  
\newcommand{\Momega}[2]{\omega_{\scriptscriptstyle #1}^{\scriptscriptstyle (#2)}}
\newcommand{\Mxi}[0]{\xi_{\scriptscriptstyle }}                                 
\newcommand{\Mnu}[1]{\nu_{\scriptscriptstyle }^{\scriptscriptstyle (#1)}}       
\newcommand{\muiota}[0]{\bm{\mu}_{\scriptscriptstyle n}^{\scriptscriptstyle \iota(j)}}      
\newcommand{\Kiota}[0]{\bm{K}_{\scriptscriptstyle n}^{\scriptscriptstyle \iota(j)}}      
\newcommand{\Ciota}[0]{\bm{C}_{\scriptscriptstyle n}^{\scriptscriptstyle \iota(j)}}         
\newcommand{\mukappa}[0]{\bm{\mu}_{\scriptscriptstyle s,n}^{\scriptscriptstyle \kappa(j)}}      
\newcommand{\Kkappa}[0]{\bm{K}_{\scriptscriptstyle s,n}^{\scriptscriptstyle \kappa(j)}}      
\newcommand{\Ckappa}[0]{\bm{C}_{\scriptscriptstyle\!\!s,n}^{\scriptscriptstyle \kappa(j)}}         
\newcommand{\muomega}[0]{\bm{\mu}_{\scriptscriptstyle s,n}^{\scriptscriptstyle \omega(j)}}      
\newcommand{\Komega}[0]{\bm{K}_{\scriptscriptstyle s,n}^{\scriptscriptstyle \omega(j)}}      
\newcommand{\Comega}[0]{\bm{C}_{\scriptscriptstyle\!\!s,n}^{\scriptscriptstyle \omega(j)}}         
\newcommand{\munu}[0]{\bm{\mu}_{\scriptscriptstyle n}^{\scriptscriptstyle \nu(j)}}          
\newcommand{\Knu}[0]{\bm{K}_{\scriptscriptstyle n}^{\scriptscriptstyle \nu(j)}}      
\newcommand{\Cnu}[0]{\bm{C}_{\scriptscriptstyle n}^{\scriptscriptstyle \nu(j)}}             
\newcommand{\musnj}[1]{\bm{\mu}_{\scriptscriptstyle #1,s,n}^{\scriptscriptstyle (j)}}       
\newcommand{\musnjp}[1]{\bm{\mu}_{\scriptscriptstyle #1,s'\!,n}^{\scriptscriptstyle (j)}}   
\newcommand{\Csnj}[1]{\bm{C}_{\scriptscriptstyle #1,s,n}^{\scriptscriptstyle (j)}}        
\newcommand{\Csnjp}[1]{\bm{C}_{\scriptscriptstyle #1,s'\!,n}^{\scriptscriptstyle (j)}}        
\newcommand{\weight}[2]{w_{\scriptscriptstyle#1}^{\scriptscriptstyle(#2)}}
\newcommand{\weightt}[2]{\widetilde{w}_{\scriptscriptstyle#1}^{\scriptscriptstyle(#2)}}
\newcommand{\particle}[2]{#1^{\scriptscriptstyle(#2)}}
\newcommand{\normConst}[2]{C_{\scriptscriptstyle#1}^{\scriptscriptstyle#2}} 
\usepackage{dsfont} 

\newcommand{\surface}[0]{sfv}	
\newcommand{\dimLocal}[0]{ \mathrm{D}_\text{\tiny{ch}} }            
\newcommand{\dimGlobal}[0]{ \mathrm{D}_\text{\tiny{g}} }            
\newcommand{\dimPhase}[0]{ \mathrm{D}_{\scriptscriptstyle \varphi} }            
\newcommand{\parvec}{\theta}                                     
\newcommand{\etaglobal}[1]{\V{\parvec}^{\text{\tiny{g}}}_{\!\scriptscriptstyle#1}}            
\newcommand{\RVetaglobal}[1]{\unslant[-.25]{\V{\parvec}}^{\text{\!\tiny{g}}}_{\!\scriptscriptstyle#1}}            
\newcommand{\RVetaglobalSmall}[1]{\unslant[-.25]{\scriptstyle\V{\parvec}}^{\text{\!\tiny{g}}}_{\!\scriptscriptstyle#1}}            
\newcommand{\RVetaglobalHat}[1]{\hat{\unslant[-.25]{\bm{\parvec}}}^{\text{\!\tiny{g}}}_{\!\scriptscriptstyle#1}}            
\newcommand{\etach}[2]{\V{\parvec}_{\!\text{\tiny{ch}}\scriptscriptstyle,#1}^{\scriptscriptstyle(#2)}}            
\newcommand{\RVetach}[2]{\unslant[-.25]{\V{\parvec}}_{\text{\tiny{ch}}\scriptscriptstyle,#1}^{\!\scriptscriptstyle(#2)}}           
\newcommand{\RVphasevec}[2]{\unslant[-.25]{\bm{\varphi}}_{\!\scriptscriptstyle#1}^{\!\scriptscriptstyle#2}}   
\newcommand{\RVmodulivec}[2]{\RV{a}_{\scriptscriptstyle#1}^{\scriptscriptstyle#2}}   

\newcommand{\RVpMVAposStacked}[0]{\overline{\RV{p}}^{\text{\tiny\surface}}_{\scriptscriptstyle n}}                   

\newcommand{\PCRLB}[0]{\CRLB_{\scriptscriptstyle n|n}}                 
\newcommand{\FIMstep}[2]{\FIM_{\scriptscriptstyle #1|#2}}     
\newcommand{\FIM}[0]{\V{I}}                              
\newcommand{\CRLB}[0]{\bm{P}}                      
\newcommand{\FIMch}[2]{\FIM_{\scriptscriptstyle \text{\tiny ch},#1}^{\scriptscriptstyle(#2)}}   
\newcommand{\FIMglobal}[1]{\FIM_{\scriptscriptstyle #1}^{\text{\tiny g}}}   
\newcommand{\FIMclassic}[1]{\FIM_{\scriptscriptstyle #1}^{\text{\tiny F}}}   

\newcommand{\jacobian}{ \boldsymbol{J} }
\newcommand{\jacobgn}[1]{ \jacobian_{\!\scriptscriptstyle n}^{\scriptscriptstyle(#1)}}             
\newcommand{\jacobMVAblock}[2]{ \jacobian_{\scriptscriptstyle k,n}^{\scriptscriptstyle \text{\tiny sb},j}}   
\newcommand{\jacobMVAsb}[1]{ \jacobian_{\scriptscriptstyle n,j}^{\scriptscriptstyle \text{\tiny sb},s}}   

\newcommand{\PEB}[0]{\sigma_{\bm{p}_n}}                                      
\newcommand{\MEB}[1]{\sigma_{\bm{p}_{#1}^{\text{\tiny sfv}}}}          

\newcommand{\sA}{\grave{s}}
\newcommand{\sB}{\acute{s}}

\newcommand{\interval}[1]{\hat{I}_{\scriptscriptstyle#1}}				

\ifthenelse{\equal{\exportFigures}{false}}
{
   \newcommand{\tikzexternaldisable}{}  
   \newcommand{\tikzexternalenable}{}   
  }
{
  \usepgfplotslibrary{external}
  \tikzexternalize[prefix=tikz/]
  \usepackage{shellesc}
}

\newcounter{MYtempeqncnt}
\usepackage{amsthm}
\newtheorem{proposition}{Proposition}

\newlength\figureheight
\newlength\figurewidth

\newsavebox{\foobox}
\newcommand{\slantbox}[2][0]{\mbox{%
        \sbox{\foobox}{#2}%
        \hskip\wd\foobox
        \pdfsave
        \pdfsetmatrix{1 0 #1 1}%
        \llap{\usebox{\foobox}}%
        \pdfrestore
}}
\newcommand\unslant[2][-.25]{\slantbox[#1]{$#2$}}

\newcolumntype{R}[1]{>{\raggedleft\arraybackslash}p{#1}}
\newcolumntype{L}[1]{>{\raggedright\arraybackslash}p{#1}}

\usepackage{algorithm2e}		
\RestyleAlgo{ruled}			
\SetKw{KwBy}{by}			%
\SetKwComment{Comment}{\% }{}	

\usepackage[hang,flushmargin]{footmisc}
\makeatletter
\newcommand{\algorithmfootnote}[2][\footnotesize]{%
  \let\old@algocf@finish\@algocf@finish
  \def\@algocf@finish{\old@algocf@finish
    \leavevmode\rlap{\begin{minipage}{\linewidth}
    #1#2
    \end{minipage}}%
    \vspace{-0.3cm} 
  }%
}

\newcolumntype{C}{@{\hskip 0.075cm}c@{\hskip 0.075cm}}

\def\BibTeX{{\rm B\kern-.05em{\sc i\kern-.025em b}\kern-.08em
    T\kern-.1667em\lower.7ex\hbox{E}\kern-.125emX}}

\usepackage{pgfplots}
\usetikzlibrary{arrows.meta}
\pgfplotsset{compat=1.18}
    
\begin{document}
\def\colorNoncohVA{IEEEred}         
\def\colorNoncoh{IEEEblue}          
\def\colorCoh{IEEEgreen}            
\tikzstyle{LineNoncohVA} = [        
            color=\colorNoncohVA, 
            line cap = round, 
            line join=round, 
            line width=\LWestimates,
            mark=*, 
            mark repeat=\markRepeat, 
            mark size=0.7\markSize,
            mark options={solid, line width=0.3pt,fill=\colorNoncohVA}]

\tikzstyle{LineNoncoh} = [color=\colorNoncoh, 
            line cap = round, 
            line join=round, 
            line width=\LWestimates,
            mark=triangle*, 
            mark repeat=\markRepeat, 
            mark size=0.7\markSize,
            mark options={solid, line width=0.3pt,fill=\colorNoncoh}]

\tikzstyle{LineCoh} = [color=\colorCoh, 
            line cap = round, 
            line join=round, 
            line width=\LWestimates,
            mark=diamond*, 
            mark repeat=\markRepeat, 
            mark size=\markSize,
            mark options={solid, line width=0.3pt,fill=\colorCoh}]

\tikzstyle{LineCohNC} = [color=gray, 
            line cap = round, 
            line join=round, 
            line width=\LWestimates,
            mark=Mercedes star,
            mark repeat=\markRepeat, 
            mark size=\markSize,
            mark options={solid, line width=0.3pt,fill=gray}]



\title{\Huge Soft-Coherent Direct Multipath SLAM\\
\ifthenelse{\equal{\IEEEversion}{true}}%
{%
}%
{%
\thanks{
\IEEEauthorrefmark{1}The AMBIENT-6G project has received funding from the Smart Networks and Services Joint Undertaking 
under the European Union's Horizon Europe research and innovation programme under Grant Agreement No. 101192113.}
}
}


\allowdisplaybreaks
\frenchspacing
	\author{\IEEEauthorblockN{Benjamin J.\,B. Deutschmann, Klaus Witrisal, Erik Leitinger}
	
    \IEEEauthorblockA{
	Institute of Comm. Networks and Satellite Comms., Graz University of Technology, Austria \\ 
    }
	\vspace*{-6mm}}

\maketitle

\begin{abstract}
Challenging indoor and urban environments with severe multipath propagation and obstructed line-of-sight degrade classical radio positioning. Multipath-based simultaneous localization and mapping (MP-SLAM) addresses this by building and exploiting propagation maps for robust localization. Emerging distributed multiple-input multiple-output (D-MIMO)/extremely large-scale MIMO (XL-MIMO) infrastructures provide large spatial apertures and high-resolution sensing, especially when phase coherence is maintained across base stations, subarrays, or distributed arrays.

We propose a scalable Bayesian direct MP-SLAM method for coherent data fusion in D-MIMO/XL-MIMO systems that jointly infers the environment while performing robust, high-accuracy localization directly from raw radio signals. While commonly used zero-mean Type-II likelihood functions inherently lead to noncoherent processing across distributed arrays and thus to aperture loss, the proposed phase-preserving nonzero-mean Type-II likelihood shares a complex mean across distributed arrays. This enables coherent fusion and preserves the distributed aperture gain, while the variance captures noncoherent signal power. The method is combined with a surface model that enables map-feature fusion across the distributed infrastructure and supports near-field propagation and visibility effects. Bayesian inference is performed using belief propagation by means of the sum-product algorithm on a factor graph with particle-based messages. Parallelizing over particles and arrays, the GPU-accelerated implementation achieves millisecond-level runtimes even in large or distributed infrastructures. Simulation results show that the proposed method achieves performance gains over existing noncoherent methods and approaches the corresponding posterior CRLB, highlighting the potential of coherent processing for high-resolution sensing and localization.
\end{abstract}

\begin{IEEEkeywords}
D-MIMO, XL-MIMO, coherence, direct positioning, mapping, SLAM, ISAC, GPU-parallelization
\end{IEEEkeywords}

\glsresetall        

\section{Introduction}\label{sec:intro}

Accurate localization and mapping using radio signals is becoming a key component for future wireless systems, supporting applications such as autonomous navigation, land robotics, indoor localization, and 6G \gls{isac}~\cite{GonFurKalValDarSheSheBayWymProcIEEE2024}. In challenging indoor and urban scenarios, classical positioning methods suffer from severe performance degradation due to \gls{olos}, i.e., blockage of \gls{los} paths, and rich multipath propagation. Rather than treating \glspl{mpc} as an impairment, \gls{mpslam} methods exploit reflected and scattered signal components to infer both the \gls{mt} state and geometric properties of the environment \cite{GentnerTWC2016, LeitMeyHlaWitTufWin:TWC2019}. 
This paradigm is particularly attractive for emerging \gls{xlmimo} infrastructures \cite{XuLiuZhaMinCai:TSP2024, XuLiuZhaCaiWu:Arxiv2025,Hua:EUSIPCO2025, WuQuiSunWeiZhaEld:TSP2026, Deutschmann26WCM} and \gls{dmimo} \cite{Fascista23RadioStripesICC, Fascista25RadioStripes, TenWymKesDeySveJCS2026, Deutschmann24SPAWC}, where many spatially distributed antenna panels can jointly form an extremely large synthetic aperture.

\subsection{State of the Art}
Recent progress in radio-based \gls{slam} methods can broadly be divided into two methodological families. 
\emph{Two-step approaches} first extract parametric channel estimates---such as delays, angles, and complex amplitudes of \glspl{mpc}---from the received signals \cite{richter2005estimation, ChuJSTSP2019,GreLeiWitFle:TWC2024}, and subsequently perform Bayesian mapping and tracking \cite{LeitMeyHlaWitTufWin:TWC2019, KimGranSveKimWym:TVT2022, Leitinger23mvaSLAM, GeKalXiaGarAngKimTalValWymSev:TSP2025}. 
Early radio \gls{slam} methods typically represented individual \glspl{mpc} as \glspl{va} or point scatterers and estimated them using random finite sets \cite{GeKalKimJiaTalVakSvenKimWym:JSAC2022, KimGranSveKimWym:TVT2022, GeKalXiaGarAngKimTalValWymSev:TSP2025} or factor graph-based inference \cite{LeitMeyHlaWitTufWin:TWC2019, Leitinger23mvaSLAM,LiCaiLeiTuf:ICC2024}. 
While the two-step paradigm simplifies computation, the intermediate parametric channel estimation step inevitably discards information contained in the complex baseband signals---particularly phase relations across anchors---and introduces measurement-origin uncertainty. 
In contrast, \emph{direct approaches} \cite{ZhaStaJosWanGenDamWymHoeTAES2020, LiaLeiMey:Asilomar2023,LiaMey:Asilomar2024, Deutschmann24SPAWC, LiaLeiMey:TSP2025, Fascista25RadioStripes} operate directly on the raw received signals and embed the physical channel model within the statistical inference engine. 
By jointly performing detection, estimation, and data association in a unified probabilistic model, these methods avoid information loss and enable principled exploitation of low \gls{snr} signal components. 
Direct \gls{mpslam} based on \gls{bp} message passing \cite{LiaLeiMey:Asilomar2023,LiaLeiMey:TSP2025} has recently demonstrated significant robustness improvements in challenging geometries. 
Moreover, this method provides a natural foundation for Bayesian fusion directly at the raw-signal level suitable for coherent processing.

The emergence of \gls{dmimo} \cite{TenWymKesDeySveJCS2026, Fascista25RadioStripes} and \gls{xlmimo} systems \cite{RusekSPM2013, HuRusEdf:TWC2018, GueGuiDarDju:TSP2021} opens the possibility of transforming spatially distributed panels into a synthetic aperture comparable to a large array, provided that phase coherence across anchors can be preserved. 
Exploiting this potential requires statistical models capable of handling several realistic propagation effects, including near-field spherical wavefronts, spatial nonstationarity, partial visibility of \glspl{mpc}, and heterogeneous noise levels across panels. 
In practice, most existing \gls{slam} methods avoid these challenges by performing noncoherent fusion of delay- or angle-based features, thereby losing the potential aperture gain.

From a statistical perspective, coherent fusion requires likelihood models that preserve the complex mean values of \gls{los} paths and \glspl{mpc} across \glspl{bs}, subarrays, or distributed arrays, hereafter collectively termed \glspl{pa}. 
Classical Type-I concentrated likelihood functions \cite{KrimViberg96ASP} retain phase information but require explicit estimation of complex path amplitudes, while zero-mean Type-II likelihood functions obtained by analytical marginalization over the complex amplitudes assuming a hierarchical prior \cite{Tipping2003, GreLeiWitFle:TWC2024,MoeWesVenLei:Fusion2025} have proven to provide robust estimation results, but render the phase information not directly accessible. 
Consequently, commonly used likelihood models either discard phase information a priori or rely on simple stacking of observations, which is incompatible with partial coherence and nonstationary channel effects. 
Designing a robust Type-II likelihood model that preserves the phase information for \gls{dmimo}/\gls{xlmimo} systems therefore remains an open problem. 
Key unresolved challenges include
(i) joint modeling of coherent and noncoherent \glspl{pa} within a unified inference framework,
(ii) incorporation of spherical wavefront effects and nonstationary array responses (partial obstructions of \glspl{mpc}), and
(iii) scalable inference algorithms capable of processing large antenna arrays in real time.

\subsection{Contributions and Paper Organization}
We propose a scalable direct \gls{mpslam} method for coherent data fusion in \gls{dmimo}/\gls{xlmimo} systems that jointly estimates a geometric environment map via surface features while performing robust, high-accuracy sensing and localization for $3$D scenarios, termed \emph{soft-coherent} direct \gls{mpslam}.
The proposed algorithm builds on the direct \gls{mpslam} method in \cite{LiaLeiMey:TSP2025}. 
Our key idea behind this soft-coherent processing is a \emph{nonzero-mean Type-II likelihood} whose variance accounts for ``noncoherent'' signal power, while a complex mean shared among synchronized distributed \glspl{pa} 
enables coherent fusion. 
This construction preserves the global phase structure and retains aperture gain whenever coherence is available across the entire array---making spherical wavefront processing possible and significantly increasing the effective aperture---yet gracefully degrades to noncoherent operation otherwise. 
At the same time, computational complexity is significantly reduced by processing the raw radio signals separately and in parallel for each \gls{pa}.
The method is combined with the recent \gls{sfv} surface model \cite{Leitinger23mvaSLAM, Li25adaptiveDMIMOslam, LiLeiTufMey:Arxiv2026}, which enables the fusion of map features across \glspl{pa} and propagation paths while accounting for partial feature visibility over \glspl{pa}. 
Bayesian inference is performed via particle-based \gls{bp} message passing by means of the \gls{spa} rules on a factor graph. A \gls{gpu}-accelerated implementation enables parallel processing across \glspl{pa} and particles, making real-time operation feasible. The main contributions are summarized as follows.
\begin{itemize}[leftmargin=5mm]
    \item We introduce a phase-preserving nonzero-mean Type-II likelihood for coherent distributed processing that retains aperture gain and supports spherical wavefront processing.
    \item We extend direct \gls{mpslam}~\cite{LiaLeiMey:TSP2025} with the \gls{sfv} model to enable map-feature fusion across distributed \glspl{pa} and propagation paths.
    This accounts for partial obstructions of \glspl{mpc} using \gls{mpc} existences at the \gls{pa}-level (partial obstructions) as well as global \gls{sfv} existences \cite{Li25adaptiveDMIMOslam}.
    \item We develop a new proposal density for potential feature births 
    that 
    improves early detection of map features.    
    \item We compare the proposed coherent algorithm to a noncoherent version, the \gls{ref} from~\cite{LiaLeiMey:TSP2025}, and to the \gls{pcrlb}.
    \item We present a fully \gls{gpu}-parallel implementation achieving an order-of-magnitude speedup over~\gls{ref}.
\end{itemize}

The remainder of the paper is organized as follows.  
Sections~\ref{sec:state-vectors} and~\ref{sec:system-model} introduce the geometric, signal, and statistical models underlying the proposed multipath-based SLAM formulation.
Section~\ref{sec:BP-method} describes the proposed \gls{bp}-based inference algorithm and its particle-based implementation.  
Section~\ref{sec:PCRLB} derives the \gls{pcrlb} used as a performance benchmark.  
Section~\ref{sec:Results} presents numerical experiments and performance evaluations.  
Finally, Section~\ref{sec:conclusion} concludes the paper.

\textit{Notation:} 
Scalars are denoted by lowercase letters $\mathrm{x}$, 
column vectors by bold lowercase letters $\mathbf{x}$, and matrices by bold uppercase letters $\mathbf{X}$. 
Regular upright font is used for deterministic constants.
\Glspl{rv} are typeset in sans serif, upright font, e.g., $\rv{x}$ and $\RV{x}$, and their realizations in serif, italic font, e.g., $x$ and $\V{x}$. 
$f(\V{x})$, shorthand for $f_{\RV{x}}(\V{x})$, denotes the \gls{pdf} of continuous \gls{rv} $\RV{x}$, while $p(\V{r})$, shorthand for $p_{\RV{r}}(\V{r})$, denotes a \gls{pmf} of discrete \gls{rv} $\RV{r}$.
$f(\V{x}|\V{y})$ is the conditional \gls{pdf} of $\RV{x}$ given $\RV{y}$, a shorthand for $f_{\RV{x}|\RV{y}}(\V{x}|\V{y})$.
Calligraphic uppercase letters denote sets, except $\mathcal{N}$, $\mathcal{CN}$, $\mathcal{G}$, and $\mathcal{U}$.
The cardinality of a set $\mathcal{X}$ is $\vert \mathcal{X} \vert$.
We use $\bm{x}^\trp$ and $\bm{x}^\herm$ to denote the transpose and Hermitian transpose of $\bm{x}$, respectively.
The Hadamard product is denoted by $\odot$ and the Kronecker product is denoted by $\otimes$. 
The Dirac delta function is $\delta(\cdot)$ and 
$1_{\mathcal{X}}(x)$ denotes the indicator function, i.e., $1_{\mathcal{X}}(x)\!=\!1$ for $x\!\in\!\mathcal{X}$ and $0$ otherwise.
The $N\!\times\!N$ identity matrix is denoted by $\eye{N}$.
With $\bm{X}$ being an $(M\!\times\!N)$ matrix, $\bm{X}^\dag$ denotes its $(N\!\times\!M)$ Moore-Penrose pseudoinverse.
$\binsetone{}\!\coloneqq\!\{0,1\}$ is the binary set.

\section{Signal Model and Geometrical Model}\label{sec:state-vectors}  

In this work, we consider a multipath channel given by the sum of the \gls{los} component and up to $\Ncomponents$ specular \glspl{mpc} originating from single-bounce paths, which often dominate multipath in indoor channel measurements~\cite{Deutschmann26WCM}.%
\footnote{Following \cite{Li25adaptiveDMIMOslam}, the model can be extended to double-bounce paths.}
Specular \glspl{mpc} are reflections of the \gls{mt}'s signal at large flat surfaces with equal incidence and reflection angles. 
Due to channel reciprocity, a reflection can either be modeled (i) as if it virtually impinges at a \gls{va} that is the mirror image of \gls{pa} $j$ at position $\posPA{j}\!\in\!\realsetone{3}$ mirrored across surface $k$, or (ii) as if it is virtually transmitted from a \gls{vm} position that is a mirror image of the \gls{mt} at position $\mathbf{p}_{\scriptscriptstyle n}\!\in\!\realsetone{3}$ at time $n$ mirrored across surface $k$. 
A \gls{va} is a popular type of point-map feature in \gls{mpslam}~\cite{LeitMeyHlaWitTufWin:TWC2019,GentnerTWC2016, KimGranSveKimWym:TVT2022, LiaLeiMey:TSP2025}. 
It is a \textit{\gls{pa}-local} feature, meaning that a single surface $k$ gives rise to a separate \gls{va} position $\posVAdet{k}{j}\!\in\!\realsetone{3}$ for each \gls{pa} $j$. 
In this work, however, we use the \gls{sfv} model~\cite{Leitinger23mvaSLAM} described by its feature positions.

\subsection{Geometric Model}\label{sec:geometry} 

An \gls{sfv} position $\psfvDet{k}\!\in\!\realsetone{3}\setminus\{\origin\}$ is computed by mirroring the origin $\origin$ of the global Cartesian coordinate system across a specular surface $k$.
It is a \textit{global} point-map feature, meaning that it is equal across the entire infrastructure of \glspl{pa}, allowing data fusion across \glspl{pa} and across multi-bounce paths~\cite{Deutschmann25Asilomar,Li25adaptiveDMIMOslam}.
For $k>0$, the vector
\begin{align}\label{eq:rangep}
    \rangep{k}{n}{j}(\mathbf{p}_{\scriptscriptstyle n},\psfvDet{k},\posPA{j},\rotM{j}) 
    \coloneqq \rotM{j}^{-1} \houseDet{k}(\mathbf{p}_{\scriptscriptstyle n} - \posVAdet{k}{j}) \in \realsetone{3}
\end{align}
points from \gls{pa} $j$ to the \gls{vm} in \textit{local} Cartesian \gls{pa} coordinates, which is the signal source position as perceived by the \gls{pa} taking into account its orientation $\rotM{j}$. 
Here, the vector $\mathbf{p}_{\scriptscriptstyle n}\!-\!\posVAdet{k}{j}$ points from the \gls{va} to the \gls{mt} in global coordinates, while 
the Householder matrix
\begin{align}\label{eq:house}
    \houseDet{k} \coloneqq \eye{3} - 2 \, \frac{{\psfvDet{k}} {\psfvDet{k}}^\trp}{\lVert \psfvDet{k} \rVert^2} \,.
\end{align}
mirrors it across surface $k$, and $\rotM{j}^{-1}$ rotates it into local Cartesian \gls{pa} coordinates.
The \gls{va} position $\posVAdet{k}{j}$ is found through the transformation~\cite{Leitinger23mvaSLAM}
\begin{align}\label{eq:posPA-SB}
\posVAdet{k}{j}  \!\coloneqq\! \posPA{j} \!-\!\left(\! 
        \frac{2  {\posPA{j}}^\trp \psfvDet{k}}{\lVert \psfvDet{k} \rVert^2} \!-\! 1\!
    \right)  \psfvDet{k} \,.
\end{align}
For $k=0$, i.e., the \gls{los} component, we define, for notational compatibility, $\houseDet{0}\!\coloneqq\!\eye{3}$ and $\posVAdet{0}{j}\!\coloneqq\!\posPA{j}$, which yields $\rangep{0}{n}{j}(\mathbf{p}_{\scriptscriptstyle n},\psfvDet{0},\posPA{j},\rotM{j}) \!=\!\rotM{j}^{-1}\big( \mathbf{p}_{\scriptscriptstyle n}\!-\!\posPA{j} \big)$.
See Supplementary Material, Sec.\,S-I for details of this geometric model.

\subsection{Signal Model}\label{sec:signal-model}
We consider a single-antenna \gls{mt}\footnote{The signal and system models, as well as the direct \gls{mpslam} method, extend readily to a MIMO setup with a multi-antenna \gls{mt} (cf.\,\cite{DeutschmannCISA2025}).} 
moving on a trajectory of unknown positions $\mathbf{p}_{\scriptscriptstyle n}$. 
At each time step $n$, an \gls{mt} transmits an uplink pilot signal with spectrum\footnote{We assume a unit-modulus $S(f)$ over the sampled baseband, i.e.,\,$|S(f)|\!=\!1$.} $S(f)$ and bandwidth $B$ at carrier frequency $\fc$, which is received by a set $\setAnchors\!\coloneqq\!\{1\hdots J\}$ of identical \glspl{pa} at known positions $\posPA{j}$ and with orientations modeled by rotation matrices\footnote{Rotation matrices are from the special orthogonal group $SO(3)\!=\!\{\bm{M}\!\in\!\realset{3}{3}|\bm{M} \bm{M}^\trp =\bm{M}^\trp \bm{M}\!=\!\eye{3}, \det(\bm{M})\!=\!1 \}$.}
$\rotM{j}\!\in\!SO(3)$,
each equipped with an $\Nantennas$-antenna \gls{ura}.
After Nyquist filtering, $\Nfrequency$ samples of the received signal waveform are recorded by
\gls{pa} $j$. In the frequency domain, the resulting $\Nfrequency\!=\! B/\Delta_\text{f}+1$
samples have a frequency spacing of $\Delta_\text{f}$. 
The stacked vector $\observation{n}{j}\!\in\!\complexsetone{\Nz}$ with $\Nz\!=\!\Nfrequency\Nantennas$, which collects all signal samples of all antenna elements, reads
\begin{align}\label{eq:observation-generative}
    \observation{n}{j} =    \sum_{k = 0}^{\Ncomponents} 
                            \varrho_{\scriptscriptstyle k,n}^{\scriptscriptstyle(j)}         %
                            \steerVec{j}(\mathbf{p}_{\scriptscriptstyle n},\psfvDet{k})
                            +
                            \noise{n}{j}\,,
\end{align}
where the 
the indices $1\!\leq\! k\!\leq\!\Ncomponents$  indicate the \glspl{mpc} originating from specular surfaces modeled by \glspl{sfv}.
Here, $\varrho_{\scriptscriptstyle k,n}^{\scriptscriptstyle(j)}\!\in\!\complexsetone{}$ denotes an amplitude\footnote{We assume that antenna gain patterns, polarization losses, and reflection losses of \glspl{mpc} are lumped together into the amplitudes $\varrho_{\scriptscriptstyle k,n}^{\scriptscriptstyle(j)}$.}, where components $k$ that are not visible at \gls{pa} $j$ at time $n$ are modeled through $\varrho_{\!\scriptscriptstyle k,n}^{\scriptscriptstyle(j)}\!=\!0$, hence $\Ncomponents$ is fixed.
The noise $\noise{n}{j}$ represents different noise sources potentially including a \gls{dmc} \cite{Fascista25RadioStripes}. 
The steering vectors $\steerVec{j}(\mathbf{p}_{\scriptscriptstyle n},\psfvDet{k})\!\in\!\complexsetone{\Nz}$ are elements of the array manifold obtained by using the \gls{mt} positions $\mathbf{p}_{\scriptscriptstyle n}$ and \gls{sfv} positions $\psfvDet{k}$ to parameterize the array response, which we introduce next. 

\paragraph*{Array Response}
The key to phase-coherent distributed processing with a Type-II likelihood model is the coherent fusion of complex amplitude means across \glspl{pa}, thereby forming an effective array aperture spanning the fused \glspl{pa}. 
This enables near-field (spherical wavefront) processing across the entire array aperture. 

Let $\acute{\V{r}}=[\rx\iist\ry\iist\rz]^\trp$ denote one arbitrary vector in \textit{local} Cartesian coordinates, i.e., the local frame of reference, of one specific \gls{pa}.
For data fusion about the amplitude moduli, we define a path-loss compensated array response
\begin{align}\label{eq:array-response}
    \V{\psi}(\acute{\V{r}})
    &\coloneqq 
    \frac{\lambda}{\sqrt{4 \pi}}
    \frac{1}{\sqrt{4 \pi} \lVert \acute{\V{r}} \rVert}
    \widetilde{\V{\psi}}(\acute{\V{r}})
\end{align}
with wavelength $\lambda=\frac{\lightspeed}{\fc}$ and
propagation velocity $\lightspeed$, and for data fusion about amplitude phases, the carrier phase-based unit-modulus array response is given by
\begin{align}\label{eq:unit-modulus-array-response}
    \widetilde{\V{\psi}}(\acute{\V{r}})
    &\coloneqq \big(\bm{b}(\delay(\acute{\V{r}}))  
        \otimes 
        \bm{a}(\elevation(\acute{\V{r}}),\azimuth(\acute{\V{r}}))\big)
        \exp \!\Big(\!\!-\!\mathrm{j}\frac{2\pi}{\lightspeed}\fc 
        \lVert \acute{\V{r}}\rVert 
        \Big)
\end{align}
which lies on the $\Nz$-torus $\mathbb{T}^{\Nz}\!\coloneqq\!(\mathbb{S}^1)^{\Nz}$ with $\mathbb{S}^1\!\coloneqq\!\{x\!\in\!\mathbb{C}\!:\!|x|\!=\!1\}$ denoting the 
unit circle. 
The array response in \eqref{eq:unit-modulus-array-response} is given by the Kronecker product of the frequency array response in delay $\delay$ 
\begin{align}\label{eq:delay-array-response}
    \bm{b}(\delay) \coloneqq S(\mathbf{f})\odot\exp\big(\!-\!\mathrm{j} 2 \pi \mathbf{f} \delay\big) \quad \in \mathbb{T}^{\Nfrequency}
\end{align}
and the spatial array response in \gls{aoa}~\cite{richter2005estimation,DeutschmannCISA2025}, $\bm{a}(\elevation,\azimuth)\!\coloneqq\! \bm{a}_y(\elevation,\azimuth)\!\otimes\!\bm{a}_z(\elevation)$. 
Due to the \gls{ura} layout, the latter is itself modeled as a Kronecker product of the ``horizontal'' array response
\begin{align}\label{eq:spatial-response-y}
    \bm{a}_y(\elevation,\azimuth) \coloneqq \exp\Big(\mathrm{j} \frac{2 \pi}{\lambda} \mathbf{p}_y \sin(\elevation) \sin(\azimuth) \Big) \quad \in \mathbb{T}^{\Nantennasy}
\end{align}
and the ``vertical'' array response
\begin{align}\label{eq:spatial-response-z}
    \bm{a}_z(\elevation) \coloneqq \exp\Big(\mathrm{j} \frac{2 \pi}{\lambda} \mathbf{p}_z \cos(\elevation) \Big) \quad \in \mathbb{T}^{\Nantennasz}\,.
\end{align}
We use $\mathbf{f}\!\in\!\realsetone{\Nfrequency}$ to denote a vector of $\Nfrequency$ baseband frequencies, a vector $\mathbf{p}_y\!\in\! \realsetone{\Nantennasy}$ of $\Nantennasy$ horizontal positions, and a vector $\mathbf{p}_z\!\in\!\realsetone{\Nantennasz}$ of $\Nantennasz$ vertical positions of a ``template'' \gls{ura}. 
We require each of the support vectors $\{\mathbf{f}, \mathbf{p}_y, \mathbf{p}_z\}$ to be symmetric around $0$.
With this model, we have $\Nz\!=\!\Nfrequency\Nantennasy\Nantennasz$. 
The array responses are parameterized by the \textit{local} spherical \gls{pa} coordinates in propagation delay $\delay(\acute{\V{r}}) = \lVert \acute{\V{r}} \rVert / \lightspeed$, elevation angle $\elevation(\acute{\V{r}}) = \arccos (\rz/ \lVert \acute{\V{r}} \rVert)$ and azimuth angle $\azimuth(\acute{\V{r}}) = \arctantwo (\ry,\rx)$. 
The Kronecker-factorization in~\eqref{eq:unit-modulus-array-response} implies a per-\gls{pa} plane wave assumption, yet spherical wavefront propagation is modeled on the infrastructure level.

We define steering vectors $\steerVec{j}(\mathbf{p}_{\scriptscriptstyle n},\psfvDet{k}) \!\coloneqq \V{\psi}(\rangep{k}{n}{j})$ as elements of the array manifold $\mathcal{M}\!\coloneqq\!\big\{\V{\psi}(\acute{\V{r}})|\acute{\V{r}}\!\in\! \mathbb{R}^3\setminus\{\mathbf{0}\} \big\}\!\subset\!\complexsetone{\Nz}$, obtained by parameterizing the array response in~\eqref{eq:array-response} with $\rangep{k}{n}{j}(\mathbf{p}_{\scriptscriptstyle n},\psfvDet{k},\posPA{j},\rotM{j})$ as defined in~\eqref{eq:rangep}. 
This vector is expressed in the \textit{local} Cartesian coordinate system of \gls{pa} $j$ and points from the \gls{pa} to a (possibly virtual) signal source associated with \gls{mpc} $k$. By absorbing the \gls{pa} position $\posPA{j}$ and orientation $\rotM{j}$, the \gls{pa}-dependent steering vectors $\steerVec{j}(\mathbf{p}_{\scriptscriptstyle n},\psfvDet{k})$ depend only on the \gls{mt} position $\mathbf{p}_{\scriptscriptstyle n}$ and the \gls{sfv} position $\psfvDet{k}$.

\section{System Model and Problem Formulation}\label{sec:system-model}  
In what follows, we introduce the proposed system model and discuss the 
coherent \gls{mpslam} problem formulation.

\subsection{State Vectors and Measurement Model} \label{sec:measurementmodel}
In the \gls{mpslam} problem, the \gls{mt} position and \gls{sfv} positions are unknown and modeled as \glspl{rv} $\RVpos{n}$ and $\RVpsfv{s}{n}{j}$, respectively.
At each time $n$, the \gls{mt} state is defined as $\RVstate{n} \!\coloneqq\! [\RVpos{n}^\trp \iist \RVvel{n}^\trp]^\trp \!\in\! \realsetone{6}$, comprising the \gls{mt} position $\RVpos{n}$ and velocity $\RVvel{n}$. 
The sequence of \gls{mt} states up to time $n$ is denoted by $\RVstate{0:n} \!\coloneqq\! [\RVstate{0}^\trp \iist \cdots \iist \RVstate{n}^\trp]^\trp$. 
Since the number of visible \glspl{mpc}, $\Ncomponents$, is generally unknown, time-varying, and 
\gls{pa}-dependent, we adopt a two-level hierarchical existence model. 
Following~\cite{MeyKroWilLauHlaBraWin:IEEEProc2018, LeitMeyHlaWitTufWin:TWC2019, Li25adaptiveDMIMOslam}, the first layer accounts for the unknown number of \glspl{sfv} at time $n$ by introducing \glspl{pf} indexed by $s \!\in\! \setFeatures{n} \!\coloneqq\! \{1 \ist \hdots \ist \RVNfeatures{n}\}$. 
The number of \glspl{pf}, $\RVNfeatures{n}$, is jointly inferred by the \gls{dmimo} infrastructure and is a global (i.e., \gls{pa}-independent) variable. The complete set is given by $\setFeaturest{n} \!\coloneqq\! \setFeatures{n} \cup \{0\}$, where $s\!=\!0$ corresponds to the \gls{los} path. 

Following~\cite{LiaLeiMey:TSP2025}, the \gls{pf} state is defined as $\RVPFy{s}{n} \!\coloneqq\! [\RVPFphi{s}{n}^{\!\trp} \iist \RVPFr{s}{n}{j}]^\trp$, where $\RVPFphi{s}{n} \!\coloneqq\! [{\RVpsfv{s}{n}{j}}^{\!\trp} \iist \RVPFgamma{s}{n}{j} \iist \RVPFmu{s}{n}{j}]^\trp \!\in\! \realsetone{4} \!\times\! \complexsetone{}$ denotes the continuous state. In contrast to~\cite{LiaLeiMey:TSP2025}, the amplitude is parameterized by a prior variance $\RVPFgamma{s}{n} \!\in\! \realsetone{}_{\scriptscriptstyle\geq 0}$ and a complex prior mean $\RVPFmu{s}{n}{j} \!\in\! \complexsetone{}$, shared across distributed \glspl{pa} $j$ to enable phase-coherent processing. 
The existence of the \gls{pf} $s$ is modeled by a binary random variable $\RVPFr{s}{n}{j} \!\in\! \{0,1\}$, where $\RVPFr{s}{n}{j}\!=\!1$ indicates presence.

In line with \cite{Li25adaptiveDMIMOslam}, the second layer accounts for the unknown and time-varying number of \gls{los}- and \gls{pf}-related propagation paths (e.g., due to partial obstructions or geometric constraints) by introducing, for each \gls{pa}, a \gls{ppr} with binary existence variable $\RVPRr{s}{n}{j} \!\in\! \{0,1\}$. 
For later use, we define the joint \gls{pf} state $\RVPFynj{n}{j} \!\coloneqq\! [\RVPFy{0}{n}^{\trp} \ist \cdots \ist \RVPFy{\Nfeatures{n}}{n}^{\trp}]^\trp$,
the joint \gls{ppr} state $\RVPRrj{n}{j} \!\coloneqq\! [\RVPRr{0}{n}{j} \iist \cdots \iist \RVPRr{\Nfeatures{n}}{n}{j}]^\trp$ and further $\RVPRrBar{n} \!\coloneqq\! [{\RVPRrj{n}{1}}^{\trp} \!\!\iist \cdots \iist {\RVPRrj{n}{J}}^{\trp}]^\trp$.
The vector of stacked observations is $\RVobservationn{n} \!\coloneqq\! [{\RVobservation{n}{1}}^\trp \ist \hdots \ist {\RVobservation{n}{J}}^\trp]^\trp$.

Based on this model, the measurement model in~\eqref{eq:observation-generative} for \gls{pa} $j$ can be reformulated as
\begin{align}\label{eq:observation}
    \observation{n}{j} = 
    \sum_{s\in \setFeaturest{n}} 
    \PFr{s}{n}{j} 
    \PRr{s}{n}{j} 
    \amplitude{s}{n}{j} 
    \steerVecx{s}{n}{j}
    +
    \noise{n}{j}
    \quad
    \in \complexsetone{\Nz}
\end{align}
where $\steerVecx{s}{n}{j}\!\coloneqq\!\steerVec{j}(\state{n},\psfv{s}{n}{j})$, and the complex amplitudes $\amplitude{s}{n}{j}$ are modeled as complex Gaussian random variables with \gls{pf}-specific variance $\RVPFgamma{s}{n}{j}$ and mean $\RVPFmu{s}{n}{j}$, i.e., $\RVamplitude{s}{n}{j}|\RVPFmu{s}{n}{j},\RVPFgamma{s}{n}{j} \!\sim\! \CN(\RVPFmu{s}{n}{j},\RVPFgamma{s}{n}{j})$. 
Conditioned on their prior parameters, the amplitudes $\RVamplitude{s}{n}{j}$ are independent across $s$ and $n$ and i.i.d. across \glspl{pa} $j$.
For simplicity, we assume temporally and spatially uncorrelated circularly-symmetric complex AWGN, i.e., $\RVnoise{n}{j}|\RVetan{n}^{\scriptscriptstyle(j)} \!\sim\! \CN(\mathbf{0},\RVetan{n}^{\scriptscriptstyle(j)} \eye{\Nz})$ with noise variance $\RVetan{n}^{\scriptscriptstyle(j)}$.

Conditioned on $\RVstate{n}$, the joint \gls{pf} state $\RVPFynj{n}{j}$, the joint \gls{ppr} state $\RVPRrj{n}{j}$, and the noise variance $\RVetan{n}^{\scriptscriptstyle(j)}\!$, the measurement $\RVobservation{n}{j}$ is complex Gaussian-distributed, leading to the Type-II likelihood
\begin{align}\label{eq:likelihood-Type-II}
    f(\observation{n}{j}|\state{n},\PFynj{n}{j}, \PRrj{n}{j}\!\!,\etan{n}^{\scriptscriptstyle(j)}) 
    &= \CN(\observation{n}{j};\bm{\mu}_{\scriptscriptstyle n}^{\scriptscriptstyle (j)},
                            \bm{C}_{\scriptscriptstyle n}^{\scriptscriptstyle (j)})
\end{align}
with nonzero mean $\bm{\mu}_{\scriptscriptstyle n}^{\scriptscriptstyle (j)} \!\coloneqq\! \sum_{s\in \setFeaturest{n}} \PFr{s}{n}{j}\PRr{s}{n}{j}  \PFmu{s}{n}{j} \steerVecx{s}{n}{j}$ and covariance matrix $\bm{C}_{\scriptscriptstyle n}^{\scriptscriptstyle (j)} \!\coloneqq\! \etan{n}^{\scriptscriptstyle(j)} \eye{\Nz} + \sum_{s\in \setFeaturest{n}}  \PFr{s}{n}{j} \PRr{s}{n}{j} \PFgamma{s}{n}{j} \steerVecx{s}{n}{j} {\steerVecx{s}{n}{j}}^\herm$. 

\subsection{State-Transition Factors}\label{sec:state-transition-model}  
All states evolve independently according to first-order Markov models, i.e., the \gls{mt} state evolves with transition \gls{pdf} $f(\state{n}|\state{n-1})$, the noise variances with $f(\etan{n}^{\scriptscriptstyle(j)}|\etan{n-1}^{\scriptscriptstyle(j)})$, the \glspl{pf} with $f(\PFy{s}{n}{1}|\PFy{s}{n-1}{J})$, and the \glspl{ppr} with $p(\PRr{s}{n}{j}|\PRr{s}{n\!-1}{j})$.

\subsubsection{Legacy \glspl{pf}} 
The state-transition \gls{pdf} of a legacy \gls{pf} $s$ from time $n\!-\!1$ to $n$, when conditioned on its nonexistence at the previous time step $n\!-\!1$, i.e., $\PFr{s}{n\!-\!1}{J}\!=\!0$, is
\begin{align}\label{eq:nonexistence-legacy-time}
    f(\PFphi{s}{n}{1},\PFr{s}{n}{1}|\PFphi{s}{n\!-\!1}{J},
    0) = 
    \begin{cases}
        0,                              & \PFr{s}{n}{1}\!=\!1 \\
        f_\mathrm{d}(\PFphi{s}{n}{1}),  & \PFr{s}{n}{1}\!=0\! 
    \end{cases}
\end{align}
where  
$f_\mathrm{d}(\PFphi{s}{n}{1})$ is a ``dummy'' \gls{pdf}~\cite{MeyKroWilLauHlaBraWin:IEEEProc2018} ensuring that 
$\sum_{\PFr{s}{n}{1} \in \{0,1\}} \int f(\PFphi{s}{n}{1},\PFr{s}{n}{1}|\PFphi{s}{n-1}{J},\PFr{s}{n-1}{J})\mathrm{d}\PFphi{s}{n}{1}\!=\!1$.
Equation~\eqref{eq:nonexistence-legacy-time} implies that a \gls{pf} that has not existed at time step $n\!-\!1$ cannot exist as legacy \gls{pf} at time $n$.

However, if it existed at time step $n\!-\!1$, it continues to exist at time $n$ with a survival probability $\psurvival$.
The state-transition \gls{pdf} from time step $n\! -\!1$ to time $n$, when conditioned on the existence of the feature $s$ at the previous time step $n\!-\!1$, i.e., $\PFr{s}{n\!-\!1}{J}\!=\!1$, is~\cite[eq.\,(8)]{Leitinger23mvaSLAM}
\begin{align}\label{eq:existence-legacy-time}
    f(\PFphi{s}{n}{1},\PFr{s}{n}{1}|\PFphi{s}{n\!-\!1}{J},
    1) = 
    \begin{cases}
        \psurvival f(\PFphi{s}{n}{1}|\PFphi{s}{n\!-\!1}{J}),                          & \PFr{s}{n}{1}\!=\!1  
        \\
        (1\!-\!\psurvival)f_\mathrm{d}(\PFphi{s}{n}{1}),  & \PFr{s}{n}{1}\!=0\! 
    \end{cases}\,.
\end{align}
\vspace*{-2mm}
\subsubsection{New \glspl{pf}}\label{sec:new-PFs} 
\newcommand{\muBq}{{\mu}_q^{\scriptscriptstyle \mathrm{B}}}
Following~\cite{LiaMey:Asilomar2024,LiaLeiMey:TSP2025}, the births of new \glspl{pf} in a considered \gls{roi} $\mathcal{P} \!\subset\! \realsetone{3}$ are modeled according to a Poisson point process.
That is, at each time step $n$, the number of newly appearing \glspl{pf} is Poisson-distributed.
Partition
$\mathcal{P} = \bigcup_{q=1}^{Q} \mathcal{P}_q$ 
into disjoint sets $\mathcal{P}_q$ with volumes $V_q \!\coloneqq\! \mathrm{Vol}(\mathcal{P}_q)$ in \SI{}{\cubic\metre}.
The number of newly appearing features $\rv{t}_q$ per partition $\mathcal{P}_q$ is likewise Poisson-distributed~\cite{Kingman1993Poisson} according to a Poisson \gls{pmf} $\mathbb{P}(\rv{t}_q\!=\!t_q)\!=\!\nicefrac{(\varsigma V_q)^{\scriptscriptstyle t_q}\exp(-\varsigma V_q)}{t_q!}$ with mean $\muBq\!\coloneqq\!\varsigma V_q$ and $\varsigma$ a spatial intensity in \SI{}{\per\cubic\metre}. We assume $\varsigma$ spatially uniform. For a nonuniform intensity see~\cite{LiaMey:Asilomar2024,LiaLeiMey:TSP2025}.
Let $Q$ be large enough to have approximately at most $1$ new \gls{pf} per partition $\mathcal{P}_q$, we can approximate the Poisson birth model with a Bernoulli birth model with \gls{pmf}
\begin{align}
    \mathbb{P}(\rv{t}_q = t_q) &= 
    \begin{cases}
        p_{\scriptscriptstyle \mathrm{B}}^{\scriptscriptstyle(q)}, & t_q=1 \\
        1 - p_{\scriptscriptstyle \mathrm{B}}^{\scriptscriptstyle(q)}, & t_q=0 
    \end{cases} 
\end{align}
by equating ratios of both \glspl{pmf}
\begin{align}
    \frac{\mathbb{P}(\rv{t}_q = 1)}{\mathbb{P}(\rv{t}_q = 0)} 
    &= 
    \frac{
        \muBq
        \exp(-\muBq)
    }{
        \exp(-\muBq)
    } 
    = 
    \muBq
    =
    \frac{
        p_{\scriptscriptstyle \mathrm{B}}^{\scriptscriptstyle(q)}
    }{
        1 - p_{\scriptscriptstyle \mathrm{B}}^{\scriptscriptstyle(q)}
    } \, ,
\end{align}
hence $\mathsf{t}_q \,\dot{\sim}\, \mathrm{Bernoulli}(p_{\scriptscriptstyle \mathrm{B}}^{\scriptscriptstyle(q)})$ with
$p_{\scriptscriptstyle \mathrm{B}}^{\scriptscriptstyle(q)}\!=\!
\nicefrac{\muBq}{(1+\muBq)}\!=\!
\nicefrac{\varsigma V_q}{(1+\varsigma V_q)}$.
The mean number of newly appearing \glspl{pf} in the \gls{roi} is $\mu_{\scriptscriptstyle \mathrm{B}}\!\coloneqq\!\varsigma \,V$, where $V\!\coloneqq\!\mathrm{Vol}(\mathcal{P})$.
We draw exactly one \gls{pf} per partition $\mathcal{P}_q$.
Let the set of new \glspl{pf} be $\setFeaturesNew{n}\!\coloneqq\!\{|\setFeaturesLegacy{n}|\!+\!1 \hdots |\setFeaturesLegacy{n}|\!+\!Q\}$ 
and define an appropriate mapping from new \glspl{pf} to partitions $q(s) \!\coloneqq\! s-\!|\setFeaturesLegacy{n}|$. 
The resulting \gls{pdf} for newly appearing \glspl{pf} 
$s\in\setFeaturesNew{n}$ is 
$f(\PFy{s}{n}{1})=f(\PFphi{s}{n}{1},\PFr{s}{n}{1})= f(\PFphi{s}{n}{1}|\PFr{s}{n}{1})p(\PFr{s}{n}{1})$ 
with 
$f(\PFphi{s}{n}{1}|1) = f_{\scriptscriptstyle \mathrm{B}}(\PFphi{s}{n}{1})$
and
$f(\PFphi{s}{n}{1}|0) = f_\mathrm{d}(\PFphi{s}{n}{1})$
and
$p(\PFr{s}{n}{1}) = \mathrm{Bernoulli}(p_{\scriptscriptstyle \mathrm{B}}^{\scriptscriptstyle(q)})$
from which follows
\begin{align}\label{eq:birth-PDF}
    f(\PFphi{s}{n}{1},\PFr{s}{n}{1}) = 
    \begin{cases}
        p_{\scriptscriptstyle \mathrm{B}}^{\scriptscriptstyle(q)}
        f_{\scriptscriptstyle \mathrm{B}}(\PFphi{s}{n}{1}), & \PFr{s}{n}{1}\!=1\! 
        \\
        (1-p_{\scriptscriptstyle \mathrm{B}}^{\scriptscriptstyle(q)}) f_\mathrm{d}(\PFphi{s}{n}{1}),  & \PFr{s}{n}{1}\!=0\!
    \end{cases}\,.
\end{align}
We assume the birth \gls{pdf} 
$f_{\scriptscriptstyle \mathrm{B}}(\PFphi{s}{n}{1})\!\coloneqq\!f(\psfv{s}{n}{1})f(\PFgamma{s}{n}{1})f(\PFmu{s}{n}{1})$ to factorize into a spatial birth \gls{pdf} $f(\psfv{s}{n}{1})\!\coloneqq\!\mathcal{U}(\psfv{s}{n}{1};\mathcal{P}_q)$, and hyperpriors 
$f(\PFgamma{s}{n}{1})$ and $f(\PFmu{s}{n}{1})$.

\subsubsection{\Glspl{ppr}}
The birth of a new \gls{pf} introduces $J$ new \glspl{ppr} according to a Bernoulli birth \gls{pmf}
\begin{align}\label{eq:PR-birth-PMF}
    p(\PRr{s}{n}{j})=
    \begin{cases}
        p_{\scriptscriptstyle\mathrm{B}}^{\scriptscriptstyle\mathrm{PR}}
        , & \PRr{s}{n}{j}=1 \\
        1-p_{\scriptscriptstyle\mathrm{B}}^{\scriptscriptstyle\mathrm{PR}}, & \PRr{s}{n}{j}=0
    \end{cases}\,
\end{align}
with \gls{ppr} birth probability $p_\mathrm{B}^{\mathrm{PR}}$.
Conditioned on $\PRr{s}{n-1}{j}\!=\!1$ the state transition \gls{pmf} of legacy \glspl{ppr} follows the Bernoulli \gls{pmf}
\begin{align}\label{eq:PRr-transition-existence}
    p(\PRr{s}{n}{j}|\PRr{s}{n-1}{j}\!=\!1)
    = 
    \begin{cases}
        \psurvivalPR,    & \PRr{s}{n}{j} = 1
        \\
        1-\psurvivalPR, & \PRr{s}{n}{j} = 0
    \end{cases}
\end{align}
meaning that a \gls{ppr} that existed at time $n\!-\!1$ continues to exist at time $n$ with \gls{ppr} survival probability $\psurvivalPR$.
Conditioned on $\PRr{s}{n-1}{j}\!=\!0$ the state transition \gls{pmf} of legacy \glspl{ppr} is
\begin{align}\label{eq:PRr-transition-nonexistence}
    p(\PRr{s}{n}{j}|\PRr{s}{n-1}{j}\!=\!0)
    = 
    \begin{cases}
        \precoveryPR,    & \PRr{s}{n}{j} = 1
        \\
        1-\precoveryPR , & \PRr{s}{n}{j} = 0
    \end{cases}
\end{align}
meaning that a \gls{ppr} that did not exist (e.g., due to obstruction or limited surface extent) at time $n\!-\!1$ can exist again at time $n$ with \gls{ppr} revival probability $\precoveryPR$.

\subsection{Estimation and Declaration}\label{sec:estimation-declaration}  

The goal of multipath-based \gls{slam} is to jointly estimate the \gls{mt} state $\RVstate{n}$, the map captured by \gls{pf} states $\RVPFy{s}{n}$ of all \glspl{pf} $s\in\setFeatures{n}$, and the \gls{ppr} state $\RVPRrj{n}{j}$ for all \glspl{pa} $j \in \{1\iist\dots\iist J\}$ based on the observed (thus fixed) measurements $\observationn{1:n} \!\coloneqq\! [\observationn{1}^\trp \ist \hdots \ist \observationn{n}^\trp]^\trp$. In Bayesian inference, the problem is solved by determining the marginal posterior \glspl{pdf} of the \gls{mt} state
$f(\state{n}|\observationn{1:n})$, 
of the noise variance $f(\etan{n}^{\scriptscriptstyle(j)}|\observationn{1:n})$ for $j \in \mathcal{J}$,
of all \gls{pf} states
$f(\PFphi{s}{n}{J}|\PFr{s}{n}{J}\!=\!1,\observationn{1:n})=
f(\PFphi{s}{n}{J},\PFr{s}{n}{J}\!=\!1|\observationn{1:n})
/p(\PFr{s}{n}{J}\!=\!1|\observationn{1:n})$ with
marginal posterior \gls{pf} existence probabilities
$\existenceProb{s}{n}\!\coloneqq\!p(\PFr{s}{n}{J}\!=\!1|\observationn{1:n})$ for $s \!\in\! \setFeaturest{n}$, and the marginal posterior \gls{pmf} of all \gls{ppr} states $p(\PRr{s}{n}{j}\!=\!1|\observationn{1:n})$ for $s \!\in\! \setFeaturest{n}$ and $j \in \mathcal{J}$.
The \gls{mt} state and noise variances are estimated by means of the \gls{mmse} estimator \cite{kay1993estimation}, i.e.,
\begin{align}
    \stateHat{n}^{\text{\tiny MMSE}} &= \E({\RVstate{n}|\RVobservationn{1:n}\!=\!\observationn{1:n}})\! = 
    \!\int\!\! \state{n} f(\state{n}|\observationn{1:n}) \mathrm{d}\state{n}\,, \ist
    \label{eq:stateHat}\\
    \etanHat{n}^{\,\text{\tiny MMSE}\ist \scriptscriptstyle(j)} &= 
    \E({\RVetan{n}^{\scriptscriptstyle(j)}|\RVobservationn{1:n}\!=\!\observationn{1:n}})\! = \!\int\!\! \etan{n}^{\scriptscriptstyle(j)} f(\etan{n}^{\scriptscriptstyle(j)}|\observationn{1:n}) \mathrm{d}\etan{n}^{\scriptscriptstyle(j)} \ist.
    \label{eq:etanHat} 
\end{align}
A \gls{pf} is declared to exist, i.e., detected, if $\existenceProb{s}{n}= p({\PFrBar{s}{n}\!=\!1|\observationn{1:n}})  \!>\! T_{\text{\tiny dec}}$, where $T_{\text{\tiny dec}}$ is a declaration threshold. 
Similarly, the visibility of a \gls{ppr} between \gls{pa} $j$ and \gls{pf} $s$ can be confirmed by $p(\PRr{s}{n}{j}\!=\!1 \mid \observationn{1:n}) \!>\! T_{\text{\tiny dec}}$.
The \gls{mmse} estimator for the states of existing \glspl{pf} is
\begin{align}
    \PFphiHat{s}{n}{J}^{\text{\tiny MMSE}}&= 
    \E({\RVPFphi{s}{n}| \RVPFr{s}{n}{J}\!=\!1,\RVobservationn{1:n}\!=\!\observationn{1:n}})\! \nonumber \\
    &= \!\int\!\! \PFphi{s}{n}{J} f(\PFphi{s}{n}{J}|\PFr{s}{n}{J}\!=\!1,\observationn{1:n}) \mathrm{d}\PFphi{s}{n}{J}\ist.\label{eq:PFphiHat}
\end{align}
Note that the use of both \gls{pa}-local \gls{ppr} existence variables $\PRr{s}{n}{j}$ and infrastructure-global \gls{pf} existence variables $\PFr{s}{n}{}$ supports partial obstruction, a propagation phenomenon common in \gls{dmimo}. 
We assume that new \glspl{pf} appear according to a Poisson point process.
To avoid unbounded growth of the set cardinality $|\setFeatures{n}|$ of \glspl{pf}, any features $s$ are removed from the \gls{pf} set $\setFeatures{n}$ for which $\existenceProb{s}{n}\!<\!T_{\text{\tiny pru}}$ 
or $\existenceProbPRs{s}{n}\!<\!T_{\text{\tiny pru}}$.
Here, $T_{\text{\tiny pru}}$ denotes a pruning threshold and $\existenceProbPRs{s}{n}\!\coloneqq\!p(\PRrInfrastructure{s}{n}\!=\!1|\observationn{1:n})\!\approx\!1\!-\!\prod_{\scriptscriptstyle j\in\setAnchors} (1\!-\!\existenceProbPA{s}{n}{j})$
denotes the marginal posterior existence probability of the infrastructure-level \gls{ppr} existence $\RVPRrInfrastructure{s}{n}\!\coloneqq\!\bigvee_{\scriptscriptstyle j=1}^{\scriptscriptstyle J} \RVPRr{s}{n}{j}$.
That is, the set of legacy \glspl{pf} $\setFeaturesLegacy{n}$ at time $n$ is computed as $\setFeaturesLegacy{n}\!=\!\setFeatures{n-1}\setminus \{s \in \setFeatures{n-1}  | \existenceProb{s}{n-1}\!<\!T_{\text{\tiny pru}} \vee \existenceProbPRs{s}{n-1}\!<\!T_{\text{\tiny pru}}\}$.
Let $\setFeaturesNew{n}$ denote the set of new \glspl{pf} appearing at time $n$,
the complete set of \glspl{pf} at time step $n$ is $\setFeatures{n} = \setFeaturesLegacy{n} \cup \setFeaturesNew{n}$.

Following the statistical model and assumptions in Sections~\ref{sec:measurementmodel} and~\ref{sec:state-transition-model}, the joint posterior \gls{pdf} of $\RVstate{0:n}$, $\RVPFyn{0:n}$, $\RVetann{0:n}$, and $\RVPRrBar{0:n}$ conditioned on the measurements $\observationn{1:n}$ can be factorized as shown in~\eqref{eq:joint-post-PDF}\addtocounter{equation}{1}. 
A single time step of the corresponding factor graph~\cite{Kschischang01factorGraphs,Loeliger04IntroFG} is depicted in Fig.\,\ref{fig:factor-graph}. 
This factorization enables the development of an efficient method for computing approximate marginal posterior \glspl{pdf}/\glspl{pmf}, referred to as beliefs, i.e., $\belief(\state{n}) \approx f(\state{n}|\observationn{1:n})$, $\belief(\etan{n}^{\scriptscriptstyle(j)}) \approx f(\etan{n}^{\scriptscriptstyle(j)}|\observationn{1:n})$, $\belief(\PFy{s}{n}{J}) \approx f(\PFy{s}{n}{J}|\observationn{1:n})$, and $\beliefp(\PRr{s}{n}{j}) \approx p(\PRr{s}{n}{j}|\observationn{1:n})$, as described in the following.

\begin{figure*}[t!]
    \centering%
    \includegraphics[width = \linewidth]{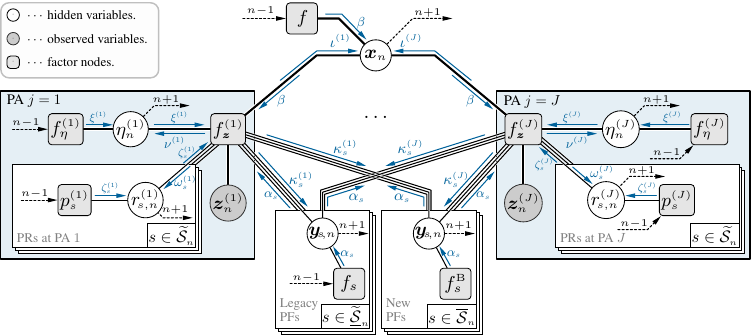}%
    \vspace{-0.1cm}\caption{Factor graph representing the joint posterior \gls{pdf} from \eqref{eq:joint-post-PDF}. 
    Light blue boxes indicate $J$ PAs. 
    The following shorthand notations are used.
    State-transition and birth \glspl{pdf}/\glspl{pmf}:     
    $f_\eta^{\scriptscriptstyle(j)}\coloneqq f(\etan{n}^{\scriptscriptstyle(j)}|\etan{n-1}^{\scriptscriptstyle(j)})$,               
    $f\coloneqq f(\state{n}|\state{n-1})$,                  
    $f_{s}\coloneqq f(\PFy{s}{n}{j}|\PFy{s}{n-1}{j})$  
    and
    $p_{s}^{\scriptscriptstyle (j)}\coloneqq p(\PRr{s}{n}{j}|\PRr{s}{n-1}{j})$
    for $s\in\setFeaturesLegacyt{n}$, 
    $f_{s}^{\mathrm{B}}\coloneqq f(\PFy{s}{n}{j})$ 
    and 
    $p_{s}^{\scriptscriptstyle (j)}\coloneqq p(\PRr{s}{n}{j})$
    for $s\in\setFeaturesNew{n}$.  
    Likelihood: 
    $f_{\bm{z}}^{\scriptscriptstyle(j)}\coloneqq
    f(\observation{n}{j}|\state{n},\PFynj{n}{j}, \PRrj{n}{j}\!\!,\etan{n}^{\scriptscriptstyle(j)})$.
    Prediction messages:   
    $\Mbeta \coloneqq  \beta(\state{n})$,
    $\Malpha{s}\coloneqq \alpha(\PFy{s}{n}{j})$,
    $\Mxi^{\scriptscriptstyle(j)} \coloneqq  \Mxi(\etan{n}^{\scriptscriptstyle(j)})$,
    $\Mzeta{s}{j} \coloneqq  \zeta(\PRr{s}{n}{j})$.
    Update messages:   
    $\Miota{j}\coloneqq \iota(\state{n};\observation{n}{j})$,
    $\Mkappa{s}{j}\coloneqq \kappa(\PFy{s}{n}{j};\observation{n}{j})$,
    $\Mnu{j}\coloneqq  \nu(\etan{n}^{\scriptscriptstyle(j)};\observation{n}{j})$,
    $\Momega{s}{j} \coloneqq  \omega(\PRr{s}{n}{j};\observation{n}{j})$.
    }%
    \label{fig:factor-graph}%
    \vspace{-5mm}%
\end{figure*}

\begin{figure*}[!t]
\normalsize
\setcounter{MYtempeqncnt}{\value{equation}}
\setcounter{equation}{22}       
\begin{align}\label{eq:joint-post-PDF}
    f(\state{0:n},\PFyn{0:n},\etann{0:n},\PRrBar{0:n}|&\observationn{1:n})
    \propto 
    \underbrace{
        f(\state{0})
        \Big(
            \prod_{j \in \setAnchors}
            f(\etan{0}^{\scriptscriptstyle(j)})
        \Big)
        \bigg(\prod_{s=0}^{\Nfeatures{0}}f(\PFy{s}{0}{J})
                \prod_{j \in \setAnchors}
                p(\PRr{s}{0}{j})
        \bigg)
    }_{\text{Initial prior PDFs/PMFs}} 
    \Bigg(
    \underbrace{
    \prod_{n'=1}^{n}
    \bigg(
        \prod_{s\in\setFeaturesLegacyt{n'}}\!\!
        f(\PFy{s}{n'}{1}|\PFy{s}{n'\!-1}{J})
        \prod_{j \in \setAnchors}
        p(\PRr{s}{n'}{j}|\PRr{s}{n'\!-1}{j})
    \bigg)
    }_{\text{Legacy PF and PR transition PDFs/PMFs}}   
    \nonumber \\[-1mm]
    &\times
     \underbrace{
        \Big(\prod_{j \in \setAnchors}f(\etan{n'}^{\scriptscriptstyle(j)}|\etan{n'\!-1}^{\scriptscriptstyle(j)})\Big)
        f(\state{n'}|\state{n'\!-1})
    }_{\text{Noise variance and \gls{mt} state transition PDFs}}
    \underbrace{
     \bigg(
        \prod_{s\in\setFeaturesNew{n'}}\!\!
        f(\PFy{s}{n'}{1}
        )
            \prod_{j \in \setAnchors}
            p(\PRr{s}{n'}{j})
    \bigg)
    }_{\text{New PF and PR PDFs/PMFs}}    
    \underbrace{
        \prod_{j \in \setAnchors}
        f(\observation{n'}{j}|\state{n'}\!,\PFynj{n'}{j}\!, \PRrj{n'}{j}\!,\etan{n'}^{\scriptscriptstyle(j)})
    }_{\text{Likelihood functions}}
    \Bigg)
\end{align}
\setcounter{equation}{\value{MYtempeqncnt}}
\hrulefill
\vspace{-5mm}%
\end{figure*}

\section{The Proposed BP Method}\label{sec:BP-method}  

In this section, we derive the \gls{bp} message passing algorithm for the proposed direct \gls{mpslam} method using the \gls{spa} rules \cite{Kschischang01factorGraphs,Loeliger04IntroFG}. \Gls{bp} message passing is an efficient approach for solving high-dimensional Bayesian inference problems. 
It performs local operations, referred to as ``messages'', along the edges of the factor graph, which represents the statistical model of the Bayesian estimation problem. 
If the factor graph is a tree, the resulting \gls{bp} solutions, called ``beliefs'', are equal to the true marginal posterior \glspl{pdf} required for computing optimum estimates. 
However, the factor graph in Fig.\,\ref{fig:factor-graph} has loops, i.e., cycles.
Iteratively applying the \gls{spa} rules on a factor graph with loops results in loopy \gls{bp}~\cite{Frey97loopyBP}, in which beliefs approximate marginal posterior \glspl{pdf}. 
On loopy factor graphs, \gls{bp} does not prescribe a unique message-passing schedule~\cite[Sec.\,V-A]{Kschischang01factorGraphs}.
Inspired by \cite{LiaLeiMey:TSP2025,Leitinger23mvaSLAM}, we specify the following message schedule \cite{Kschischang01factorGraphs}:
Messages are sent only forward in time from $n\!-\!1$ to $n$, i.e., filtering rather than smoothing, and every message is computed only once.
Within this schedule, we define the following message-passing \textit{phases}:
\begin{enumerate}[label=(\roman*)]
    \item State-transition factor-node-to-variable-node prediction messages ($\Mbeta$, $\Mxi^{\scriptscriptstyle(j)}$, $\Malpha{s}$, $\Mzeta{s}{j}$) from time $n-1$
    \item Variable-node-to-likelihood-factor-node prior messages ($\Mbeta$, $\Mxi^{\scriptscriptstyle(j)}$, $\Malpha{s}$, $\Mzeta{s}{j}$)
    \item Likelihood factor-node-to-variable-node update messages ($\Miota{j}$, $\Mnu{j}$, $\Mkappa{s}{j}$, $\Momega{s}{j}$)
\end{enumerate}
The phases are executed in \textit{series} according to the phase order specified above, i.e., each node waits to send its messages until it has received the required messages from the previous phase. 
Within each phase, we choose \textit{flooding}, i.e., each node that has received a message on one edge from the previous phase transmits pending messages on all other edges in parallel.%
\ifthenelse{\equal{\IEEEversion}{true}}%
{%
}%
{%
\footnote{The proposed parallel processing is adopted to reduce runtime by exploiting parallel computing. 
However, when message computations must be distributed across different \glspl{pa}, the proposed \gls{bp} method can be naturally adapted to a sequential processing scheme, consistent with \cite{Leitinger23mvaSLAM,Li25adaptiveDMIMOslam}.}
}

Once all messages at time step $n$ have been computed, marginal posterior \glspl{pdf}/\glspl{pmf} are approximated by beliefs. 

\subsection{Belief Calculation}\label{sec:belief-calculation}  
Beliefs about variables are computed from the product of all received messages at their respective variable nodes:
\begin{align}
    \belief(\state{n}) 
    &\propto \beta(\state{n}) 
    \prod_{\scriptscriptstyle j\in \setAnchors} 
    \iota(\state{n};\observation{n}{j}) 
    \label{eq:belief-state}
    \\
    \belief(\PFy{s}{n}{j}) 
    &\propto 
    \alpha(\PFy{s}{n}{j}) 
    \prod_{\scriptscriptstyle j\in \setAnchors} 
    \kappa(\PFy{s}{n}{j};\observation{n}{j}) 
    \label{eq:belief-PFy}
    \\
    \belief(\etan{n}^{\scriptscriptstyle(j)}) &\propto \xi(\etan{n}^{\scriptscriptstyle(j)}) 
    \nu(\etan{n}^{\scriptscriptstyle(j)};\observation{n}{j}) 
    \label{eq:belief-eta}
    \\
    \beliefp(\PRr{s}{n}{j}) 
    &\propto 
    \zeta(\PRr{s}{n}{j})
    \omega(\PRr{s}{n}{j};\observation{n}{j})
    \label{eq:belief-PR}
\end{align}
The beliefs above are nonnegative, and we require them to be normalized to integrate and/or sum to one over their support, making them \glspl{pdf}/\glspl{pmf}.
As a consequence, our prediction messages will likewise be \glspl{pdf}.
Approximating marginal posterior \glspl{pdf}, these beliefs can be used for state estimation and declaration according to Section~\ref{sec:estimation-declaration}.
For their computation, the received messages need to be derived by means of the \gls{spa} rules.
First, we derive the prediction and birth messages in Section~\ref{sec:prediction-messages}.
Then, we derive the update messages in Section~\ref{sec:update-messages}.

\subsection{Prediction and Birth Messages}\label{sec:prediction-messages}  

In sequential Bayesian filtering, the \textit{prediction} step is described by the Chapman-Kolmogorov equation.
In \gls{bp} message passing, prediction messages are described by factor-node-to-variable-node messages sent from state-transition (prior) factors to respective state variables~\cite[eq.\,(6)]{Kschischang01factorGraphs}.
Assuming first-order Markovity, i.e., $\RVstate{n}\!\condIndep\!\RVstate{n\!-\!2}|\RVstate{n\!-\!1}$, and the conditional independence $\RVstate{n}\!\condIndep\!\RVobservationn{1:n\!-\!1}|\RVstate{n\!-\!1}$ of the state from past measurements, both of which the factor graph in Fig.\,\ref{fig:factor-graph} makes explicit, the Chapman-Kolmogorov equation becomes
\begin{align}
    \!\!f(\state{n}|\observationn{1:n-1}) 
    &= \int \! f(\state{n}|\state{0:n-1},\observationn{1:n-1}) f(\state{n-1}|\observationn{1:n-1}) \mathrm{d}\state{n-1} 
    \nonumber\\[-2pt]
    &=
    \int \! f(\state{n}|\state{n-1}) f(\state{n-1}|\observationn{1:n-1}) \mathrm{d}\state{n-1}  \,.
\end{align}
Under the same assumptions for the noise variances $\RVetan{n}^{\scriptscriptstyle(j)}$ and the \gls{pf} states $\RVPFy{s}{n}$, as likewise implied by the factor graph in Fig.\,\ref{fig:factor-graph}, prediction messages depend only on the state-transition \glspl{pdf} and the ``old'' marginal posterior \glspl{pdf}, i.e., 
$f(\etan{n}^{\scriptscriptstyle(j)}|\observationn{1:n-1}) \!=\! \int \! f(\etan{n}^{\scriptscriptstyle(j)}|\etan{n-1}^{\scriptscriptstyle(j)}) f(\etan{n-1}^{\scriptscriptstyle(j)}|\observationn{1:n-1}) \mathrm{d}\etan{n-1}^{\scriptscriptstyle(j)}$,
$f(\PFy{s}{n}{j}|\observationn{1:n-1}) \!=\! 
\sum_{\scriptscriptstyle \PFr{s}{n-1}{j}
}
\int \! 
f(\PFy{s}{n}{1}|\PFy{s}{n-1}{j}) f(\PFy{s}{n-1}{j}|\observationn{1:n\!-1}) \mathrm{d}\PFphi{s}{n-1}{j}$
and
$p(\PRr{s}{n}{j}|\observationn{1:n-1}) \!\!= \!\!\sum_{\scriptscriptstyle \PRr{s}{n-1}{j}
} \! p(\PRr{s}{n}{j}|\PRr{s}{n-1}{j}) p(\PRr{s}{n-1}{j}|\observationn{1:n-1})$.
Due to the loops in the factor graph in Fig.\,\ref{fig:factor-graph}, \gls{bp} message passing merely approximates marginal posterior \glspl{pdf}, e.g., $\belief(\state{n-1}) \!\approx\! f(\state{n-1}|\observationn{1:n-1})$.
Hence, in our approximate \gls{bp} method, prediction messages approximate prediction \glspl{pdf}, e.g., $\Mbeta(\state{n}) \!\approx\! f(\state{n}|\observationn{1:n-1})$. 

\subsubsection{\gls{mt} State and Noise Variance} 
Prediction messages are
\begin{align}\label{eq:Mbeta}
    \Mbeta(\state{n}) &= \int f(\state{n}|\state{n-1}) \belief(\state{n-1}) \mathrm{d}\state{n-1}
\\
    \Mxi(\etan{n}^{\scriptscriptstyle(j)}) &= \int f(\etan{n}^{\scriptscriptstyle(j)}|\etan{n-1}^{\scriptscriptstyle(j)}) \belief(\etan{n-1}^{\scriptscriptstyle(j)}) \mathrm{d}\etan{n-1}^{\scriptscriptstyle(j)}
    \label{eq:Mxi}
\end{align}
for the \gls{mt} state and for the noise variance, respectively.

\subsubsection{Legacy \glspl{pf} $s \in \setFeaturesLegacy{n}\!\cup \{0\} \eqqcolon \setFeaturesLegacyt{n}$} 
The \gls{pf} prediction messages 
from time step $n\!-\!1$ to $n$
\begin{align}
    \alpha(\PFy{s}{n}{1}) &= 
    \!\!\sum_{\scriptscriptstyle \PFr{s}{n-1}{J}\!\in\binsetone{}}\!\int\!\! 
    f(\PFy{s}{n}{1}|\PFy{s}{n-1}{J}) \belief(\PFy{s}{n-1}{J}) 
    \mathrm{d}\PFphi{s}{n-1}{J} \,.
\end{align}
Inserting \eqref{eq:nonexistence-legacy-time} and \eqref{eq:existence-legacy-time}, for $\PFr{s}{n}{1}\!=\!1$ we obtain
\begin{align}\label{eq:alphaLegacy}
    \alpha(\PFphi{s}{n}{1},1) &= 
    \psurvival 
    \int\!\! 
    f(\PFphi{s}{n}{1}|\PFphi{s}{n-1}{J})
    \belief(\PFphi{s}{n-1}{J},1)
    \mathrm{d}\PFphi{s}{n-1}{J}
\end{align}
and for $\PFr{s}{n}{1}\!=\!0$ we obtain
\begin{align}
    \alpha(\PFphi{s}{n}{1},0) &= 
    (1\!-\!\psurvival)
    \int\!\! 
    f_\mathrm{d}(\PFphi{s}{n}{1}) 
    \belief(\PFphi{s}{n-1}{J},1)
    \mathrm{d}\PFphi{s}{n-1}{J} 
    \nonumber
    \\ 
    &~+
    \int\!\! 
    f_\mathrm{d}(\PFphi{s}{n}{1})
    \belief(\PFphi{s}{n-1}{J},0)
    \mathrm{d}\PFphi{s}{n-1}{J}
     \,.
\end{align}

\subsubsection{New \glspl{pf} $s \in \setFeaturesNew{n}$} 
The \gls{pf} birth messages at time $n$ are readily given by \eqref{eq:birth-PDF}, i.e., 
$
    \alpha(\PFphi{s}{n}{1},1) 
    = 
    p_{\scriptscriptstyle \mathrm{B}}^{\scriptscriptstyle(q)}
    f_{\scriptscriptstyle \mathrm{B}}(\PFphi{s}{n}{1})
$
and
$
    \alpha(\PFphi{s}{n}{1},0) 
    = 
    (1-p_{\scriptscriptstyle \mathrm{B}}^{\scriptscriptstyle(q)}) f_\mathrm{d}(\PFphi{s}{n}{1}) 
$
.

\subsubsection{\Glspl{ppr} of legacy \glspl{pf}} 
The prediction messages for \glspl{ppr} are
\begin{align}
    \zeta(\PRr{s}{n}{j}) &= 
    \!\!\sum\nolimits_{\scriptstyle \PRr{s}{n-1}{j}\!\in\binsetone{}}
    p(\PRr{s}{n}{j}|\PRr{s}{n-1}{j}) \beliefp(\PRr{s}{n-1}{j})
     \,.
\end{align}
Inserting \eqref{eq:PRr-transition-existence} and \eqref{eq:PRr-transition-nonexistence}, for $\PRr{s}{n}{j}\!=\!1$ we obtain
\begin{align}\label{eq:zeta}
    \zeta(\PRr{s}{n}{j}\!=\!1) 
    &= 
    \psurvivalPR 
    \beliefp(\PRr{s}{n-1}{j}\!=\!1)
    \!+\!
    \precoveryPR 
    \big(1\!-\!\beliefp(\PRr{s}{n-1}{j}\!=\!1)\big)
\end{align}
and for $\PRr{s}{n}{j}\!=\!0$ we obtain
\begin{align}
    \Mzeta{s}{j}\!(0) 
    &= 
    (1\!-\!\psurvivalPR)
    \beliefp(\PRr{s}{n-1}{j}\!=\!1)
    \!+\!
    (1\!-\!\precoveryPR)
    \big(1\!-\!\beliefp(\PRr{s}{n-1}{j}\!=\!1)\big)\,.
    \nonumber
\end{align}
\subsubsection{\Glspl{ppr} of new \glspl{pf}} 
For new \glspl{pf} $s\!\in\!\setFeaturesNew{n}$, \gls{ppr} prediction messages 
$\zeta(1)\!=\!p_{\scriptscriptstyle\mathrm{B}}^{\scriptscriptstyle\mathrm{PR}}$
and
$\zeta(0)\!=\!1\!-\!p_{\scriptscriptstyle\mathrm{B}}^{\scriptscriptstyle\mathrm{PR}}$
follow from~\eqref{eq:PR-birth-PMF}.

\subsection{Update Messages and Moment-Matched 
Approximations}\label{sec:update-messages}  
After observations $\observationn{n}$ are made, beliefs are updated through update messages.
In sequential Bayesian filtering, the \textit{update} step is described by the Bayesian update equation.\footnote{Strictly speaking, the Bayesian update equation directly yields the posterior \gls{pdf} as the normalized product of prediction \gls{pdf} and observation likelihood.}
In \gls{bp} message passing, update messages are described by factor-node-to-variable-node messages sent from likelihood factors to respective state variables. 
Messages sent from a factor node to a neighboring variable node are computed by multiplying the factor with all incoming messages from the \textit{other} neighboring variable nodes and marginalizing over those variables~\cite{Kschischang01factorGraphs}. 

Update messages sent from likelihood factor nodes 
$f(\observation{n}{j}|\state{n},\PFynj{n}{j}, \PRrj{n}{j}\!\!,\etan{n}^{\scriptscriptstyle(j)})$
to the \gls{mt} state variable node $\state{n}$ are
\begin{align}\label{eq:Miota-exact}
    \iota\big(\state{n};\observation{n}{j}\big) 
    &\!= \!\!\!\!\!\!
    \sum
    _{\scalebox{0.7}{
        $\{\PFr{s}{n}{j}\}_{\setFeaturest{n}}\!\!\in\!\!\binsetone{\Nfeaturest{n}}$
    }}
    \!\!\sum
    _{\scalebox{0.7}{
        $\{\PRr{s}{n}{j}\}_{\setFeaturest{n}}\!\!\in\!\!\binsetone{\Nfeaturest{n}}$
    }}
    \!\! \int\!\!\hdots\!\!\int \!
    f(\observation{n}{j}|\state{n},\PFynj{n}{j}, \PRrj{n}{j}\!\!,\etan{n}^{\scriptscriptstyle(j)})
    \nonumber
    \\
    \times 
    \Mxi&(\etan{n}^{\scriptscriptstyle(j)})
    \prod_{s\in\setFeaturest{n}}
    \alpha(\PFphi{s}{n}{j},\PFr{s}{n}{j}) 
    \zeta(\PRr{s}{n}{j})
    \mathrm{d}\PFphi{s}{n}{j}
    \mathrm{d}\etan{n}^{\scriptscriptstyle(j)}
    \,.
\end{align}
Update messages sent 
to the noise variance 
$\etan{n}^{\scriptscriptstyle(j)}$ are
\begin{align}
    \nu\big(\etan{n}^{\scriptscriptstyle(j)};\observation{n}{j}\big) 
    \!= \!\!\!\!\!\!\!\!
    \sum
    _{\scalebox{0.7}{
        $\{\PFr{s}{n}{j}\}_{\setFeaturest{n}}\!\!\in\!\!\binsetone{\Nfeaturest{n}}$
    }}
    \!\!\!\sum
    _{\scalebox{0.7}{
        $\{\PRr{s}{n}{j}\}_{\setFeaturest{n}}\!\!\in\!\!\binsetone{\Nfeaturest{n}}$
    }}
    \!\! \int\!\!\hdots\!\!\int \!\!\!
    f(\observation{n}{j}|\state{n},\PFynj{n}{j}, \PRrj{n}{j}\!\!,\etan{n}^{\scriptscriptstyle(j)})
    \nonumber
    \\
     \times 
    \beta(\state{n})\!
    \prod\nolimits_{s\in\setFeaturest{n}}
    \!\!
    \alpha(\PFphi{s}{n}{j},\PFr{s}{n}{j}) 
    \zeta(\PRr{s}{n}{j})
    \mathrm{d}\PFphi{s}{n}{j}
    \mathrm{d}\state{n}  \,.
\end{align}
Update messages sent 
to the \gls{pf} state variable node $\PFy{s}{n}{j}$ are
\begin{align*}
    \kappa\big(\PFy{s}{n}{j};\!\observation{n}{j}\!\big) 
    \!= \!\!\!\!\!\!\!\!\!
    \sum
    _{\scalebox{0.7}{
        $\{\PFr{s'}{n}{j}\}_{\setFeaturest{n}\setminus \{s\}}\!\!\in\!\!\binsetone{\Nfeatures{n}}$
    }}
    \!\!\!\!\sum
    _{\scalebox{0.7}{
        $\{\PRr{s}{n}{j}\}_{\setFeaturest{n}}\!\!\in\!\!\binsetone{\Nfeaturest{n}}$
    }}
    \!\!
    \!\! \int\!\!\hdots\!\!\int \!\!\!
    f(\observation{n}{j}|\state{n},\PFynj{n}{j}, \PRrj{n}{j}\!\!,\etan{n}^{\scriptscriptstyle(j)})
    \nn
    \\
    \times \Mbeta(\state{n})
    \Mxi(\etan{n}^{\scriptscriptstyle(j)})
    \!\!\!
    \prod_{\sA\in\setFeaturest{n}\setminus \{s\}}
    \!\!\!\!
    \alpha(\PFphi{\sA\!}{n}{j},\PFr{\sA\!}{n}{j}) 
    \mathrm{d}\PFphi{\sA\!}{n}{j}
    \!\!\prod_{\sB\in\setFeaturest{n}}\!\!
    \zeta(\PRr{\sB}{n}{j})
    \mathrm{d}\state{n}
    \mathrm{d}\etan{n}^{\scriptscriptstyle(j)}  .
\end{align*}
Update messages sent to the \gls{ppr} state variable node $\PRr{s}{n}{j}$ are
\begin{align}\label{eq:Momega-exact}
    &\!\!\!\omega\big(\PRr{s}{n}{j};\observation{n}{j}\big) 
    \!= \!\!\!\!\!\!\!\!\!\!\!
    \sum
    _{\scalebox{0.7}{
        $\{\PFr{s}{n}{j}\}_{\setFeaturest{n}}\!\!\in\!\!\binsetone{\Nfeaturest{n}}$
    }}
    \!\!\!\!\sum
    _{\scalebox{0.7}{
        $\{\PRr{s'}{n}{j}\}_{\setFeaturest{n}\setminus \{s\}}\!\!\in\!\!\binsetone{\Nfeatures{n}}$
    }}
    \!\!
    \!\! \int\!\!\!\hdots\!\!\!\int \!
    \!f(\observation{n}{j}|\state{n},\PFynj{n}{j}, \PRrj{n}{j}\!\!,\etan{n}^{\scriptscriptstyle(j)})
    \nn
    \\
    &\times\!\!\Mbeta(\state{n})
    \Mxi(\etan{n}^{\scriptscriptstyle(j)}\!)
    \!\!\!
    \prod_{\sA\in\setFeaturest{n}}
    \!\!
    \alpha(\PFphi{\sA\!}{n}{j},\!\PFr{\sA\!}{n}{j}) 
    \mathrm{d}\PFphi{\sA\!}{n}{j}
    \!\!
    \!\!\!\!
    \prod_{\sB\in\setFeaturest{n}\setminus \{s\}}\!\!\!\!\!\!
    \zeta(\PRr{\sB}{n}{j})
    \mathrm{d}\state{n}
    \mathrm{d}\etan{n}^{\scriptscriptstyle(j)}\!.
\end{align}
Since the likelihood factor \eqref{eq:likelihood-Type-II} is complex Gaussian, \eqref{eq:Miota-exact}--\eqref{eq:Momega-exact} involve marginalizations over a complex Gaussian mixture for all $2^{\Nfeatures{n}+1}$ or $2^{\Nfeatures{n}}$ 
variations
of \gls{pf} existences $\PFr{s}{n}{j}$ and \gls{ppr} existences $\PRr{s}{n}{j}$.
Hence, this exact message computation according to the \gls{spa} rules does \textit{not} lead to a scalable solution in the number of \glspl{pf} $\Nfeatures{n}$.

To arrive at a scalable solution, the authors of~\cite{Mingchao23TBT,LiaMey:Asilomar2024,LiaLeiMey:TSP2025}
were inspired by the \textit{moment-matching} approach of~\cite{Davies22MomentMatching}.
We follow their approach and approximate the complex Gaussian mixtures in \eqref{eq:Miota-exact}--\eqref{eq:Momega-exact} with unimodal, nonzero-mean complex Gaussian messages with mean and covariance matrix moment-matched to the complex Gaussian mixtures.
That is, we formulate the approximate messages 
$\widetilde{\iota}(\state{n};\observation{n}{j}) \!\coloneqq\! \mathcal{CN}\big(\observation{n}{j};\muiota,\Ciota\big)$,~
$\widetilde{\nu}(\etan{n}^{\scriptscriptstyle(j)};\observation{n}{j})\!\coloneqq\! \mathcal{CN}\big(\observation{n}{j};\munu,\Cnu\big)$, 
$\widetilde{\kappa}(\PFy{s}{n}{j};\observation{n}{j})\!\coloneqq\! \mathcal{CN}\big(\observation{n}{j};\mukappa,\Ckappa\big)$, and $\widetilde{\omega}(\PRr{s}{n}{j};\observation{n}{j})\!\coloneqq\! \mathcal{CN}\big(\observation{n}{j};\muomega,\Comega\big)$.

\begin{proposition}
The first raw moments of each message are
\begin{align}
    \muiota(\state{n}) 
    &\coloneqq 
    \E_{\iota} \!
    \left(
        \RVobservation{n}{j}|\RVstate{n}
    \right)
    =\sum\nolimits_{s \in \setFeaturest{n}} \musnj{1}(\state{n})
    \label{eq:muiota}
    \\
    \munu
    &\coloneqq 
    \E_{\nu} \!
    \left(
        \RVobservation{n}{j}|\RVetan{n}^{\scriptscriptstyle(j)}
    \right)
    =\sum\nolimits_{s \in \setFeaturest{n}} \musnj{3}
    \\
    \mukappa(\PFy{s}{n}{j})
    &\coloneqq 
    \E_{\kappa} \!
    \left(
        \RVobservation{n}{j}|\RVPFy{s}{n}
    \right)
    = \PFr{s}{n}{j} \musnj{2}(\PFphi{s}{n}{j}) \nonumber \\
    & \hspace{28mm} + \sum\nolimits_{s' \in \setFeaturest{n} \setminus \{s\}} \musnjp{3} 
    \\
    \muomega(\PRr{s}{n}{j})
    &\coloneqq 
    \E_{\omega} \!
    \left(
        \RVobservation{n}{j}|\RVPRr{s}{n}{j}
    \right)
    = \PRr{s}{n}{j} \musnj{4} \nonumber \\
    & \hspace{28mm} + \sum\nolimits_{s' \in \setFeaturest{n} \setminus \{s\}} \musnjp{3} 
\end{align}
and the second central moments of each message are
\begin{align}
    &\Ciota(\state{n}) 
    \!\coloneqq\!
    \E_{\iota} 
    \!
    \Big(
        \RVobservation{n}{j}{\RVobservation{n}{j}}^\herm|\RVstate{n}
    \Big)
    \!-\!
    \E_{\iota} \!
    \left(
        \RVobservation{n}{j}|\RVstate{n}
    \right) 
    \E_{\iota} \!
    \left(
        \RVobservation{n}{j}|\RVstate{n}
    \right) ^\herm
    \label{eq:Ciota}
    \\
    &=
    \eye{\Nz} \etaXi{n} \!+\!
    \sum\nolimits_{s \in \setFeaturest{n}}\!\! \Csnj{1}(\state{n}) 
    \!+\!
    \Kiota\!(\state{n}) \!-\! \muiota{\muiota}^\herm\!(\state{n}) 
    \nn \\ 
    &\Cnu(\etan{n}^{\scriptscriptstyle(j)}) 
    \!\coloneqq \!
    \E_{\nu} 
    \!
    \Big(
        \!\RVobservation{n}{j}{\RVobservation{n}{j}}^\herm|\RVetan{n}^{\scriptscriptstyle(j)}\!
    \Big)
    \!-\!
    \E_{\nu} \!
    \left(
        \RVobservation{n}{j}|\RVetan{n}^{\scriptscriptstyle(j)}\!
    \right) 
    \!\E_{\nu} \!
    \left(
        \RVobservation{n}{j}|\RVetan{n}^{\scriptscriptstyle(j)}
    \right) ^\herm
    \nn \\
    &=
    \eye{\Nz} \etan{n}^{\scriptscriptstyle(j)} +
    \sum\nolimits_{s \in \setFeaturest{n}} \Csnj{3}
    \!+\!
    \Knu(\etan{n}^{\scriptscriptstyle(j)})  \!-\! \munu{\munu}^\herm
    \\ 
    &\Ckappa(\PFy{s}{n}{j})
    \!\coloneqq\!
    \E_{\kappa} 
    \!
    \Big(
        \! \RVobservation{n}{j}{\RVobservation{n}{j}}^\herm|\RVPFy{s}{n} \!
    \Big)
    \!-\!
    \E_{\kappa} \!
    \left(
        \RVobservation{n}{j}|\RVPFy{s}{n}
    \right) 
    \!
    \E_{\kappa} \!
    \left(
        \RVobservation{n}{j}|\RVPFy{s}{n}
    \right)^{\!\herm}
    \nn \\
    &=
    \PFr{s}{n}{j} \Csnj{2}(\PFphi{s}{n}{j}) 
    + 
    \eye{\Nz} \etaXi{n}
    +
    \sum\nolimits_{s' \in \setFeaturest{n} \setminus \{s\}} \Csnjp{3}
    \label{eq:Ckappa} \\
    &~~+\!
    \Kkappa(\PFy{s}{n}{}) \!-\! \mukappa{\mukappa}^{\!\herm}(\PFy{s}{n}{}) 
    \nn
    \\ 
    &\Comega(\PRr{s}{n}{j})
    \!\coloneqq\!
    \E_{\omega} 
    \!
    \Big(
        \! \RVobservation{n}{j}{\RVobservation{n}{j}}^\herm|\RVPRr{s}{n}{j} \!
    \Big)
    \!-\!
    \E_{\omega} \!
    \left(
        \RVobservation{n}{j}|\RVPRr{s}{n}{j}
    \right) 
    \!
    \E_{\omega} \!
    \left(
        \RVobservation{n}{j}|\RVPRr{s}{n}{j}
    \right)^{\!\herm}
    \nn \\
    &=
    \PRr{s}{n}{j} \Csnj{4}
    + 
    \eye{\Nz} \etaXi{n}
    +
    \sum\nolimits_{s' \in \setFeaturest{n} \setminus \{s\}} \Csnjp{3}
    \label{eq:Comega}\\
    &~~+\!
    \Komega(\PRr{s}{n}{j}) \!-\! \muomega{\muomega}^{\!\herm}(\PRr{s}{n}{j}) 
    \nn
\end{align}
with the noise variance prior mean $\etaXi{n}\!\coloneqq\!\int\etan{n}^{\scriptscriptstyle(j)}\Mxi(\etan{n}^{\scriptscriptstyle(j)})\mathrm{d}\etan{n}^{\scriptscriptstyle(j)}$ and abbreviated mean and second noncentral moment terms~{\normalfont\cite{Mingchao23TBT,LiaLeiMey:TSP2025}}
\begin{align}
    \musnj{1}(\state{n}) 
    &\coloneqq \int \PFmu{s}{n}{j} 
    \steerVecx{s}{n}{j}
    \alpha(\PFphi{s}{n}{j},1) \mathrm{d}\PFphi{s}{n}{j} 
    \Mzeta{s}{j}\!(1)
    \label{eq:musnj1}
    \\
    \musnj{2}(\PFphi{s}{n}{j}) &\coloneqq \PFmu{s}{n}{j} \!\!
    \int \! \steerVec{j}(\state{n},\PFphi{s}{n}{j}) \beta(\state{n}) \mathrm{d}\state{n}
    \Mzeta{s}{j}\!(1)
    \label{eq:musnj2}
    \\
    \musnj{3} \coloneqq& \iint
    \PFmu{s}{n}{j} 
    \steerVecx{s}{n}{j}
    \beta(\state{n})
    \alpha(\PFphi{s}{n}{j},1) 
    \mathrm{d}\PFphi{s}{n}{j}\mathrm{d}\state{n}
    \Mzeta{s}{j}\!(1)
    \nn
    \\
    \musnj{4} \coloneqq& 
    \iint
    \PFmu{s}{n}{j} 
    \steerVecx{s}{n}{j}
    \beta(\state{n})
    \alpha(\PFphi{s}{n}{j},1) 
    \mathrm{d}\PFphi{s}{n}{j}\mathrm{d}\state{n}
    \label{eq:musnj4}
    \\
    \Csnj{1}(\state{n}) &\coloneqq
    \int \PFgamma{s}{n}{j} \steerVecx{s}{n}{j} {\steerVecx{s}{n}{j}}^\herm \alpha(\PFphi{s}{n}{j},1) \mathrm{d}\PFphi{s}{n}{j}
    \Mzeta{s}{j}\!(1)
    \label{eq:Csnj1}
    \\
    \Csnj{2}(\PFphi{s}{n}{j}) &\coloneqq
    \PFgamma{s}{n}{j} \int  \steerVecx{s}{n}{j} {\steerVecx{s}{n}{j}}^\herm \beta(\state{n}) \mathrm{d}\state{n}
    \Mzeta{s}{j}\!(1)
    \label{eq:Csnj2}
    \\
    \Csnj{3} 
    \coloneqq
    \iint&
    \! \PFgamma{s}{n}{j} 
    \steerVecx{s}{n}{j} {\steerVecx{s}{n}{j}}^\herm
    \beta(\state{n})
    \alpha(\PFphi{s}{n}{j},1) 
    \mathrm{d}\PFphi{s}{n}{j}\mathrm{d}\state{n} 
    \Mzeta{s}{j}\!(1)
    \nn
    \\
    \Csnj{4} 
    \coloneqq
    \iint&
    \! \PFgamma{s}{n}{j} 
    \steerVecx{s}{n}{j} {\steerVecx{s}{n}{j}}^\herm
    \beta(\state{n})
    \alpha(\PFphi{s}{n}{j},1) 
    \mathrm{d}\PFphi{s}{n}{j}\mathrm{d}\state{n} 
    \label{eq:Csnj4}
    \\
    \Kiota\big(\state{n}\big) 
    \!\coloneqq \!\!\!\!\!\!
    &\sum
    _{\scalebox{0.7}{
        $\{\PFr{s}{n}{j}\}_{\setFeaturest{n}}\!\!\in\!\!\binsetone{\Nfeaturest{n}}$
    }}
    \!\!\sum
    _{\scalebox{0.7}{
        $\{\PRr{s}{n}{j}\}_{\setFeaturest{n}}\!\!\in\!\!\binsetone{\Nfeaturest{n}}$
    }}
    \!\! \int\!\!\hdots\!\!\int \!
    \bm{\mu}_{\scriptscriptstyle n}^{\scriptscriptstyle (j)}
    {\bm{\mu}_{\scriptscriptstyle n}^{\scriptscriptstyle (j)}}^\herm
    \nn 
    \\
    \times 
    \Mxi&(\etan{n}^{\scriptscriptstyle(j)})
    \prod_{s\in\setFeaturest{n}}
    \alpha(\PFphi{s}{n}{j},\PFr{s}{n}{j}) 
    \zeta(\PRr{s}{n}{j})
    \mathrm{d}\PFphi{s}{n}{j}
    \mathrm{d}\etan{n}^{\scriptscriptstyle(j)}
    \label{eq:Kiota}
    \\
    \Knu\big(\etan{n}^{\scriptscriptstyle(j)}\big) 
    \!&\coloneqq \!\!\!\!\!\!\!\!
    \sum
    _{\scalebox{0.7}{
        $\{\PFr{s}{n}{j}\}_{\setFeaturest{n}}\!\!\in\!\!\binsetone{\Nfeaturest{n}}$
    }}
    \!\!\!\sum
    _{\scalebox{0.7}{
        $\{\PRr{s}{n}{j}\}_{\setFeaturest{n}}\!\!\in\!\!\binsetone{\Nfeaturest{n}}$
    }}
    \!\! \int\!\!\hdots\!\!\int \!\!\!
    \bm{\mu}_{\scriptscriptstyle n}^{\scriptscriptstyle (j)}
    {\bm{\mu}_{\scriptscriptstyle n}^{\scriptscriptstyle (j)}}^\herm
    \nonumber
    \\
     \times 
    \beta&(\state{n})\!
    \prod\nolimits_{s\in\setFeaturest{n}}
    \!\!
    \alpha(\PFphi{s}{n}{j},\PFr{s}{n}{j}) 
    \zeta(\PRr{s}{n}{j})
    \mathrm{d}\PFphi{s}{n}{j}
    \mathrm{d}\state{n}  
\\
    \Kkappa\big(\PFy{s}{n}{j}\big) 
    \!\coloneqq& \!\!\!\!\!\!\!\!
    \sum
    _{\scalebox{0.7}{
        $\{\PFr{s'}{n}{j}\}_{\setFeaturest{n}\setminus \{s\}}\!\!\in\!\!\binsetone{\Nfeatures{n}}$
    }}
    \!\!\!\!\sum
    _{\scalebox{0.7}{
        $\{\PRr{s}{n}{j}\}_{\setFeaturest{n}}\!\!\in\!\!\binsetone{\Nfeaturest{n}}$
    }}
    \!\!
    \!\! \int\!\!\hdots\!\!\int \!
    \bm{\mu}_{\scriptscriptstyle n}^{\scriptscriptstyle (j)}
    {\bm{\mu}_{\scriptscriptstyle n}^{\scriptscriptstyle (j)}}^\herm
    \\
    \times \Mbeta(\state{n})
    \Mxi(\etan{n}^{\scriptscriptstyle(j)}&)
    \!\!
    \prod_{\sA\in\setFeaturest{n}\setminus \{s\}}
    \!\!\!
    \alpha(\PFphi{\sA\!}{n}{j},\PFr{\sA\!}{n}{j}) 
    \mathrm{d}\PFphi{\sA\!}{n}{j}
    \prod_{\sB\in\setFeaturest{n}}
    \zeta(\PRr{\sB}{n}{j})
    \mathrm{d}\state{n}
    \mathrm{d}\etan{n}^{\scriptscriptstyle(j)} 
    \nn
\\
    \!\!\!\Komega\big(\PRr{s}{n}{j}\big) 
    \!\coloneqq& \!\!\!\!\!\!\!\!
    \sum
    _{\scalebox{0.7}{
        $\{\PFr{s}{n}{j}\}_{\setFeaturest{n}}\!\!\in\!\!\binsetone{\Nfeaturest{n}}$
    }}
    \!\!\!\!\sum
    _{\scalebox{0.7}{
        $\{\PRr{s'}{n}{j}\}_{\setFeaturest{n}\setminus \{s\}}\!\!\in\!\!\binsetone{\Nfeatures{n}}$
    }}
    \!\!
    \!\! \int\!\!\hdots\!\!\int \!
    \!\bm{\mu}_{\scriptscriptstyle n}^{\scriptscriptstyle (j)}
    {\bm{\mu}_{\scriptscriptstyle n}^{\scriptscriptstyle (j)}}^\herm
    \label{eq:Komega}
    \\
    \times\Mbeta(\state{n})
    \Mxi(\etan{n}^{\scriptscriptstyle(j)}&)
    \!\!\!
    \prod_{\sA\in\setFeaturest{n}}
    \!\!
    \alpha(\PFphi{\sA\!}{n}{j},\PFr{\sA\!}{n}{j}) 
    \mathrm{d}\PFphi{\sA\!}{n}{j}
    \!\!
    \!\!\!
    \prod_{\sB\in\setFeaturest{n}\setminus \{s\}}\!\!\!\!
    \zeta(\PRr{\sB}{n}{j})
    \mathrm{d}\state{n}
    \mathrm{d}\etan{n}^{\scriptscriptstyle(j)}
    \nn
\end{align}
again using the shorthand notation $\steerVecx{s}{n}{j} = \steerVec{j}(\state{n},\PFphi{s}{n}{j})$.
\end{proposition}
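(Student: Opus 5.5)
Each exact message \eqref{eq:Miota-exact}--\eqref{eq:Momega-exact} is an unnormalized Gaussian mixture in $\observation{n}{j}$. I would treat it as the conditional density of $\RVobservation{n}{j}$ given the variable receiving the message. All other variables are integrated out against their incoming prior messages. Since the beliefs and prediction messages are normalized, each message integrates to one over $\observation{n}{j}$, so its first and second moments are well defined. The computation then rests on two tools: the law of total expectation for the mean, and the law of total covariance for the covariance.

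For the mean, conditional on all mixture variables the likelihood \eqref{eq:likelihood-Type-II} has mean $\bm{\mu}_{\scriptscriptstyle n}^{\scriptscriptstyle (j)}=\sum_{s}\PFr{s}{n}{j}\PRr{s}{n}{j}\PFmu{s}{n}{j}\steerVecx{s}{n}{j}$. The outer expectation is taken over the product of independent messages $\Mxi$, $\alpha$, $\zeta$ (and $\beta$ where applicable). By linearity it splits into a sum over $s$. Each summand factorizes into three pieces:
\begin{itemize}
\item the existence sum, where only $\PFr{s}{n}{j}=1$ contributes, through $\alpha(\PFphi{s}{n}{j},1)$;
\item the $\PRr{s}{n}{j}$ sum, which contributes $\Mzeta{s}{j}(1)$;
\item the integral over $\PFphi{s}{n}{j}$ and, unless conditioned, $\state{n}$.
\end{itemize}
All remaining factors for $s'\neq s$ integrate to one. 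This gives $\musnj{1}$ for $\iota$ (state held fixed) and $\musnj{3}$ for $\nu$. For $\kappa$, the term of feature $s$ keeps $\PFr{s}{n}{j}$ and $\PFphi{s}{n}{j}$ fixed and averages over $\beta$, giving $\PFr{s}{n}{j}\musnj{2}$; the other features yield $\musnjp{3}$. For $\omega$, $\PRr{s}{n}{j}$ is fixed while $\PFr{s}{n}{j}$ is still summed, giving $\PRr{s}{n}{j}\musnj{4}$ plus the same remainder.

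For the covariance, I would write $\E(\RVobservation{n}{j}{\RVobservation{n}{j}}^\herm)=\E(\bm{C}_{\scriptscriptstyle n}^{\scriptscriptstyle (j)})+\E(\bm{\mu}_{\scriptscriptstyle n}^{\scriptscriptstyle (j)}{\bm{\mu}_{\scriptscriptstyle n}^{\scriptscriptstyle (j)}}^\herm)$ and then subtract the outer product of the means just computed.
\begin{itemize}
\item $\E(\bm{C}_{\scriptscriptstyle n}^{\scriptscriptstyle (j)})$ is linear in $\etan{n}^{\scriptscriptstyle(j)}$ and in the $\PFgamma{s}{n}{j}$ terms. The noise part gives $\etaXi{n}\eye{\Nz}$, or $\etan{n}^{\scriptscriptstyle(j)}\eye{\Nz}$ when conditioning on it. The $\PFgamma{s}{n}{j}$ terms factorize exactly as in the mean computation, producing $\Csnj{1}$--$\Csnj{4}$ in the corresponding patterns.
\item The second-moment term $\E(\bm{\mu}_{\scriptscriptstyle n}^{\scriptscriptstyle (j)}{\bm{\mu}_{\scriptscriptstyle n}^{\scriptscriptstyle (j)}}^\herm)$ is left unexpanded. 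It is precisely $\Kiota$, $\Knu$, $\Kkappa$, $\Komega$ as defined in the statement.
\end{itemize}

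The main obstacle is bookkeeping rather than analysis. In each of the four messages one must track which variables are conditioned on and which are marginalized. This matters most for $\omega$, where $\PFr{s}{n}{j}$ is still summed while $\PRr{s}{n}{j}$ is fixed, which is why $\musnj{4}$ and $\Csnj{4}$ lack the $\Mzeta{s}{j}(1)$ factor. One must also confirm that the nonconditioned factors integrate to one, which relies on the normalization of $f_\mathrm{d}$, $\alpha$ and $\zeta$.

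I would check the result on the $\kappa$ case with feature $s$ nonexistent ($\PFr{s}{n}{j}=0$): the mean should then reduce to $\sum_{s'\neq s}\musnjp{3}$, and it does. I would not expand the cross terms inside the $\bm{K}$ matrices, since the statement keeps them in integral form.
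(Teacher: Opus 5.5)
Your proposal is correct and takes essentially the same route as the paper's deferred derivation, i.e., the moment-matching approach it adopts from prior direct-SLAM work. Each exact message is treated as a normalized complex Gaussian mixture in the observation; the mean follows from the law of total expectation, and the covariance from the law of total covariance, with the $K$ terms left as the unexpanded second moments of the mixture-conditional mean. Your bookkeeping of which existence variables are fixed and which are summed, including the missing $\zeta(1)$ factor in the $\omega$ terms, matches the statement exactly.
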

\begin{proof}
See our Extended Derivations in~\cite{ThisPaperDerivations}.
\end{proof} %

At this point, we have defined all prediction messages in Section~\ref{sec:prediction-messages} and
all update messages in Section~\ref{sec:update-messages} together with their moment-matched approximations.
Using the moment-matched approximate messages
$\widetilde{\iota}(\state{n};\observation{n}{j})$,~$\widetilde{\nu}(\etan{n}^{\scriptscriptstyle(j)};\observation{n}{j})$,  
$\widetilde{\kappa}(\PFy{s}{n}{j};\observation{n}{j})$, 
and $\widetilde{\omega}(\PRr{s}{n}{j};\observation{n}{j})$, we can approximate the beliefs in \eqref{eq:belief-state}--\eqref{eq:belief-PR}, which, inserted in \eqref{eq:stateHat}--\eqref{eq:PFphiHat},
allow us to compute the state estimates desired to solve the \gls{slam} problem.
These continuous derivations serve as the blueprint for the particle-based implementation described next.

\subsection{Particle-Based Implementation}\label{sec:particle-based-implementation}  
In general, the beliefs $\belief(\state{n})$, $\belief(\PFy{s}{n}{j})$, and $\belief(\etan{n}^{\scriptscriptstyle(j)})$ will be arbitrary, non-Gaussian \glspl{pdf}, as will the prediction messages in Section~\ref{sec:prediction-messages}, even under linear Gaussian state-transition \glspl{pdf}.
Consequently, the marginalization integrals for computing \gls{bp} messages do not have a closed-form solution in general.
In the following, we introduce a numerically feasible particle-based implementation of our \gls{bp} method, where \glspl{pdf} are approximated by \glspl{pr}~\cite{LeitMeyHlaWitTufWin:TWC2019,LiaLeiMey:TSP2025}, which are sets of $P$ weighted particles.
That is, we approximate the beliefs in~\eqref{eq:belief-state} and \eqref{eq:belief-eta} using their \glspl{pr}
$\belief(\state{n}) \approxcolon 
\sum_{\scriptscriptstyle p=1}^{\scriptscriptstyle P} \weight{\bm{x},n}{p} \delta(\state{n}\!- \particle{\state{n}}{p})$
and
$\belief(\etan{n}^{\scriptscriptstyle(j)}) \approxcolon 
\sum_{\scriptscriptstyle p=1}^{\scriptscriptstyle P}
\weight{\eta,n}{j,p} \delta(\etan{n}^{\scriptscriptstyle(j)}\!- \particle{\etan{n}}{j,p})$.
To represent \glspl{pdf}, we require their weights to sum to one, i.e., 
$\sum_{\scriptscriptstyle p=1}^{\scriptscriptstyle P}\weight{\bm{x},n}{p}\!=\!1$ and 
$\sum_{\scriptscriptstyle p=1}^{\scriptscriptstyle P}\weight{\eta,n}{j,p}\!=\!1$.

For the belief $\belief(\PFy{s}{n}{j})$ in~\eqref{eq:belief-PFy}, we keep a \gls{pr} 
$\belief(\PFphi{s}{n}{j},1) \approxcolon 
\sum_{\scriptscriptstyle p=1}^{\scriptscriptstyle P}
\weight{\bm{y},s,n}{p} \delta(\PFphi{s}{n}{j}\!-\! \particle{\PFphi{s}{n}{j,p}}{p}) $ only under the alternative hypothesis $\mathcal{H}_1\!:\!\PFr{s}{n}{j}\!=\!1$,
while $\belief(\PFphi{s}{n}{j},0)$ is represented by a constant probability density~\cite[Sec.\,VI]{MeyKroWilLauHlaBraWin:IEEEProc2018}.
This is the case because we require the dummy \gls{pdf} to be constant in 
the \gls{roi} $\mathcal{P}$ and zero outside, 
i.e., $f_\mathrm{d}(\PFphi{s}{n}{j}) \!=\! \mathcal{U}(\PFphi{s}{n}{j};\mathcal{P}\!\times\!\mathcal{R}_\gamma\!\times\!\mathcal{R}_\mu)$, 
with $\mathcal{R}_\gamma\!\subset\!\realsetone{}$ and $\mathcal{R}_\mu\!\subset\!\complexsetone{}$ denoting the supports of $\PFgamma{s}{n}{j}$ and $\PFmu{s}{n}{j}$, respectively,
which implies that $f_\mathrm{d}$ carries no Fisher information about $\PFphi{s}{n}{j}$.
Consequently, the weights $\weight{\bm{y},s,n}{p}$ do not sum to $1$ but instead they approximate marginal posterior \gls{pf} existence probabilities 
\begin{align} \label{eq:existenceProb}
    \existenceProb{s}{n} \!=\! 
    p(\PFr{s}{n}{j}\!=\!1|\observationn{1:n})
    \!\approx \!
    \int \!\! \belief(\PFphi{s}{n}{j},1) \mathrm{d}\PFphi{s}{n}{j}
    \!\approx \!
    \sum_{p=1}^P \weight{\bm{y},s,n}{p} 
    .
\end{align}
The beliefs $\widetilde{p}(\PRr{s}{n}{j})$ are \glspl{pmf} of which we only store the $\widetilde{p}(\PRr{s}{n}{j}\!=\!1)$ branches, 
which approximate the marginal posterior \gls{ppr} existence probabilities $\existenceProbPA{s}{n}{j} \! \coloneqq\! p({\PRr{s}{n}{j}\!=\!1|\observationn{1:n}})\!\approx\!\beliefp(\PRr{s}{n}{j}=1) $.

\subsubsection{Prediction and Birth Messages}

By approximating marginal posterior \glspl{pdf} at time $n\!-\!1$ with the \glspl{pr} of their beliefs, and evaluating the marginalization integral
\begin{align}
    f(\state{n}|\observationn{1:n-1}) 
    \!&= \!\!
    \int \!\! f(\state{n}|\state{n-1}) f(\state{n-1}|\observationn{1:n-1}) \mathrm{d}\state{n-1}
    \nn \\[-2pt]
    &\approx\!\!
    \int \!\! f(\state{n}|\state{n-1}) 
    \!
    \sum_{\scriptscriptstyle p=1}^{\scriptscriptstyle P} \weight{\bm{x},n-1}{p} \delta(\state{n-1}\!- \particle{\state{n-1}}{p})
    \mathrm{d}\state{n-1}
    \nn \\[-2pt]
    &=
    \sum\nolimits_{\scriptscriptstyle p=1}^{\scriptscriptstyle P} 
    \weight{\bm{x},n-1}{p} f(\state{n}|\particle{\state{n-1}}{p}) \,,\label{eq:PR-prediction-PDF}
\end{align}
we obtain a prediction \gls{pdf} parameterized by particles. 
As is done in \gls{sir}~\cite{Arulampalam02PFtutorial},
for each particle $\particle{\state{n-1}}{p}$ we now draw one particle 
$\particle{\state{n}}{p}$ 
from the state-transition \gls{pdf} $f(\state{n}|\particle{\state{n-1}}{p})$ used as proposal \gls{pdf} 
to obtain a \gls{pr} for the prediction \gls{pdf}
$f(\state{n}|\observationn{1:n-1}) \!\approx\! \beta(\state{n}) \!\approx \!\sum\nolimits_{\scriptscriptstyle p=1}^{\scriptscriptstyle P}
\weight{\beta,n}{p} \delta\big( \state{n}-\particle{\state{n}}{p} \big)$
where $\weight{\beta,n}{p}\!=\!\weight{\bm{x},n-1}{p}$ if we can directly sample from $f(\state{n}|\particle{\state{n-1}}{p})$.
We likewise compute the noise variance prediction message 
$\xi(\etan{n}^{\scriptscriptstyle(j)})\approx \sum\nolimits_{\scriptscriptstyle p=1}^{\scriptscriptstyle P}
\weight{\xi,n}{j,p} \delta\big( \etan{n}^{\scriptscriptstyle(j)}-\particle{\etan{n}}{j,p} \big)$
with $\weight{\xi,n}{j,p}\!=\!\weight{\eta,n-1}{j,p}$ 
and $\particle{\etan{n}}{j,p}$ sampled from $f(\etan{n}^{\scriptscriptstyle(j)}|\particle{\etan{n-1}}{j,p})$
using~\eqref{eq:Mxi}.
The prediction message $\zeta(\PRr{s}{n}{j}\!=\!1)$ is computed from $\widetilde{p}(\PRr{s}{n-1}{j}\!=\!1)$ according to~\eqref{eq:zeta}.
\Glspl{pr} of prediction messages for legacy \glspl{pf} $s\!\in\!\setFeaturesLegacyt{n}$ 
are 
$\alpha(\PFphi{s}{n}{1},1)\approx \sum\nolimits_{\scriptscriptstyle p=1}^{\scriptscriptstyle P}
\weight{\alpha,s,n}{p} \delta(\PFphi{s}{n}{1}\!-\! \particle{\PFphi{s}{n}{1,p}}{p}) $
with $\weight{\alpha,s,n}{p}\!=\!\psurvival \weight{\bm{y},s,n-1}{p}$ 
and $\particle{\PFphi{s}{n}{1,p}}{p}$ sampled from $f(\PFphi{s}{n}{1}|\particle{\PFphi{s}{n-1}{J,p}}{p})$ using~\eqref{eq:alphaLegacy}.

As described in Section~\ref{sec:new-PFs}, we introduce one new \gls{pf} per partition $\mathcal{P}_q$ per time step $n$.
\Glspl{pr} of birth messages for these new \glspl{pf} $s \in \setFeaturesNew{n}$ are 
$\alpha(\PFphi{s}{n}{1},1)\approx \sum\nolimits_{\scriptscriptstyle p=1}^{\scriptscriptstyle P}
\weight{\alpha,s,n}{p} \delta(\PFphi{s}{n}{1}\!-\! \particle{\PFphi{s}{n}{1,p}}{p}) $
where the weights would be $\weight{\alpha,s,n}{p}\!=\!\nicefrac{p_{\scriptscriptstyle \mathrm{B}}^{\scriptscriptstyle(q)}}{P}$ if particles $\particle{\PFphi{s}{n}{1,p}}{p}$ were sampled from $f_{\scriptscriptstyle \mathrm{B}}(\PFphi{s}{n}{1})$ per~\eqref{eq:birth-PDF}.
However, we choose to sample from a different proposal \gls{pdf} $f_{\scriptscriptstyle \mathrm{B}}^{\scriptscriptstyle \mathrm{p}}(\PFphi{s}{n}{1}|\observationn{n})$ which improves early detection of map features and introduces importance weights 
$\weightt{\alpha,s,n}{p}\!=\!
\frac{
f_{\scriptscriptstyle \mathrm{B}}(\particle{\PFphi{s}{n}{1,p}}{p})
}{
f_{\scriptscriptstyle \mathrm{B}}^{\scriptscriptstyle \mathrm{p}}(\particle{\PFphi{s}{n}{1,p}}{p}|\observationn{n})
}
\frac{p_{\scriptscriptstyle \mathrm{B}}^{\scriptscriptstyle(q)}}{P}$
 normalized as 
$\weight{\alpha,s,n}{p}
\!=\!
\nicefrac{
p_{\scriptscriptstyle \mathrm{B}}^{\scriptscriptstyle(q)} \weightt{\alpha,s,n}{p}
}{
\sum \weightt{\alpha,s,n}{p}
}$.

\subsubsection{An Efficient Birth Proposal PDF}\label{sec:proposal-PDF}  
For each new \gls{pf} $q(s) \!=\! s\!-\!|\setFeaturesLegacy{n}|$ with $s\!\in\!\setFeaturesNew{n}$, 
our choice of proposal \gls{pdf} is $f_{\scriptscriptstyle \mathrm{B}}^{\scriptscriptstyle \mathrm{p}}(\PFphi{s}{n}{1}|\observationn{n})\!\coloneqq\!
f^{\scriptscriptstyle \mathrm{p}}(\psfv{s}{n}{1}|\observationn{n})
f(\PFgamma{s}{n}{j})f(\PFmu{s}{n}{j})$, where 
$f^{\scriptscriptstyle \mathrm{p}}(\psfv{s}{n}{1}|\observationn{n})\!\coloneqq\!\mathcal{N}(\psfv{s}{n}{1};\bm{\mu}_{\scriptscriptstyle q,n}^{\scriptscriptstyle \mathrm{sfv}},\bm{C}_{\scriptscriptstyle q,n}^{\scriptscriptstyle \mathrm{sfv}})$ is a Gaussian spatial proposal \gls{pdf} parameterized by mean $\bm{\mu}_{\scriptscriptstyle q,n}^{\scriptscriptstyle \mathrm{sfv}}\!\in\!\realsetone{3}$ and covariance matrix
$\bm{C}_{\scriptscriptstyle q,n}^{\scriptscriptstyle \mathrm{sfv}}\!\in\!\realset{3}{3}$.
We obtain these parameters as follows:

First, we define an observation ``residual'' 
$\widetilde{\bm{z}}_{\scriptscriptstyle n}^{\scriptscriptstyle (j)}\coloneqq \bm{\Pi}^\perp_{\scriptscriptstyle j} \bm{z}_{\scriptscriptstyle n}^{\scriptscriptstyle (j)}$ 
using the projector 
$\bm{\Pi}_{\scriptscriptstyle j}^\perp \!\coloneqq\! \eye{\Nz}\!-\!\bm{\Psi}_{\scriptscriptstyle j}\bm{\Psi}_{\scriptscriptstyle j}^\dag$ 
onto the residual-subspace orthogonal to the signal-subspace spanned by the column vectors of the steering matrix 
$\bm{\Psi}_{\scriptscriptstyle j}(\stateHat{n|n-1}^{\text{\tiny MMSE}},
\{\PFphiHat{s}{n-1}{J}^{\text{\tiny MMSE}}
\}_{\setFeaturesLegacy{n}}
)
\coloneqq 
\big[ \steerVec{j}\!(\stateHat{n|n-1}^{\text{\tiny MMSE}} 
,
\!\PFphiHat{0}{n-1}{J}^{\text{\tiny MMSE}} 
)
\hdots \steerVec{j}\!(
\stateHat{n|n-1}^{\text{\tiny MMSE}}
,
\!\PFphiHat{|\setFeaturesLegacy{n}|}{n-1}{J}^{\text{\tiny MMSE}} )
\big] \!\!\in\!\! \complexset{\Nz}{(|\setFeaturesLegacy{n}|+1)}$\!, 
which captures the residual of the signal \textit{not} captured by the \gls{los} $s\!=\!0$ or legacy \glspl{pf} 
$\setFeaturesLegacy{n}$.
Here, we used the predicted state estimate $\stateHat{n|n-1}^{\text{\tiny MMSE}}\!\coloneqq\!\int \state{n}\Mbeta(\state{n})\mathrm{d}\state{n}\!\approx\!\sum_{\scriptscriptstyle p=1}^{\scriptscriptstyle P}\weight{\beta,n}{p} \particle{\state{n}}{p}$.
We further stack 
$\widetilde{\bm{z}}_{\scriptscriptstyle n} \!\coloneqq \!\big[{\widetilde{\bm{z}}_{\scriptscriptstyle n}^{\scriptscriptstyle (1)^\trp}}
\!\hdots\ist
{\widetilde{\bm{z}}_{\scriptscriptstyle n}^{\scriptscriptstyle (J)^\trp}}\big]^{\!\trp}
$.
Then we draw $N_{\scriptscriptstyle \mathrm{g}}$ candidate positions $\bm{p}^{\scriptscriptstyle \mathrm{sfv}}_{\scriptscriptstyle i}$ uniformly from the partition $\mathcal{P}_q$, i.e., from $\mathcal{U}(\mathcal{P}_q)$.
Next, we evaluate a cost function $P_{\scriptscriptstyle \mathrm{B}}(\bm{p}^{\scriptscriptstyle \mathrm{sfv}}_{\scriptscriptstyle i},\widetilde{\bm{z}}_{\scriptscriptstyle n})\!\eqqcolon\!\widetilde{w}_{\scriptscriptstyle i}$ to obtain weights $w_{\scriptscriptstyle i} \!=\! \widetilde{w}_{\scriptscriptstyle i}/ \sum_{\ddot{\imath}=1}^{N_{\scriptscriptstyle \mathrm{g}}}\widetilde{w}_{\scriptscriptstyle \ddot{\imath}}$.
For its computational efficiency and robustness, our choice of a cost function is the coherent Bartlett spectrum $P_{\scriptscriptstyle \mathrm{B}}(\bm{p}^{\scriptscriptstyle \mathrm{sfv}}_{\scriptscriptstyle i},\widetilde{\bm{z}}_{\scriptscriptstyle n})\!\coloneqq\!
\big|
\sum_{\scriptscriptstyle j=1}^{\scriptscriptstyle J}\frac{1}{\Nz}
\left.{\widetilde{\bm{z}}_{\scriptscriptstyle n}^{\scriptscriptstyle (j)}}\right.^{\!\herm} \steerVec{j}(\stateHat{n|n-1}^{\text{\tiny MMSE}},\bm{p}^{\scriptscriptstyle \mathrm{sfv}}_{\scriptscriptstyle i})
\big|^2$.
Finally, we mode-match the mean of our Gaussian proposal \gls{pdf} $f^{\scriptscriptstyle \mathrm{p}}(\psfv{s}{n}{1}|\observationn{n})$ with the empirical maximum estimate of the Bartlett spectrum, i.e., 
$\bm{\mu}_{\scriptscriptstyle q,n}^{\scriptscriptstyle \mathrm{sfv}}\!\coloneqq \!\bm{p}^{\scriptscriptstyle \mathrm{sfv}}_{\scriptscriptstyle i^\star}$ with 
$i^\star \!\coloneqq \!\argmax_{i} P_{\scriptscriptstyle \mathrm{B}}(\bm{p}^{\scriptscriptstyle \mathrm{sfv}}_{\scriptscriptstyle i},\widetilde{\bm{z}}_{\scriptscriptstyle n})$, and moment-match the covariance matrix with an empirical second central moment estimate,
i.e., $\bm{C}_{\scriptscriptstyle q,n}^{\scriptscriptstyle \mathrm{sfv}} \!\coloneqq\! \sum_{\scriptscriptstyle i=1}^{\scriptscriptstyle N_{\scriptscriptstyle \mathrm{g}}} w_{\scriptscriptstyle i} 
\big(
    \bm{p}^{\scriptscriptstyle \mathrm{sfv}}_{\scriptscriptstyle i} \!-\! \bm{\mu}_{\scriptscriptstyle q,n}^{\scriptscriptstyle \mathrm{sfv}}
\big)
\big(
    \bm{p}^{\scriptscriptstyle \mathrm{sfv}}_{\scriptscriptstyle i} \!-\! \bm{\mu}_{\scriptscriptstyle q,n}^{\scriptscriptstyle \mathrm{sfv}}
\big)^{\!\trp}
$.
This choice of proposal \gls{pdf} results in particles 
$\V{p}_{\!\scriptscriptstyle s,n}^{\text{\tiny sfv}\scriptscriptstyle(p)}$ 
of new \glspl{pf} $s\!\in\!\setFeaturesNew{n}$ being drawn from 
$f^{\scriptscriptstyle \mathrm{p}}(\psfv{s}{n}{1}|\observationn{n})$
such that most of them are located in a spatial region where the residual $\widetilde{\bm{z}}_{\scriptscriptstyle n}^{\scriptscriptstyle (j)}$ has most power, and weights $\weightt{\alpha,s,n}{p}\!\propto\!
\frac{
1_{\mathcal{P}_q}\big(\V{p}_{\!\scriptscriptstyle s,n}^{\text{\tiny sfv}\scriptscriptstyle(p)}\big)
}{
f^{\scriptscriptstyle \mathrm{p}}(\V{p}_{\!\scriptscriptstyle s,n}^{\text{\tiny sfv}\scriptscriptstyle(p)}|\observationn{n})
}$.

\subsubsection{Update Messages}\label{sec:PR-update-messages}
Now that \glspl{pr} of prediction messages have been introduced, we can compute the approximate update messages.
In particular, the moments in \eqref{eq:muiota}--\eqref{eq:Comega} are approximated using the terms in \eqref{eq:musnj1}--\eqref{eq:Komega},
where we insert \glspl{pr} of prediction messages to evaluate the marginalization integrals therein.
In particular, we evaluate
$\widetilde{\iota}(\particle{\state{n}}{p};\observation{n}{j}) \!=\! \mathcal{CN}\big(\observation{n}{j};\muiota(\particle{\state{n}}{p}),\Ciota(\particle{\state{n}}{p})\big)$,
$\widetilde{\kappa}(\particle{\PFphi{s}{n}{j,p}}{p},\PFr{s}{n}{j};\observation{n}{j})\!=\! \mathcal{CN}\big(\observation{n}{j};\mukappa(\particle{\PFphi{s}{n}{j,p}}{p},\PFr{s}{n}{j}),\Ckappa(\particle{\PFphi{s}{n}{j,p}}{p},\PFr{s}{n}{j})\big)$,
$\widetilde{\nu}(\particle{\etan{n}}{j,p};\observation{n}{j})\!=\! \mathcal{CN}\big(\observation{n}{j};\munu,\Cnu(\particle{\etan{n}}{j,p})\big)$, 
and
$\widetilde{\omega}(\PRr{s}{n}{j};\observation{n}{j})\!=\! \mathcal{CN}\big(\observation{n}{j};\muomega(\PRr{s}{n}{j}),\Comega(\PRr{s}{n}{j})\big)$
with \glspl{pr} of mean and covariance terms
from Supplementary Material, Sec.\,S-V.

\subsubsection{Beliefs}
With \glspl{pr} of both prediction 
and update messages, \glspl{pr} of the beliefs~\eqref{eq:belief-state}--\eqref{eq:belief-eta} are 
$\belief(\state{n}) \!\approx \!
\sum_{\scriptscriptstyle p=1}^{\scriptscriptstyle P} \weight{\bm{x},n}{p} \delta(\state{n}\!- \particle{\state{n}}{p})$
for the \gls{mt} state
$\belief(\PFphi{s}{n}{j},1) \!\approx\!
\sum_{\scriptscriptstyle p=1}^{\scriptscriptstyle P}
\weight{\bm{y},s,n}{p} \delta(\PFphi{s}{n}{j}\!-\! \particle{\PFphi{s}{n}{j,p}}{p})$ for \gls{pf} states,
and 
$\belief(\etan{n}^{\scriptscriptstyle(j)}) \!\approx\!
\sum_{\scriptscriptstyle p=1}^{\scriptscriptstyle P}
\weight{\eta,n}{j,p} \delta(\etan{n}^{\scriptscriptstyle(j)}\!- \particle{\etan{n}}{j,p})$
for the noise variance
with weights 
$\weight{\bm{x},n}{p}\!=\!\frac{\weightt{\bm{x},n}{p}}{\normConst{\bm{x},n}{}}$, 
$\weight{\bm{y},s,n}{p}\!=\! 
\frac{\weightt{\bm{y},s,n}{p}}{\normConst{\bm{y},s,n}{}}$, and
$\weight{\eta,n}{j,p}\!=\!\frac{\weightt{\eta,n}{j,p}}{\normConst{\eta,n}{(j)}}$
with
\begin{align}
    \weightt{\bm{x},n}{p} 
    &\coloneqq 
    \weight{\beta,n}{p}
    \prod_{\scriptscriptstyle j\in \setAnchors}
    \mathcal{CN}\Big(\observation{n}{j};\muiota(\particle{\state{n}}{p}),\Ciota(\particle{\state{n}}{p})\Big)
    \\
    \weightt{\bm{y},s,n}{p}
    &\coloneqq 
    \weight{\alpha,s,n}{p}
    \prod_{\scriptscriptstyle j\in \setAnchors}
    \mathcal{CN}\big(\observation{n}{j};\mukappa(\particle{\PFphi{s}{n}{j,p}}{p},1),\Ckappa(\particle{\PFphi{s}{n}{j,p}}{p},1)\big)
    \nn
    \\
    \weightt{\eta,n}{j,p}
    &\coloneqq 
    \weight{\xi,n}{j,p}
    \mathcal{CN}\Big(\observation{n}{j};\munu,\Cnu(\particle{\etan{n}}{j,p})\Big)
\end{align}
derived 
in Supplementary Material, Sec.\,S-III%
, and normalized with constants
$\normConst{\bm{x},n}{}\!=\!{\sum_{\scriptscriptstyle p=1}^{\scriptscriptstyle P} \weightt{\bm{x},n}{p}}$, 
$\normConst{\eta,n}{(j)}\!=\!{\sum_{\scriptscriptstyle p=1}^{\scriptscriptstyle P}
\weightt{\eta,n}{j,p}}$, and
\begin{align*}
    \normConst{\bm{y},s,n}{} \! = \! 
    \Big(
        \sum_{\scriptscriptstyle p=1}^{\scriptscriptstyle P} 
        \weightt{\bm{y},s,n}{p}
    \Big)
    \!+\!
    \Big(
        \!1\!-\!\!
        \sum_{\scriptscriptstyle p=1}^{\scriptscriptstyle P} 
        \weight{\alpha,s,n}{p}\!
    \Big)
    \prod_{\scriptscriptstyle j\in\setAnchors}
    \widetilde{\kappa}\big(\PFphi{s}{n}{j},0;\observation{n}{j}\big)
\end{align*}
which is derived in Supplementary Material, Sec.\,S-IV.
The \gls{pf} normalization constant $\normConst{\bm{y},s,n}{}$ requires computing
$\widetilde{\kappa}\big(\PFphi{s}{n}{j},0;\observation{n}{j}\big)
\!=\! \mathcal{CN}\big(\observation{n}{j};\mukappa(\PFphi{s}{n}{j},0),\Ckappa(\PFphi{s}{n}{j},0)\big)$ under the null hypothesis $\mathcal{H}_0\!:\!\PFr{s}{n}{j}\!=\!0$ once, which is constant for all particles. Using the weights $\weight{\bm{y},s,n}{p}$, the marginal posterior PF existence probabilities are calculated by evaluating \eqref{eq:existenceProb}.

The moment-matched approximation for the $\PRr{s}{n}{j}\!=\!1$ branch of the \gls{ppr} belief in~\eqref{eq:belief-PR} is normalized as
\begin{align}\label{eq:PR-normalization}
    \widetilde{p}(\PRr{s}{n}{j}\!=\!1)
    =
    \frac{
        \Mzeta{s}{j}\!(1)\widetilde{\omega}(1;\observation{n}{j})
    }{
        \Mzeta{s}{j}\!(1)\widetilde{\omega}(1;\observation{n}{j})
        +
        \big(1\!-\!\Mzeta{s}{j}\!(1)\big)
        \widetilde{\omega}(0;\observation{n}{j})
    }.
\end{align}
After computing the belief of each state, we perform systematic resampling~\cite[Alg.\,2]{Arulampalam02PFtutorial} which reduces particle degeneracy and implies equal weights of resampled particles.
To counteract particle impoverishment, each particle is convolved with a Gaussian regularization kernel 
with covariance matrix 
moment-matched with the second central moment of the belief
and scaled by the optimal kernel bandwidth $\mathrm{h}_{\text{\tiny opt}}$~\cite[p.\,253]{Musso2001PFCH12}.

\subsection{Computational Complexity and GPU Acceleration}\label{sec:GPU-acceleration}  
Let $\Nfeaturest{n}\!\coloneqq\!\Nfeatures{n}\!+\!1$ denote the number of \glspl{pf} including the \gls{los}.
Liang et al.\,\cite{LiaLeiMey:TSP2025} presented a direct \gls{slam} method that scales with $\mathcal{O}\big(P J \Nfeaturest{n} \Nz^3\big)$ and an ``alternative approximation''
that scales with $\mathcal{O}\big(\Nz^2 + P \big(J(\Nz \Nfeaturest{n}^2 + \Nfeaturest{n}^4)\big)\big)$ according to\,\cite{LiaLeiMey:TSP2025-SuppDoc}. 
The dominating term of the latter is $PJ\Nz \Nfeaturest{n}^2$ in practical settings, i.e., $P\!\gg\!\Nz$ and $\Nz\!\gg \!\Nfeaturest{n}^2$,
hence the algorithm scales \textit{linearly} in the number of particles $P$, the number of \glspl{pa} $J$, and the length of observed data $\Nz$ in practice.
Both implementations showed almost identical performance\,\cite{LiaLeiMey:TSP2025-SuppDoc}.

In Supplementary Material, Sec.\,S-V we follow the alternative approximation from~\cite{LiaLeiMey:TSP2025-SuppDoc} which was originally implemented on a CPU.
The arguably largest potential of this algorithm is that it naturally admits parallelization.
Reflecting the current zeitgeist, our \gls{bp} method is implemented on a \gls{gpu}, parallelizing over both $P$ particles and $J$ \glspl{pa}. 
In Section~\ref{sec:Results}, we show that this achieves an order-of-magnitude speedup over the preceding CPU-based implementation 
when parameterized at equal system parameters, hinting at real-time capability.
The computation of update weights in our algorithm scales with 
$\mathcal{O}\big(
J(\Nz\Nfeatures{n}^3 + \Nfeatures{n}^4) +
P \big(J(\Nz \Nfeaturest{n}^2 + \Nfeaturest{n}^3)\big)
\big)$, while our new birth proposal \gls{pdf} has complexity 
$\mathcal{O}
\big( 
J (\Nz \Nfeaturest{n}^2 + \Nfeaturest{n}^3) + 
Q J N_{\scriptscriptstyle \mathrm{g}} \Nz
\big)$, leaving us with a total computational complexity of
$\mathcal{O}
\big(
Q J N_{\scriptscriptstyle \mathrm{g}} \Nz +
J(\Nz\Nfeatures{n}^3 + \Nfeatures{n}^4) +
P \big(J(\Nz \Nfeaturest{n}^2 + \Nfeaturest{n}^3)\big)
\big)$ per time step $n$.
The last term is dominant in practical implementations, where the per-particle, per-\gls{pa} cost is $\mathcal{O}\big(\Nz \Nfeaturest{n}^2 + \Nfeaturest{n}^3\big)$ after parallelization.

\section{Posterior Cram\'er--Rao Lower Bound}\label{sec:PCRLB}  
Following~\cite{Deutschmann25Asilomar}, we derive two \glspl{pcrlb} for the \gls{dmimo} \gls{slam} problem.
Let $\widetilde{K}\!\coloneqq\!K\!+\!1$.
First, we decompose amplitudes%
\footnote{For deriving the \gls{pcrlb}, amplitudes $\widetilde{\unslant[-.25]{\varrho}}_{\!\scriptscriptstyle k,n}^{\scriptscriptstyle(j)}$ absorb the pathloss instead of the steering vectors $\steerVec{j}$ as well as the carrier-phase term $\exp \!(\!-\mathrm{j}\frac{2\pi}{\lightspeed}\fc \lVert \acute{\V{r}}\rVert )$ from~\eqref{eq:unit-modulus-array-response}. For details, see Supplementary Material, Sec.\,S-VII.} 
into moduli $\rv{a}_{\!\scriptscriptstyle k,n}^{\scriptscriptstyle(j)}\!\in\!\realsetone{}$ and phases $\unslant[-.25]{\varphi}_{\!\scriptscriptstyle k,n}^{\scriptscriptstyle(j)}$ and stack them into vectors 
$\RVmodulivec{n}{(j)}\!\coloneqq\!
\big[{\rv{a}_{\!\scriptscriptstyle 0,n}^{\scriptscriptstyle(j)}}
\ist\hdots\ist
{\rv{a}_{\!\scriptscriptstyle K,n}^{\scriptscriptstyle(j)}}
\big]^{\!\trp}$
and
$\RVphasevec{n}{(j)}$ per \gls{pa}, and 
into vectors
$\RVmodulivec{n}{}\!\coloneqq\!
\big[{\RVmodulivec{n}{(1)}}^{\!\trp}\!
\ist\hdots\ist
{\RVmodulivec{n}{(J)}}^{\!\trp}
\big]^{\!\trp}\!\in\!\realsetone{\widetilde{K}J}$
and 
$\RVphasevec{n}{}\!\coloneqq\!
\big[{\RVphasevec{n}{(1)}}^{\!\trp}\!
\ist\hdots\ist
{\RVphasevec{n}{(J)}}^{\!\trp}
\big]^{\!\trp}\!\in\!\realsetone{\dimPhase}$
for all components $k$.
%
The (i) 
\textit{noncoherent} \gls{pcrlb} is obtained by treating each component phase as a separate \gls{rv} per \gls{pa}, i.e., 
$\RVphasevec{n}{(j)}\!=\!
\big[
\unslant[-.25]{\varphi}_{\!\scriptscriptstyle 0,n}^{\scriptscriptstyle(j)}
\ist\hdots\ist
\unslant[-.25]{\varphi}_{\!\scriptscriptstyle K,n}^{\scriptscriptstyle(j)}
\big]^{\!\trp}$ and $\dimPhase\!=\!\widetilde{K}J$,
while the (ii) \textit{coherent} \gls{pcrlb} is obtained by treating the phase 
$\unslant[-.25]{\varphi}_{\!\scriptscriptstyle k,n}^{\scriptscriptstyle(j)}
\!\eqqcolon\!
\unslant[-.25]{\varphi}_{\!\scriptscriptstyle k,n}\,\forall j\!\in\!\setAnchors$
as a single \gls{rv} common among distributed \glspl{pa} $j,j'\!\in\!\setAnchors$, i.e., 
$\RVphasevec{n}{(j)}\!=\!
\big[
\unslant[-.25]{\varphi}_{\!\scriptscriptstyle 0,n}
\! \hdots \ist
\unslant[-.25]{\varphi}_{\!\scriptscriptstyle K,n}
\big]^{\!\trp}
\!= \!
\RVphasevec{n}{}$
and $\dimPhase\!=\!\widetilde{K}$.
We stack all parameters into a joint state vector
$\RVetaglobal{n}\!\coloneqq\!
\big[ \RVstate{n}^\trp \iist {\RVpMVAposStacked}^\trp \iist {\RVphasevec{n}{}}^\trp \iist {\RVmodulivec{n}{}}^{\trp} \iist \RVetan{n}\big]^\trp \! \in \!\realsetone{\dimGlobal}$
of dimension $\dimGlobal\!=\!6+3K+\dimPhase\!+\!J\widetilde{K}\!+\!1$,
where the map is captured by a vector
$\RVpMVAposStacked\!\coloneqq\![{\RVpsfv{1}{n}{J}}^{\!\!\!\trp} \ist\dots\ist {\RVpsfv{K}{n}{J}}^{\!\!\!\trp}]^\trp \!\in\! \realset{3K}{1}$ 
of stacked \gls{sfv} positions.
For this \gls{pcrlb} derivation and the following experiments, we assume a common noise variance across all \glspl{pa}, i.e., $\etan{n}^{\scriptscriptstyle(j)}\!\eqqcolon\!\etan{n}\, \forall j \!\in\! \setAnchors$.

We are ultimately interested in obtaining the global~\gls{pcrlb}
\begin{align}\label{eq:posteriorCRLB}
	\PCRLB = 
	\big(
		\FIMglobal{n} + \FIMstep{n}{n\!-\!1}
	\big)^{-1} %
	\quad \in  \realset{\dimGlobal}{\dimGlobal}
\end{align}
that is a lower bound on the \gls{mse} matrix~\cite[eq.\,(29)]{VanTrees2007PCRLB} 
of any estimator%
\footnote{The expectation is to be taken under the joint \gls{pdf} $f(\etaglobal{n}, \observationn{n}|\observationn{1:n-1})$.
The notation $\M{X} \succeq \M{0}$ is to be interpreted as $\M{X}$ being positive semidefinite~\cite{kay1993estimation}.} 
$\mathbb{E}
\!\big((\RVetaglobalHat{n}\!-\RVetaglobal{n} ) (\RVetaglobalHat{n}\!-\RVetaglobal{n} )^\trp\big) \succeq \PCRLB$.
The \gls{pcrlb} matrix $\PCRLB$ is the inverse of the posterior information matrix 
$\FIMstep{n}{n}\!\coloneqq\!\FIMglobal{n} \!+\! \FIMstep{n}{n\!-\!1}$
that is computed through the information fusion of the information matrix $\FIMglobal{n}$ about the global parameters of interest $\RVetaglobal{n}$ obtained from a snapshot of observations $\observationn{n}$ at the current time step $n$ with the predicted information matrix 
$\FIMstep{n}{n\!-\!1}$. 
Under a linear Gaussian state-transition \gls{pdf} 
$f(\etaglobal{n}|\etaglobal{n-1})\!=\!\mathcal{N}(\etaglobal{n};\transitionmatrix \etaglobal{n-1},\processNoiseCov)$
the predicted 
information matrix is~\cite[eq.\,(16)]{Hernandez02PCRLB}
\begin{align}\label{eq:priorFIM}
	\FIMstep{n}{n\!-\!1} = 
	\left(
		\transitionmatrix \, \FIMstep{n\!-\!1}{n\!-\!1}^{-1} \, \transitionmatrix^\trp + \processNoiseCov
	\right)^{-1} \,,
\end{align}
with 
state-transition matrix $\transitionmatrix$ and process noise covariance matrix $\processNoiseCov$.
%
Using the \gls{pcrlb} matrix in~\eqref{eq:posteriorCRLB}, we define the \gls{peb} as $\PEB\!\coloneqq\!\sqrt{\tr\big(\big[\PCRLB\big]_{\scriptscriptstyle1:3,1:3}\big)}$ 
and the \gls{meb} as
$\MEB{k,n}\!\coloneqq\!\sqrt{\tr\big(\big[\PCRLB\big]_{\scriptscriptstyle\bm{i}_k,\bm{i}_k}\big)}$ with index range $\bm{i}_k\!\coloneqq\!7\!+\!3(k\!-\!1):6\!+\!3k$ for \gls{sfv} $k$.

\subsection{Global Snapshot FIM}\label{sec:FIMg}  
The Bayesian snapshot information matrix $\FIMglobal{n} \!=\! \mathbb{E}_{\RVetaglobalSmall{n}|\RVobservationn{1:n-1}}\!(\FIMclassic{n})$ is computed as the expectation under the prior \gls{pdf} $f(\etaglobal{n}|\observationn{1:n-1})$ of the ``classic'' snapshot \gls{fim}~\cite{Tichavsky98PCRLB}.
Assuming that each anchor $j$ contributes independent information about $\RVetaglobal{n}$, i.e., assuming conditionally independent observations $\observation{n}{j}\!$, 
the classic snapshot \gls{fim}~\cite{Fascista23RadioStripesICC}
\begin{align}\label{eq:FIMclassic}
    \FIMclassic{n} = 
    \sum\nolimits_{\scriptscriptstyle j=1}^{\scriptscriptstyle J} \jacobgn{j} \FIMch{n}{j} {\jacobgn{j}}^\trp
    \quad \in  \realset{\dimGlobal}{\dimGlobal}
\end{align}
is computed as the 
sum of the local channel \glspl{fim} $\FIMch{n}{j} \!\in\! \realset{\dimLocal}{\dimLocal}$ contributed by all $J$ \glspl{pa}, and propagated via the Jacobian matrices $\jacobgn{j}\!\coloneqq\! \nicefrac{\partial {\etach{n}{j}}^{\!\!\trp}}{\partial \etaglobal{n}} \!\in\!  \realset{\dimGlobal}{\dimLocal}$ from local channel parameter level to global parameter level.
The Jacobian matrices in~\eqref{eq:FIMclassic} are derived in Supplementary Material, Sec.\,S-VII-B.

\subsection{Local Per-Anchor Channel FIM}\label{sec:FIMch}  

The \gls{pa}-local channel parameter vector 
$\RVetach{n}{j}\!\coloneqq\!\big[
{\RVelVecx{n}{j}}^\trp\ist 
{\RVazVecx{n}{j}}^\trp \ist  
{\RVdelayVecx{n}{j}}^\trp \ist 
{\RVphasevec{n}{(j)}}^\trp \ist 
{\RVmodulivec{n}{(j)}}^\trp \ist
\etan{n}\big]^\trp \! \in \! \realsetone{\dimLocal}$
is of dimension
$\dimLocal\!=\!5\widetilde{K}\!+\!1$,
which contains the stacked elevation and azimuth angles, delays, and amplitude phases and moduli of all $\widetilde{K}$ components.
The local channel \gls{fim} at \gls{pa} $j$ is defined as~\cite[eq.\,(10)]{VanTrees2007PCRLB} 
$\FIMch{n}{j}\!\coloneqq\!-\mathbb{E}_{\RVobservation{n}{j}|\RVetach{n}{j}}\!
\big( 
\nabla_{\!\etach{n}{j}} 
(
	\nabla_{\!\etach{n}{j}}
	\! \ln f(\observation{n}{j}|\etach{n}{j})
)^\trp
\big)$, which corresponds to 
\begin{align}\label{eq:FIMch}
	\hspace{-2mm}\big[\FIMch{n}{j}\big]_{\scriptstyle \grave{\imath},\acute{\imath}} \! = 
	\begin{cases}
		\!
		\frac{\Nz}{{\etan{n}}^2},
		\qquad\qquad\qquad~
		[\etach{n}{j}]_{\scriptscriptstyle \grave{\imath}}
		\equiv 
		[\etach{n}{j}]_{\scriptscriptstyle \acute{\imath}}
		\equiv 
		\etan{n}
		\\
		\frac{2}{\etan{n}}\Re \left(
			\frac{\partial {\widetilde{{\varrho}}_{\!\scriptscriptstyle k,n}^{\scriptscriptstyle(j)^{\!\ast}}} {\steerVecx{k}{n}{j}}^{\!\herm} }{\partial [\etach{n}{j}]_{\scriptscriptstyle \grave{\imath}}}
			\frac{\partial {\widetilde{{\varrho}}_{\!\scriptscriptstyle k',n}^{\scriptscriptstyle(j)}} \steerVecx{k'}{n}{j} }{\partial [\etach{n}{j}]_{\scriptscriptstyle \acute{\imath}}}
		\right),
		\quad
		\text{else}
	\end{cases}\hspace{-2mm}
\end{align}
under a complex Gaussian likelihood $f(\observation{n}{j}|\etach{n}{j})$~\cite[Sec. 15.7]{kay1993estimation}.
The individual local channel \gls{fim} entries in~\eqref{eq:FIMch} are derived in Supplementary Material, Sec.\,S-VII-A.

\section{Experiments and Results}\label{sec:Results}  
In the following, we consider two variants of our method.
The unmodified method from Section~\ref{sec:BP-method} is referred to as the \gls{nzm} method.
Its zero-mean counterpart, using the amplitude prior
$\RVamplitude{s}{n}{j}|\RVPFgamma{s}{n}{j} \!\sim\! \CN(0,\RVPFgamma{s}{n}{j})$,
is referred to as the \gls{zm} method. 
In the latter case,
\eqref{eq:musnj1}--\eqref{eq:musnj4} and
\eqref{eq:Kiota}--\eqref{eq:Komega} reduce to zero vectors and matrices, respectively.
We compare both \gls{sfv}-based variants with \gls{ref} from~\cite{LiaLeiMey:TSP2025},\!\footnote{\Gls{ref} was extended to three-dimensional scenarios, for array-equipped \glspl{pa}, and by our birth proposal \gls{pdf} adapted for \glspl{va}.}
which is based on the \gls{va} model and also uses a zero-mean amplitude prior as \gls{zm}.
We also show \gls{nzm}-\acrshort{nc}, i.e., \gls{nzm} applied to \gls{nc} data.

\subsection{Common Simulation Setup}

\Gls{mt} state particles $\{\state{0}\}_{\scriptscriptstyle p=1}^{\scriptscriptstyle P}$ are drawn from 
$f(\state{0})
\!=\!
\mathcal{U}(\pos{0};\mathbf{p}_{\text{\tiny min}},\mathbf{p}_{\text{\tiny max}})
\,\mathcal{U}(\vel{0};-\mathbf{v}_{\text{\tiny max}},\mathbf{v}_{\text{\tiny max}})$ 
with a Cartesian axis-aligned cuboidal support, bounded 
by $\mathbf{p}_{\text{\tiny min}}\!=\!-[3.5\iist1\iist1]^\trp$ and $\mathbf{p}_{\text{\tiny max}}\!=\![4.5\iist4.5\iist1]^\trp$ and maximum velocity $\mathbf{v}_{\text{\tiny max}}\!=\![0.5\iist0.5\iist0.5]^\trp\SI{}{\metre\per\second}$.
Noise variance particles are drawn from a uniform prior 
$f(\etan{0}^{\scriptscriptstyle(j)})\!=\!\mathcal{U}(\etan{0}^{\scriptscriptstyle(j)};\eta_{\text{\tiny min}},\eta_{\text{\tiny max}})$ with $\eta_{\text{\tiny min}}\!=\!10^{-9}$ and $\eta_{\text{\tiny max}}\!=\!10^{-4}$.
We choose $\Nfeatures{0}\!=\!0$, i.e., we initialize only the \gls{los} $s\!=\!0$, in the same way as we introduce new \glspl{pf} described below.

\begin{figure}
        \vskip 0pt	
        \begin{center}
            \newcommand{\LWrays}{0.05pt}
            \ifthenelse{\equal{\DEBUG}{false}}
            {
                \def\datapath{.}
                \setlength{\figurewidth}{0.8\linewidth}
                \tikzexternaldisable    
    	           \input{figures/scenario}
                \tikzexternalenable
                \tikzexternaldisable    
    	           \input{figures/scenario-det}
                \tikzexternalenable
            }
            {
              \includegraphics[width=\linewidth]{figures/scenarioDummy.png}
            }
        \end{center}
        \vspace{-6mm}
        \captionof{figure}{
        Scenarios for synthetic data generation with
        $K\!=\!4$ visible surfaces: a) realizations of random \gls{mt} trajectories, and b) a deterministic trajectory.
        }
        \label{fig:scenario}
    \vspace{-5mm}%
\end{figure}

\newcommand{\markRepeat}{15}
\newcommand{\markSize}{1}

\begin{figure*}
        \vskip 0pt	
        \begin{center}
            \newcommand{\LWbound}{1.0pt}
            \newcommand{\LWestimates}{0.5pt}
            \colorlet{mycolor1}{IEEEblue}
            \colorlet{mycolor2}{IEEEred}
            \colorlet{mycolor3}{IEEEgreen}
            \colorlet{mycolor4}{IEEElightblue}
            \def\datapath{.}
            \setlength{\figurewidth}{0.968\linewidth}
            \setlength{\figureheight}{0.13\linewidth}
            \tikzexternaldisable 
	        \input{figures/results/SFV-4plot}
            \tikzexternalenable
        \end{center}
        \vspace{-6mm}
        \captionof{figure}{Mapping 
        \gls{rmse}
        for $k\!\in\!\{1 \hdots 4\}$ (left to right) vs. \gls{meb} evaluated on synthetic data of Experiment 1. 
        Estimates $\psfvHat{s}{n}{J}$ and ground truth $\psfv{\,k}{n}{J}$ were associated using the Hungarian method~\cite{Kuhn1955Hungarian}, missed detections and false alarms were not evaluated.}
        \label{fig:MEB}
    \vspace{-5mm}%
\end{figure*}

\paragraph*{\Gls{mt} state evolution}
The \gls{mt} state-transition \gls{pdf} 
$f(\state{n}|\state{n-1})=\mathcal{N}(\state{n};\transitionmatrix_{\text{\tiny a}}\state{n-1},\processNoiseCov_{\text{\tiny a}})$
is defined according to a \gls{ncv} state-transition model~\cite[Sec.\,6.3.2]{BarShalom04Tracking}
with state transition matrix
\begin{align*}
\left[\transitionmatrix_{\text{\tiny a}}\right]_{\scriptscriptstyle \grave{\imath},\acute{\imath}} = 
    \begin{cases}
        1, & \grave{\imath}=\acute{\imath} \\
        \mathrm{T}, & 
        (\grave{\imath},\acute{\imath}) \!=\! (1,4)
        \lor 
        (\grave{\imath},\acute{\imath}) \!=\! (2,5)
        \lor 
        (\grave{\imath},\acute{\imath}) \!=\! (3,6)
        \\
        0, & \text{else}
    \end{cases}
\end{align*}
with $\mathrm{T}$ being the time interval between time steps $n$ and $n\!-\!1$,
and kinematic process noise covariance matrix
$\processNoiseCov_{\text{\tiny a}}\!=\!\sigma_{\scriptscriptstyle \mathrm{v}}^2\mathbf{\Gamma}\mathbf{\Gamma}^\trp$
with gain matrix $\mathbf{\Gamma}\!=\!\big[\frac{\mathrm{T}^2}{2}\eye{3},\mathrm{T}\eye{3}\big]^\trp$, where ${\sigma}_{\scriptscriptstyle \mathrm{v}}^2$ is the process noise variance of the kinematic \gls{mt} state.

\paragraph*{ Noise variance evolution}
Following~\cite{LiaLeiMey:TSP2025}, the noise variance evolution is described by a Gamma \gls{pdf} $f(\etan{n}^{\scriptscriptstyle(j)}|\etan{n-1}^{\scriptscriptstyle(j)})\!=\!\mathcal{G}\big(\etan{n}^{\scriptscriptstyle(j)};\mathrm{c}_\eta,\frac{\etan{n-1}^{\scriptscriptstyle(j)}}{\mathrm{c}_\eta}\big)$ with mean $\etan{n-1}^{\scriptscriptstyle(j)}$ and variance $\frac{{(\etan{n-1}^{\scriptscriptstyle(j)}})^2}{\mathrm{c}_\eta}$, parameterized by a chosen constant $\mathrm{c}_\eta\!=\!\num{100}$. 

\paragraph*{ PF state time-evolution}
We assume that the temporal state-transition \gls{pdf} of the continuous \gls{pf} state factorizes as $f(\PFphi{s}{n}{1}|\PFphi{s}{n\!-\!1}{J})\!=\!f(\psfv{s}{n}{1}|\psfv{s}{n-1}{J})f(\PFmu{s}{n}{1}|\PFmu{s}{n-1}{J})f(\PFgamma{s}{n}{1}|\PFgamma{s}{n-1}{J})$.
We model the temporal evolution of the amplitude prior variance with a Gamma \gls{pdf} $f(\PFgamma{s}{n}{1}|\PFgamma{s}{n-1}{J})\!=\!\mathcal{G}\big(\PFgamma{s}{n}{1};\mathrm{c}_\gamma,\nicefrac{\PFgamma{s}{n-1}{J}}{\mathrm{c}_\gamma}\big)$, 
parameterized by a constant $\mathrm{c}_\gamma\!=\!\num{100}$.
For the time evolution of the amplitude prior mean we choose a complex Gaussian \gls{pdf} $f(\PFmu{s}{n}{1}|\PFmu{s}{n-1}{J})\!=\!\mathcal{CN}(\PFmu{s}{n}{1};\PFmu{s}{n-1}{J},\sigma_\mu^{2})$. 
The temporal evolution of \glspl{sfv} is described by a Gaussian state-transition \gls{pdf} $f(\psfv{s}{n}{1}|\psfv{s}{n-1}{J})\!=\!\mathcal{N}(\psfv{s}{n}{1};\psfv{s}{n-1}{J},\eye{3}\sigma_{\text{\tiny sfv}}^2)$. 
We choose survival probabilities 
$\psurvival\!=\!\num{0.8}$ and
$\psurvivalPR\!=\!\num{0.9}$, and
a revival probability $\precoveryPR\!=\!\num{0.1}$.
Declaration and pruning thresholds are set to
$T_{\text{\tiny dec}}\!=\!0.5$ and
$T_{\text{\tiny pru}}\!=\!0.1$, respectively.

\paragraph*{Births of new PFs} For new \glspl{pf}, we choose
$f(\PFgamma{s}{n}{j})\!=\!\mathcal{U}(\PFgamma{s}{n}{j};0, \gamma_{\text{\tiny max}})$
and 
$f(\PFmu{s}{n}{j})\!=\!\mathcal{U}(\PFmu{s}{n}{j};\mathcal{R}_\mu)$.
For simplicity, we choose $Q\!=\!1$, and 
$f(\psfv{s}{n}{1})\!=\!\mathcal{U}(\psfv{s}{n}{1};2\mathbf{p}_{\text{\tiny min}},2\mathbf{p}_{\text{\tiny max}})$. Note that $Q\!=\!1$ leads to $N$ births and $N\!-\!\Nfeatures{N}$ deaths along the track.
The Poisson mean is $\mu_{\scriptscriptstyle \mathrm{B}}\!=\!\num{0.5}$.
We choose $\gamma_{\text{\tiny max}}\!=\!5$ and $
\mathcal{R}_\mu\!\coloneqq\!\{\mu\!\in\!\complexsetone{}: |\mu|\!\le\! \mu_{\text{\tiny max}}\}$,
with $\mu_{\text{\tiny max}}\!=\!0.001$ chosen intentionally small to promote robust noncoherent convergence after initialization and coherent tracking thereafter along the track.
New \glspl{ppr} are introduced with \gls{ppr} birth probability $p_{\scriptscriptstyle \mathrm{B}}^{\scriptscriptstyle\mathrm{PR}}\!=\!\num{0.3}$.

\paragraph*{System Parameters}
We choose $\Nfrequency\!=\!10$ frequency bins equally spaced over a bandwidth $B$ and centered at $\fc \!=\!\SI{3.5}{\giga\hertz}$.
We use $J\!=\!4$ \glspl{ura} each with $(\Nantennasy\!\times\!\Nantennasz)\!=\!(4\!\times\!4)$ antennas equally spaced at $\nicefrac{\lambda}{2}$.
This leads to an observation length $\Nz\!=\!160$.
The environment in Fig.\,\ref{fig:scenario} consists of $\Ncomponents\!=\!4$ surfaces \wallrefA\,--\,\wallrefD\,, which---assuming correct model order estimation---leads to $\Nfeaturest{n}\!=\!5$.
All experiments are conducted on an NVIDIA RTX PRO\textsuperscript{\texttrademark} 4000 Blackwell \gls{gpu} with a peak memory bandwidth of \SI{672}{\giga\byte\per\second}.
We choose a time-constant noise variance $\etan{n}$ s.t. $\mathrm{SNR}\!=\!\frac{P^{\scriptscriptstyle\mathrm{ch}}_{\scriptscriptstyle 1}}{\etan{n}}$ is \SI{20}{\dB} with sum-channel power $P^{\scriptscriptstyle\mathrm{ch}}_{\scriptscriptstyle n}\!\coloneqq\!
\frac{1}{\Nz J}
\sum_{j=1}^{J}
\lVert
\sum_{k=0}^{\Ncomponents}
\varrho_{\scriptscriptstyle k,n}^{\scriptscriptstyle(j)}\steerVecx{k}{n}{j}
\rVert^2$
evaluated at time $n\!=\!1$.
In two experiments with trajectories of $N\!=\!200$ steps we conducted \gls{mc} analyses of \num{1000} estimation runs of \gls{nzm}, \gls{nzm}-\gls{nc}, \gls{zm}, and \gls{ref}.

\begin{figure}
        \vskip 0pt	
        \begin{center}
            \newcommand{\LWbound}{1.0pt}
            \newcommand{\LWestimates}{0.5pt}
            \def\datapath{.}
            \setlength{\figurewidth}{0.8\linewidth}
            \setlength{\figureheight}{0.3\linewidth}
            \tikzexternaldisable    
	           \input{figures/results/PEB}
            \tikzexternalenable
        \end{center}
        \vspace{-6mm}
        \captionof{figure}{Position \acrshort{rmse}s 
        a) vs. \glspl{peb} in Experiment 1 and 
        c) under \gls{olos} in Experiment 2 (left). 
        Cumulative frequencies of the position errors (right).
        }
        \label{fig:PEB}
\end{figure}

\begin{figure}
        \vskip 0pt	
        \begin{center}
            \def\datapath{.}
            \setlength{\figurewidth}{0.95\linewidth}
            \setlength{\figureheight}{0.22\linewidth}
            \tikzexternaldisable    
	           \pgfplotsset{every axis/.append style={
  label style={font=\footnotesize},
  legend style={font=\scriptsize},
  tick label style={font=\footnotesize},
}}

\def\datapath{./figures/data}   
\colorlet{compcolor1}{orange}
\colorlet{compcolor2}{blue}
\colorlet{compcolor3}{red}
\colorlet{compcolor4}{green}
\colorlet{compcolor5}{IEEElightblue}

\pgfdeclarelayer{background}
\pgfdeclarelayer{main}
\pgfdeclarelayer{foreground}
\pgfsetlayers{background,main,foreground}

\begin{tikzpicture}

\begin{axis}[%
width=0.7\figurewidth,
height=1\figureheight,
at={(0\figurewidth,0.5\figureheight)},
name=boundary,  
axis line style = thick,	
scale only axis,
xmin=1,
xmax=199.9,
xlabel={Step $n$},                     
ymin=0,
ymax=19.9,
ytick = {0,5,...,20},
ylabel={Input SNR in dB},
tick align=inside,
grid=both, 
    major grid style={very thin, lightgray, draw opacity=0.5},     minor grid style={ultra thin, lightgray, draw opacity=0.2}, 
    minor x tick num = 4, 
    minor y tick num = 4
]
\pgfplotsset{every major tick/.append style={-Triangle Cap,thick},thick}

\pgfplotstableread[col sep=space, header=false]{\datapath/comp-SNR/node-g-s1.dat}\nodegtable
\pgfplotstablegetelem{0}{[index]0}\of\nodegtable
\edef\nodegx{\pgfplotsretval}
\pgfplotstablegetelem{0}{[index]1}\of\nodegtable
\edef\nodegy{\pgfplotsretval}
\coordinate (nodeg) at (axis cs:\nodegx,\nodegy);

\pgfplotstableread[col sep=space, header=false]{\datapath/comp-SNR/node-m-s1.dat}\nodemtable
\pgfplotstablegetelem{0}{[index]0}\of\nodemtable
\edef\nodemx{\pgfplotsretval}
\pgfplotstablegetelem{0}{[index]1}\of\nodemtable
\edef\nodemy{\pgfplotsretval}
\coordinate (nodem) at (axis cs:\nodemx,\nodemy);


\path[  draw = black,
        color        =black,
        line width   =0.15mm,
        line join    =round, 
        line cap     =round] 
        (nodeg) -- ($(nodeg)+(1.5mm,1.5mm)$) -- ($(nodeg)+(5mm,1.5mm)$);  
            \node[right,align=left] at ($(nodeg)+(5mm,1.5mm)$) {\footnotesize contribution from $\PFgamma{\,0}{n}{j}$};
\path[  draw = black,
        color        =black,
        line width   =0.15mm,
        line join    =round, 
        line cap     =round] 
        (nodem) -- ($(nodem)+(1.5mm,-1.5mm)$) -- ($(nodem)+(5mm,-1.5mm)$);  
            \node[right,align=left] at ($(nodem)+(5mm,-1.5mm)$) {\footnotesize contribution from $\PFmu{\,0}{n}{j}$};

\path[  draw = black,
        line width   =0.15mm,
        line join    =round, 
        line cap     =round] 
        (axis cs:160,16.8) -- (axis cs:160,13);  
    \node[align=center,%
        rectangle,inner sep=0.5pt,rounded corners=0.65mm,minimum size=0.7em,
              fill=IEEEyellow,
              text=white] (LoS) at (axis cs:160,12) {\adjustbox{max width=0.65em}{$\scriptstyle 0$}};
    \node[right,align=left,text=black,xshift=1mm] at (LoS) {\footnotesize LoS};

\begin{pgfonlayer}{foreground}
    \coordinate (annotationCenter) at (axis cs:70,4.15);
    
    
    \path[  draw = black,
            line width   =0.15mm,
            line join    =round, 
            line cap     =round] 
            ($(annotationCenter)+(-0.4mm,1.4mm)$) -- 
            ($(annotationCenter)+(0mm,1.4mm)$) -- 
            ($(annotationCenter)+(0mm,-1.4mm)$) -- 
            ($(annotationCenter)+(-0.4mm,-1.4mm)$);  
    
    \path[  draw = black,
            line width   =0.15mm,
            line join    =round, 
            line cap     =round] 
            (annotationCenter) -- 
            ($(annotationCenter)+(2mm,0mm)$) -- 
            ($(annotationCenter)+(4mm,-2mm)$) -- 
            ($(annotationCenter)+(6mm,-2mm)$);  
     \node[right,align=left,text=black,xshift=0mm] 
     (MPCs) at ($(annotationCenter)+(6mm,-2mm)$) 
     {\footnotesize MPCs \wallrefA\,--\,\wallrefD};
     
\end{pgfonlayer}

\pgfplotsinvokeforeach{1,...,5}{%
    \addplot[
        color=compcolor#1, 
        draw opacity = 0.3,
        line width   =2.0pt,
        line join    =round, 
	        line cap     =round
    ] table {\datapath/comp-SNR/SNR-true-s#1.dat};
    \ifnum#1=1\relax
        \label{pgf:SNR-true}
    \fi
    \addplot[
        color=compcolor#1, 
        draw opacity = 0.8,
        line width   =0.7pt,
        line join    =round, 
        line cap     =round
    ] table {\datapath/comp-SNR/SNR-hat-s#1.dat};
    \ifnum#1=1\relax
	    \label{pgf:SNR-hat}
	\fi
    \addplot[color=compcolor#1!80!black, 
        draw opacity = 0.9,
        dash pattern =on 2pt off 1pt on 2pt off 1pt,
        line width   =0.4pt,
        line join    =round, 
        line cap     =round]
    table {\datapath/comp-SNR/SNR-hat-m-s#1.dat};
    \ifnum#1=1\relax
	    \label{pgf:SNR-hat-m}
	\fi
    \addplot[color=compcolor#1!80!black, 
            draw opacity = 0.9,
            dash pattern =on 0.01pt off 0.7pt,
            line width   =0.4pt,
            line join    =round, 
            line cap     =round] 
    table {\datapath/comp-SNR/SNR-hat-g-s#1.dat};
    \ifnum#1=1\relax
	    \label{pgf:SNR-hat-g}
	\fi
}

\node[  below left, 
        align=right,
        font={\footnotesize},
        fill=white,
        opacity=0.9,
        inner sep=0.75mm, 
        xshift=0.65mm, 
        yshift=0.65mm,
        draw, 
        line width = 0.8pt] at %
(axis cs:200,20){$j\!=\!4$};

\end{axis}

\newcommand{\compSNRtrueMultiLine}[1][0.6cm]{%
    \tikz[baseline=-0.5ex, x=#1]{%
        \foreach \i [
            evaluate=\i as \xstart using {(\i-1)/5},
            evaluate=\i as \xend using {\i/5}
        ] in {1,...,5}{%
            \draw[
                color=compcolor\i,
                draw opacity=0.3,
                line width=2.0pt,
                line cap=butt
            ] (\xstart,0) -- (\xend,0);
        }%
    }%
}

\newcommand{\compSNRestMultiLine}[1][0.6cm]{%
    \tikz[baseline=-0.5ex, x=#1]{%
        \foreach \i [
            evaluate=\i as \xstart using {(\i-1)/5},
            evaluate=\i as \xend using {\i/5}
        ] in {1,...,5}{%
            \draw[
                color=compcolor\i,
                draw opacity = 0.8,
                line width   =0.7pt,
                line cap=butt
            ] (\xstart,0) -- (\xend,0);
        }%
    }%
}

\newcommand{\compSNRestmMultiLine}{%
    \tikz[baseline=-0.5ex, x=0.6cm, y=0.2cm]{%
        \path[use as bounding box] (0,-0.5) rectangle (1,0.5);
        \foreach \i [
            evaluate=\i as \y using {(\i-3)/4}
        ] in {1,...,5}{%
            \draw[
                color=compcolor\i!80!black,
                draw opacity=0.9,
                dash pattern=on 2pt off 1pt on 2pt off 1pt,
                line width=0.4pt,
                line cap=round
            ] (0,\y) -- (1,\y);
        }%
    }%
}

\newcommand{\compSNRestgMultiLine}{%
    \tikz[baseline=-0.5ex, x=0.6cm, y=0.2cm]{%
        \path[use as bounding box] (0,-0.5) rectangle (1,0.5);
        \foreach \i [
            evaluate=\i as \y using {(\i-3)/4}
        ] in {1,...,5}{%
            \draw[
                color=compcolor\i!80!black,
                draw opacity = 0.9,
                dash pattern =on 0.01pt off 0.7pt,
                line width   =0.4pt,
                line cap=round
            ] (0,\y) -- (1,\y);
        }%
    }%
}

\def\legendwidth{0.24\figurewidth}
\node[align=center,draw,thick,fill=white,inner sep=0pt,right=11mm,line cap = round, fill opacity=0.8, text opacity = 1, draw opacity = 1 , xshift = 0mm, minimum width=\legendwidth, text width=0.001pt, fit = (boundary.north east) (boundary.south east)] at (boundary.east) 
{~

\footnotesize
    \setlength{\tabcolsep}{1pt} 
    \makebox[\linewidth][c]{%
        \,\,\begin{tabular}{rl}
            Line~  &  SNR \\ \hline
            \compSNRtrueMultiLine  & true  \\
            \compSNRestMultiLine   & estimates  \\
            \compSNRestmMultiLine & coherent \\
            \compSNRestgMultiLine & noncoh. 
        \end{tabular}
    }
};

\end{tikzpicture}%
            \tikzexternalenable
        \end{center}
        \vspace{-6mm}%
        \captionof{figure}{Estimated vs. true component \glspl{snr} for the \gls{los} and \glspl{mpc}.
        }
        \label{fig:comp-SNR}
\end{figure}

\subsection{Experiment 1: Synthetic Data, Random Trajectory}

Random measurements are generated according to \eqref{eq:observation-generative} and realizations of \textit{random} states $\{\RVstate{n},\RVpsfv{k}{n}{}\}_{n=0}^{N}$ are 
generated according to the specified state-transition \glspl{pdf} (see Supplementary Material, Sec.\,S-VII for details), yielding the \gls{ncv} trajectory realizations shown in Fig.\,\ref{fig:scenario}\,a).
The \glspl{rmse} of \gls{zm} approach both the \textit{noncoherent} \gls{peb} in Fig.\,\ref{fig:PEB}\,a) and \glspl{meb} in Fig.\,\ref{fig:MEB}.
In this scenario without \gls{los} obstructions, \gls{ref} performs almost identically to \gls{zm} in position \gls{rmse}.
Fig.\,\ref{fig:PEB}\,a) and the cumulative frequency of position errors in Fig.\,\ref{fig:PEB}\,b) show that \gls{nzm} outperforms \gls{zm} and \gls{ref} in terms of \gls{mt} position accuracy.
The mapping \gls{rmse} of \gls{nzm} approaches the \textit{coherent} \gls{meb} in Fig.\,\ref{fig:MEB}, which leverages the full aperture of the \gls{dmimo} infrastructure.
This highlights the aperture-retaining properties of \gls{nzm}, which outperforms \gls{zm} and \gls{ref} that are tied to 
the noncoherent \gls{pcrlb} and fail to retain aperture.
On noncoherent data, it gracefully degrades to noncoherent operation (see \gls{nzm}-\gls{nc}).
In all figures, $\interval{50\%}$ denotes the 
interquartile range
of the empirical error distribution.
Fig.\,\ref{fig:comp-SNR} shows the true component input
$\mathrm{SNR}_{\scriptscriptstyle k,n}^{\scriptscriptstyle(j)}\!\coloneqq\!
\frac{\lVert
\varrho_{\scriptscriptstyle k,n}^{\scriptscriptstyle(j)}
\steerVecx{k}{n}{j}
\rVert^2}{\Nz \etan{n}^{\scriptscriptstyle(j)}}$
for components $k$ evaluated at \gls{pa} $j\!=\!4$ for one \gls{mc} run.
It is compared against the \gls{nzm} estimates $\widehat{\mathrm{SNR}}_{\scriptscriptstyle k,n}^{\scriptscriptstyle(j)}\!\coloneqq\!
\frac{
(|\widehat{\mu}_{\!\scriptscriptstyle k,n}|^2
+
\widehat{\gamma}_{\!\scriptscriptstyle k,n})
\lVert
 \steerVec{j}(\stateHat{n}^{\text{\tiny MMSE}},\PFphiHat{s}{n}{j}^{\text{\tiny MMSE}})
\rVert^2}
{\Nz \etanHat{n}^{\,\text{\tiny MMSE}\ist \scriptscriptstyle(j)}}$, 
with 
$\widehat{\mu}_{\!\scriptscriptstyle k,n}\!\coloneqq\!\big[\PFphiHat{s}{n}{j}^{\text{\tiny MMSE}}\big]_{\scriptscriptstyle 5}$
and $\widehat{\gamma}_{\!\scriptscriptstyle k,n}\!\coloneqq\!\big[\PFphiHat{s}{n}{j}^{\text{\tiny MMSE}}\big]_{\scriptscriptstyle 4}$,
which are sums of two terms.
One term models the coherent signal power contained in the means $\PFmu{k}{n}{j}$ and the other the noncoherent signal power contained in the variances $\PFgamma{k}{n}{j}$.
The latter dominates during convergence, whereas the former takes over as the estimator builds up coherence.
In this experiment, we used $P\!=\!\num{30000}$, ${\sigma}_{\scriptscriptstyle \mathrm{v}}\!=\!\SI{0.5}{\metre\per\second\squared}$,
$B\!=\!\SI{200}{\mega\hertz}$, 
$\sigma_\mu\!=\!\num{3e-3}$, and
$\sigma_{\text{\tiny sfv}}\!=\!\SI{0.1}{\milli\metre}$. %
%

\subsection{Experiment 2: Synthetic Data, Deterministic Trajectory}

The \gls{mt} state follows the \textit{deterministic} trajectory in Fig.\,\ref{fig:scenario}\,b) for which the \gls{pcrlb} is not a lower bound.
Both the position \gls{rmse} in Fig.\,\ref{fig:PEB}\,c) and the cumulative frequencies of position errors in Fig.\,\ref{fig:PEB}\,d) show that \gls{nzm} performs more accurately than \gls{zm} and \gls{ref}.
The surface in the center of Fig.\,\ref{fig:scenario}\,b) causes \gls{pa}-local path obstructions.
Fig.\,\ref{fig:vis} shows the true number of visible paths per \gls{pa} $j$ vs. the sum of their estimated \gls{ppr} existence probabilities $\sum_{s\in\setFeaturest{n}}\existenceProbPA{s}{n}{j}$ for \gls{nzm}.
Algorithms \gls{nzm} and \gls{zm} using infrastructure-global \gls{sfv} point-map features---capable of \gls{pf} data fusion across \glspl{pa}---share at least one \gls{pr} with each surface at all times, which keeps their \glspl{pf} alive even under partial visibility.
\Gls{ref}, on the other hand, suffers from deaths of \gls{pa}-local \gls{va} point-map features under obstructions, leading to robustness disadvantages.
In \gls{los} conditions, \gls{ref} again performs close to \gls{zm}. 
The position and existence estimates show that robust tracking is possible with our \gls{bp} method even under partial \gls{olos} in a challenging multipath scenario.
In this experiment, we used a more economical number of particles $P\!=\!\num{15000}$, $B\!=\!\SI{100}{\mega\hertz}$, ${\sigma}_{\scriptscriptstyle \mathrm{v}}\!=\!\SI{7.7}{\metre\per\second\squared}$ chosen as \num{1.5} times the maximum \gls{mt} acceleration,
$\sigma_\mu\!=\!\num{3e-2}$, and
$\sigma_{\text{\tiny sfv}}\!=\!\SI{4}{\milli\metre}$.

\textit{Discussion:} 
Fig.\,\ref{fig:PEB}\,a) and Fig.\,\ref{fig:MEB} demonstrate that distributed processing using a zero-mean Type-II likelihood (i.e., \gls{zm} and \gls{ref}) inevitably results in the loss of aperture and hence reduced estimation performance. 
Using a nonzero-mean Type-II likelihood (i.e., \gls{nzm}) can restore coherence and hence retain aperture and performance in \gls{dmimo} systems. 
%
\ifthenelse{\equal{\IEEEversion}{true}}%
{%
}%
{%
Note that the proposed system model assumes frequency synchronization (but not phase calibration) between the \gls{mt} clock and the \gls{dmimo} infrastructure. 
While this is a strong assumption, it can be achieved in practice (cf.\,\cite{Deutschmann24SPAWC}). 
If it is violated, the estimator can still operate \textit{noncoherently}, thereby accommodating phase asynchrony, provided that the amplitude prior means $\PFmu{s}{n}{j}$ are estimated to zero. 
In this case, \gls{nzm} effectively reduces to the \gls{zm} model, as confirmed by experiments.

An additional advantage of the proposed \gls{bp} framework is its flexibility: the factor graph can be readily extended to include unknown timing and phase offsets of the \glspl{pa} as latent parameters (similar to noise variances) \cite{Fascista25RadioStripes}. Since these parameters are shared across \glspl{sfv}, they can be jointly inferred provided that the number of measurement equations induced by common features exceeds the number of unknowns \cite{Fascista25RadioStripes,VenLeiWit:Arxiv2026}. This concept potentially also enables robust over-the-air calibration \cite{KolBjoGoeLar:PIMRC2025}.
}%

\subsection{Run Time Analysis}
Although the arithmetic operations scale according to the computational complexity stated in Section~\ref{sec:GPU-acceleration}, a \gls{gpu}-based implementation of our \gls{bp} method is memory-bound rather than math-bound. 
Hence, its execution time scales approximately linearly with the memory traffic and inversely with the \gls{gpu} memory bandwidth.
At the above system parameters $\{P,J,\Nz,{\widetilde{\Ncomponents}}^2\}$, with respect to which the memory traffic practically scales linearly, the \gls{gpu}-accelerated \textsc{Matlab} implementation of our \gls{bp} method has average runtimes per time step $n$ of \SI{240}{\milli\second} and \SI{420}{\milli\second}, in Experiments 2 and 1, respectively, demonstrating possible real-time capability.
When scaled to the same system parameters and compared with the runtime of the original algorithm in~\cite[Table\,1]{LiaLeiMey:TSP2025-SuppDoc}, our algorithm runs \num{13} times faster.%

\begin{figure}
        \vskip 0pt	
        \begin{center}
            \newcommand{\LWbound}{1.0pt}
            \newcommand{\LWestimates}{0.5pt}
            \def\datapath{.}
            \setlength{\figurewidth}{1\linewidth}
            \setlength{\figureheight}{0.4\linewidth}
            \tikzexternaldisable    
	           \pgfplotsset{every axis/.append style={
  label style={font=\footnotesize},
  legend style={font=\scriptsize},
  tick label style={font=\footnotesize},
}}

\def\datapath{./figures/data}   

\begin{tikzpicture}

\begin{axis}[%
width=0.465\figurewidth,
height=0.485\figureheight,
at={(0\figurewidth,0.5\figureheight)},
name=boundaryLeft,  
axis line style = thick,	
scale only axis,
xmin=1,
xmax=200,
xticklabel=\empty,                      
ymin=0,
ymax=5.5,
ytick = {0,1,...,5.5},
ylabel={Num. paths},
tick align=inside,
grid=both, 
    major grid style={very thin, lightgray, draw opacity=0.5},     minor grid style={ultra thin, lightgray, draw opacity=0.2}, 
    minor x tick num = 4, 
    minor y tick num = 1, 
]
\pgfplotsset{every major tick/.append style={-Triangle Cap,thick},thick}

\addplot [color=white!70!black, line width=2.0pt,line join=round, line cap=round]
  table{\datapath/visibilities/vis-coh-det-j1.dat};%
    \label{pgf:vis-coh}

\addplot [color=IEEEgreen, line width=0.7pt,line join=round, line cap=round]
table{\datapath/visibilities/sum-exist-prob-j1.dat};%
    \label{pgf:sum-exist-prob}

\node[above left, align=right,font={\footnotesize},fill=white,
opacity=0.9,inner sep=0.75mm, xshift=0.65mm, yshift=-0.7mm, draw, line width = 0.8pt] at %
(axis cs:200,0){a) $j\!=\!1$};

\end{axis}

\begin{axis}[%
width=0.465\figurewidth,
height=0.485\figureheight,
at={(0.475\figurewidth,0.5\figureheight)},
name=boundaryRight,  
axis line style = thick,	
scale only axis,
xmin=1,
xmax=199,
xticklabel=\empty,                      
ymin=0,
ymax=5.5,
ytick = {0,1,...,5.5},
yticklabel=\empty,                      
tick align=inside,
grid=both, 
    major grid style={very thin, lightgray, draw opacity=0.5},     minor grid style={ultra thin, lightgray, draw opacity=0.2}, 
    minor x tick num = 4, 
    minor y tick num = 1, 
]
\pgfplotsset{every major tick/.append style={-Triangle Cap,thick},thick}

\addplot [color=white!70!black, line width=2.0pt,line join=round, line cap=round]
  table{\datapath/visibilities/vis-coh-det-j2.dat};%

\addplot [color=IEEEgreen, line width=0.7pt,line join=round, line cap=round]
table{\datapath/visibilities/sum-exist-prob-j2.dat};%
    
\node[above left, align=right,font={\footnotesize},fill=white,
opacity=0.9,inner sep=0.75mm, xshift=0.5mm, yshift=-0.7mm, draw, line width = 0.8pt] at %
(axis cs:200,0){b) $j\!=\!2$};

\end{axis}

\begin{axis}[%
width=0.465\figurewidth,
height=0.485\figureheight,
at={(0\figurewidth,0\figureheight)},
axis line style = thick,	
scale only axis,
xmin=1,
xmax=200,
xlabel={Step $n$},
ymin=0,
ymax=5.5,
ytick = {0,1,...,5.5},
yticklabels={0,1,...,4},
ylabel={Num. paths},
tick align=inside,
grid=both, 
    major grid style={very thin, lightgray, draw opacity=0.5},     minor grid style={ultra thin, lightgray, draw opacity=0.2}, 
    minor x tick num = 4, 
    minor y tick num = 1, 
]
\pgfplotsset{every major tick/.append style={-Triangle Cap,thick},thick}

\addplot [color=white!70!black, line width=2.0pt,line join=round, line cap=round]
  table{\datapath/visibilities/vis-coh-det-j3.dat};%

\addplot [color=IEEEgreen, line width=0.7pt,line join=round, line cap=round]
table{\datapath/visibilities/sum-exist-prob-j3.dat};%

\node[above left, align=right,font={\footnotesize},fill=white,
opacity=0.9,inner sep=0.75mm, xshift=0.65mm, yshift=-0.7mm, draw, line width = 0.8pt] at %
(axis cs:200,0){c) $j\!=\!3$};
\end{axis}

\begin{axis}[%
width=0.465\figurewidth,
height=0.485\figureheight,
at={(0.475\figurewidth,0\figureheight)},
axis line style = thick,	
scale only axis,
xmin=1,
xmax=199,
xlabel={Step $n$},
ymin=0,
ymax=5.5,
ytick = {0,1,...,5.5},
yticklabels={0,1,...,4},
yticklabel=\empty,                      
tick align=inside,
grid=both, 
    major grid style={very thin, lightgray, draw opacity=0.5},     minor grid style={ultra thin, lightgray, draw opacity=0.2}, 
    minor x tick num = 4, 
    minor y tick num = 1, 
]
\pgfplotsset{every major tick/.append style={-Triangle Cap,thick},thick}

\addplot [color=white!70!black, line width=2.0pt,line join=round, line cap=round]
  table{\datapath/visibilities/vis-coh-det-j4.dat};%

\addplot [color=IEEEgreen, line width=0.7pt,line join=round, line cap=round]
table{\datapath/visibilities/sum-exist-prob-j4.dat};%

\node[above left, align=right,font={\footnotesize},fill=white,
opacity=0.9,inner sep=0.75mm, xshift=0.5mm, yshift=-0.7mm, draw, line width = 0.8pt] at %
(axis cs:200,0){d) $j\!=\!4$};

\end{axis}

\node[below,draw,thick,fill=white,inner sep=0pt,above right=0.0em,line cap = round, fill opacity=0.8, text opacity = 1, draw opacity = 1 , yshift = 2.5mm, minimum height=0.4cm, fit = (boundaryLeft.north west) (boundaryRight.north east)] 
{~\vspace{-3mm}

\footnotesize
    \setlength{\tabcolsep}{3pt} 
    \begin{tabular}{{@{}R{1.0cm}L{2.5cm}|R{1.0cm}L{2.5cm}@{}}} 
    \ref{pgf:vis-coh} & true num. vis. paths ~&
    \ref{pgf:sum-exist-prob} & existence prob. sum \,
    \end{tabular}};
    
\end{tikzpicture}%
            \tikzexternalenable
        \end{center}
        \vspace{-6mm}
        \captionof{figure}{True number of visible paths in Experiment 2 vs. the sum of the estimated existence probabilities at each \gls{pa} $j\in\setAnchors$.
        }
        \label{fig:vis}
\end{figure}

\section{Conclusion}\label{sec:conclusion}  
In this work, we proposed a scalable direct \gls{mpslam} method based on particle-based \gls{bp} for coherent sensing and localization in \gls{dmimo}/\gls{xlmimo} systems. Motivated by the increasing availability of spatially distributed arrays in future wireless infrastructures, the method enables coherent data fusion directly at the raw-signal level while jointly estimating the \gls{mt} state and geometric map of the environment.
We show that estimators commonly using a \textit{zero-mean} (\gls{zm}) Type-II likelihood inherently lead to noncoherent processing and aperture loss, with estimation performance fundamentally tied to the \textit{noncoherent} \gls{pcrlb}.
The key element for coherent processing is a phase-preserving \emph{nonzero-mean} (\gls{nzm}) Type-II likelihood that maintains a shared complex mean across distributed \glspl{pa}.
Such an estimator 
can preserve the global phase structure and retain aperture gain, allowing it to perform close to the \textit{coherent} \gls{pcrlb}.
Meanwhile, the variance term captures residual noncoherent signal power, allowing the model to naturally accommodate 
partial coherence by shifting unmodeled power into the variance.
\ifthenelse{\equal{\IEEEversion}{true}}%
{%
}%
{%
This construction retains full aperture gain whenever coherence is available, while gracefully degrading to noncoherent operation otherwise.
}%
Combined with the \gls{sfv} surface model, the proposed method enables consistent fusion of map features across distributed arrays and propagation paths, while accounting for near-field propagation and visibility effects. 
Aided by its increased spatial resolution, 
infrastructure-global \gls{sfv} point-map features, and a \gls{gpu} implementation, our proposed \gls{nzm} method achieves higher accuracy, improved robustness, and an order of magnitude runtime speedup over the reference method (\gls{ref}).
Promising directions for future research include extending the proposed framework toward hybrid inference--deep-learning architectures \cite{ WeiLiaMey:TSP2026, VenDeuFucKnoLei:Arxiv2026}, and incorporating multiple-bounce paths \cite{Li25adaptiveDMIMOslam}, surface roughness \cite{WieVenWilLei:JAIF2023,WieVenWilWitLei:Fusion2024}, multiple feature types such as point scatterers \cite{KimGranSveKimWym:TVT2022, LiCaiLeiTuf:ICC2024}, or timing and phase synchronization parameters for \glspl{pa} \cite{Fascista25RadioStripes,KolBjoGoeLar:PIMRC2025}.


\bibliographystyle{IEEEtran}
\balance
\renewcommand{\baselinestretch}{0.98}
\bibliography{IEEEabrv,bibliography,ThisPaper}

\ifthenelse{\equal{\IEEEversion}{true}}
{
}%
{%
    \includepdf[pages=-]{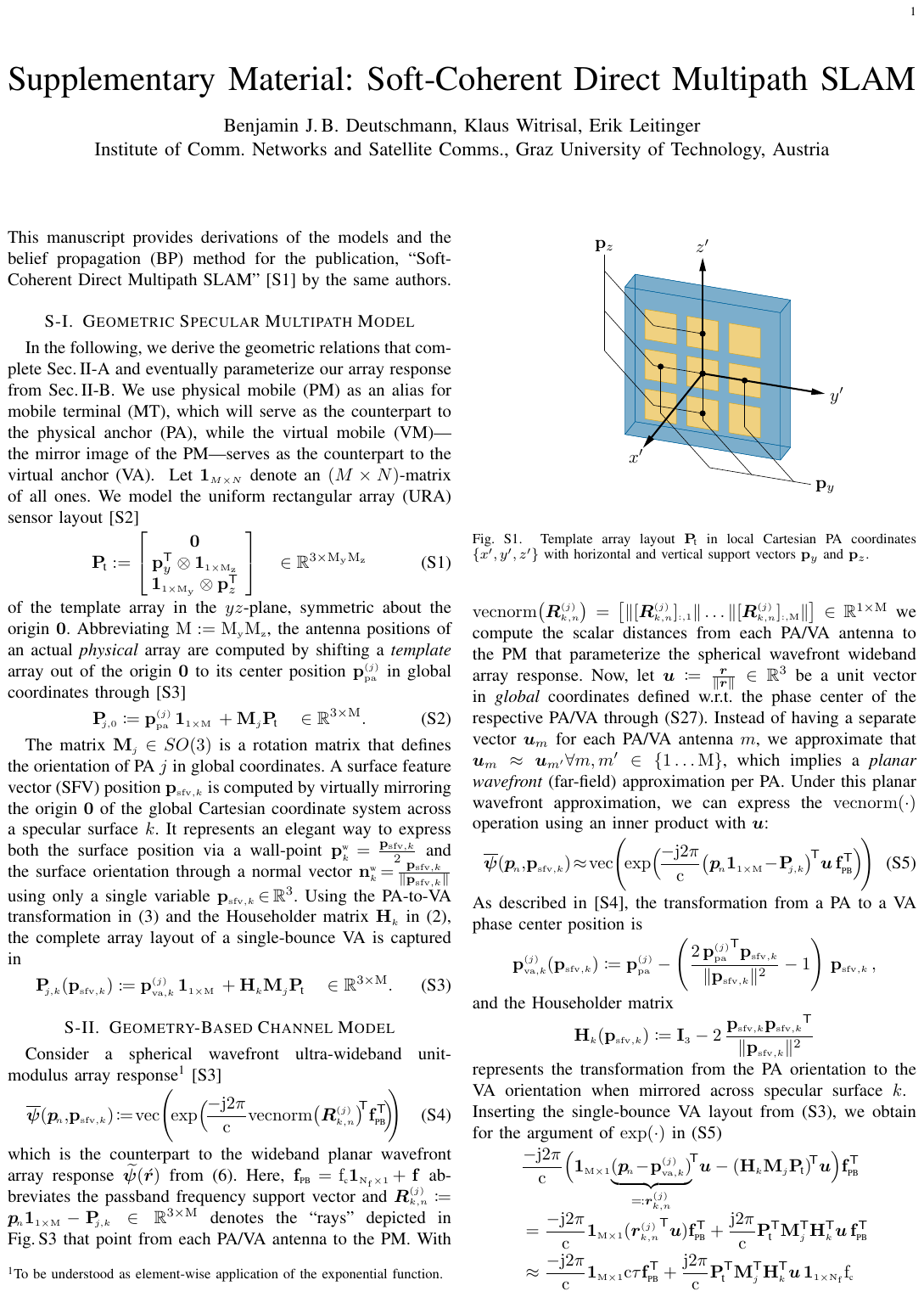}
    \includepdf[pages=-]{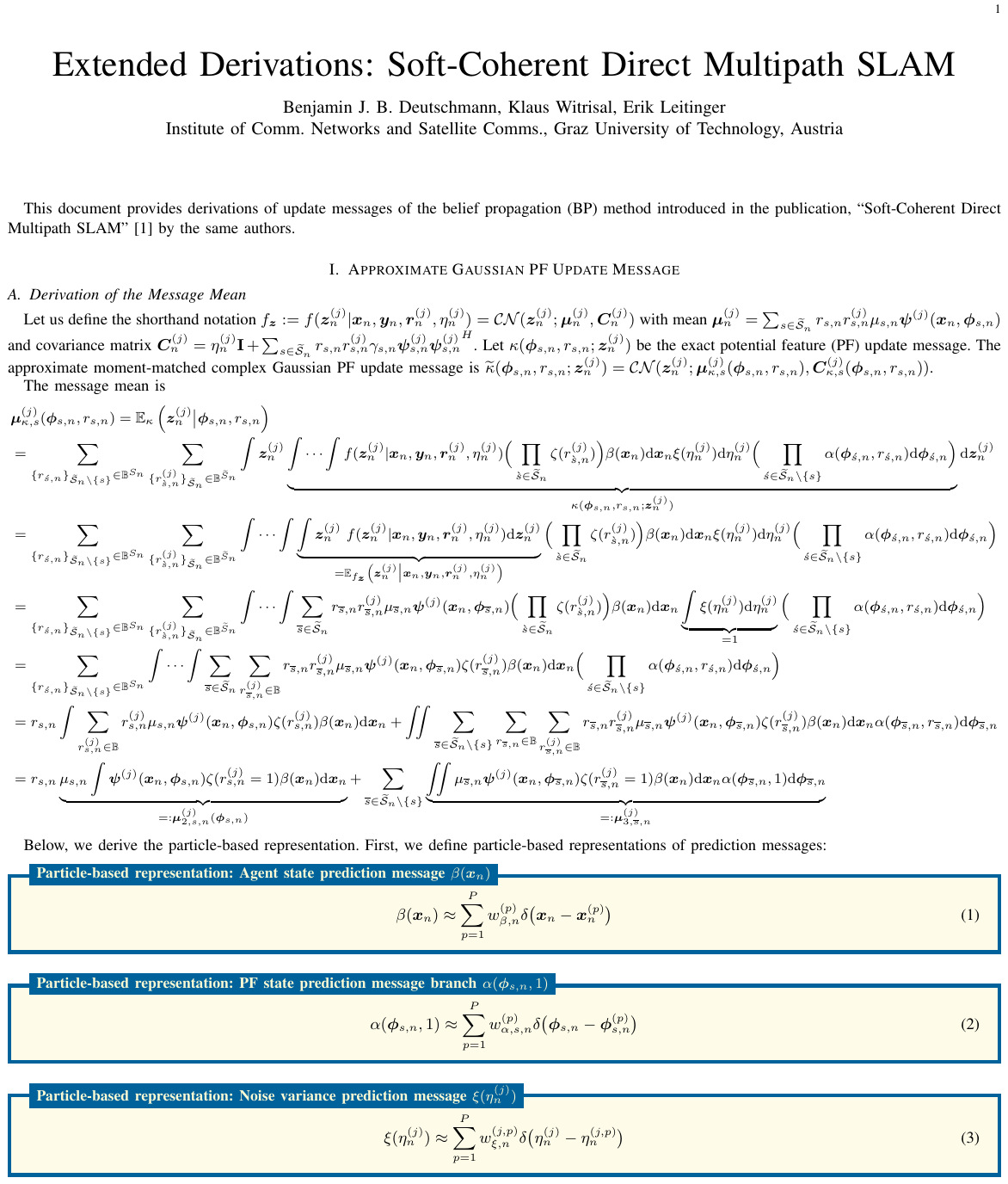}
}

\end{document}


\title{\resizebox{\textwidth}{!}{Supplementary Material: \PaperTitle}}

\author{\IEEEauthorblockN{Benjamin J.\,B. Deutschmann, Klaus Witrisal, Erik Leitinger}

\IEEEauthorblockA{Institute of Comm. Networks and Satellite Comms., Graz University of Technology, Austria}}

\maketitle

\noindent%
This manuscript provides derivations of the models and the \gls{bp} method for the publication, ``\PaperTitle''~\cite{ThisPaper} by the same authors.
%
\section{Geometric Specular Multipath Model}\label{Ssec:geometric-SMC-model}
\ifthenelse{\equal{\IEEEversion}{true}}%
{%
}{
    In the following, we derive the geometric relations that complete Sec.\,\ref{sec:geometry} and eventually parameterize our array response from Sec.\,\ref{sec:signal-model}.
    We use \gls{pm} as an alias for \gls{mt}, which will serve as the counterpart to the \gls{pa}, while the \gls{vm}---the mirror image of the \gls{pm}---serves as the counterpart to the \gls{va}.
}
Let $\onematrix{M}{N}$ denote an $(M\times N)$-matrix of all ones. 
We model the \gls{ura} sensor layout~\cite{DeutschmannCISA2025} 
\begin{align}\label{eq:Ptemplate}
    \Ptemplate := 
    \begin{bmatrixs}
        \bm{0} \\
        \mathbf{p}_y^\trp \otimes \onematrix{1}{\Nantennasz} \\
        \onematrix{1}{\Nantennasy} \otimes \mathbf{p}_z^\trp \\
    \end{bmatrixs}
    \quad \in \realset{3}{\Nantennasy\Nantennasz}
\end{align}
of the template array in the $yz$-plane, symmetric about the origin $\origin$.
Abbreviating $\Nantennas:=\Nantennasy\Nantennasz$, the antenna positions of an actual \textit{physical} array are computed by shifting a \textit{template} array out of the origin $\origin$ to its center position $\posPA{j}$ in global coordinates through~\cite{Deutschmann25OJSP}
\begin{align}\label{eq:PA-layout}
    \PlayoutDet{j}{0} \coloneqq  \posPA{j} \, \onematrix{1}{\Nantennas}\, + \rotM{j} \Ptemplate \quad \in \realset{3}{\Nantennas} .
\end{align}

The matrix $\rotM{j} \in SO(3)$ is a rotation matrix that defines the orientation of \gls{pa} $j$ in global coordinates.
A \gls{sfv} position $\psfvDet{k}$ is computed by virtually mirroring the origin $\origin$ of the global Cartesian coordinate system across a specular surface $k$.
It represents an elegant way to express both the surface position via a wall-point 
    $\pwkDet \!=\! \frac{\psfvDet{k}}{2} $ 
and the surface orientation through a normal vector
    $\nwDet \!=\! \frac{\psfvDet{k}}{\lVert \psfvDet{k} \rVert}$ 
using only a single variable $\psfvDet{k} \!\in\! \realsetone{3}$.
%
%
Using the \gls{pa}-to-\gls{va} transformation in \eqref{eq:posPA-SB} and the Householder matrix $\houseDet{k}$ in \eqref{eq:house}, 
the complete array layout of a single-bounce \gls{va} is captured in
\begin{align}\label{Seq:VA-layout-SB}
    \PlayoutDet{j}{k}(\psfvDet{k}) \coloneqq  \posVAdet{k}{j} \, \onematrix{1}{\Nantennas}\, + \houseDet{k} \rotM{j} \Ptemplate \quad \in \realset{3}{\Nantennas}.
\end{align}

\ifthenelse{\equal{\IEEEversion}{true}}
{
}
{
\section{Geometry-Based Channel Model}\label{Ssec:geometry-based-channel-model}
Consider a spherical wavefront ultra-wideband unit-modulus array response\footnote{To be understood as element-wise application of the exponential function.}~\cite{Deutschmann25OJSP}
\begin{align}\label{Seq:spherical-wavefront-response}
    \overline{\V{\psi}}(\pos{n},\!\psfvDet{k})
    \!\coloneqq\!
    \vect
    \!\Bigg(\!
        \!\exp
        \!\Big(\!
            \frac{-\mathrm{j}2 \pi}{\lightspeed}\mathrm{vecnorm}
            \big(
                \Rknj{k}{n}{j}
            \big)^{\!\!\trp}
                \fpb^\trp
        \!\Big)    
    \!\!\Bigg)
\end{align}
which is the counterpart to the wideband 
planar wavefront array response $\widetilde{\V{\psi}}(\acute{\V{r}})$ from \eqref{eq:unit-modulus-array-response}.
Here, $\fpb\!=\!\fc\onematrix{\Nfrequency}{1}\!+\mathbf{f}$ abbreviates the passband frequency support vector and 
$\Rknj{k}{n}{j}\coloneqq \pos{n} \onematrix{1}{\Nantennas}\!-\! \PlayoutDet{j}{k}\in\realset{3}{\Nantennas}$ denotes the ``rays'' depicted in Fig.\,\ref{fig:AP5NLoS} that point from each \gls{pa}/\gls{va} antenna to the \gls{pm}.
\ifthenelse{\equal{\IEEEversion}{true}}%
{%
}{%
    \begin{figure}[t]
    \begin{center}
    	\setlength{\figurewidth}{1.0\linewidth}
%
%
\definecolor{mycolor1}{rgb}{0.00000,0.38431,0.60784}%
\definecolor{mycolor2}{rgb}{0.88235,0.63922,0.00000}%
\begin{tikzpicture}

\tikzset{line cap=round,line join=round}

\pgfdeclarelayer{behindBackground}
\pgfdeclarelayer{background}
\pgfdeclarelayer{foreground}
\pgfdeclarelayer{layer1}
\pgfdeclarelayer{layer2}
\pgfdeclarelayer{layer3}
\pgfsetlayers{behindBackground,background,main,layer1,layer2,layer3,foreground}   

\begin{axis}[%
width=0.655\figurewidth,
height=0.75\figurewidth,
at={(0\figurewidth,0\figurewidth)},
scale only axis,
plot box ratio=1 1.148 1.148,
clip=false,
xmin=-0.0285616626666667,
xmax=0.1675,
tick align=outside,
ymin=-0.1,
ymax=0.125,
zmin=-0.1,
zmax=0.125,
view={110}{27},
ticks=none,         
grid=none,          
axis line style={draw=none} 
]
\begin{pgfonlayer}{foreground}
  \addplot3 [color=black,line width=1pt,arrows = {-Stealth[inset=0pt, scale=1.05, angle'=25]}]
   table[row sep=crcr] {%
  0	0	0\\
  0.1675	0	0\\
  };
   \node[align=center, inner sep=0]
  at (axis cs:0.186,0,0) {$x'$};
  \addplot3 [color=black,line width=1pt,arrows = {-Stealth[inset=0pt, scale=1.05, angle'=25]}]
   table[row sep=crcr] {%
  0	0	0\\
  0	0.125	0\\
  };
   \node[align=center, inner sep=0]
  at (axis cs:0,0.137,0) {$y'$};
  \addplot3 [color=black,line width=1pt,arrows = {-Stealth[inset=0pt, scale=1.05, angle'=25]}]
   table[row sep=crcr] {%
  0	0	0\\
  0	0	0.125\\
  };
   \node[align=center, inner sep=0]
  at (axis cs:0,0,0.137) {$z'$};
\end{pgfonlayer}

\addplot3[area legend, line width=0.2pt, draw=mycolor1, fill=mycolor1, fill opacity=0.3, line cap=round,line join=round,forget plot]
table[row sep=crcr] {%
x	y	z\\
-1.85e-05	-0.0792308639	-0.0792308639\\
-1.85e-05	0.0792308639	-0.0792308639\\
-1.85e-05	0.0792308639	0.0792308639\\
-1.85e-05	-0.0792308639	0.0792308639\\
}--cycle;

\addplot3[area legend, line width=0.2pt, draw=mycolor1, fill=mycolor1, fill opacity=0.3, line cap=round,line join=round,forget plot]
table[row sep=crcr] {%
x	y	z\\
-0.0285616626666667	-0.0792308639	-0.0792308639\\
-0.0285616626666667	0.0792308639	-0.0792308639\\
-0.0285616626666667	0.0792308639	0.0792308639\\
-0.0285616626666667	-0.0792308639	0.0792308639\\
}--cycle;

\addplot3[area legend, line width=0.2pt, draw=mycolor1, fill=mycolor1, fill opacity=0.3, line cap=round,line join=round,forget plot]
table[row sep=crcr] {%
x	y	z\\
-0.0285616626666667	-0.0792308639	-0.0792308639\\
-0.0285616626666667	0.0792308639	-0.0792308639\\
-1.85e-05	0.0792308639	-0.0792308639\\
-1.85e-05	-0.0792308639	-0.0792308639\\
}--cycle;

\addplot3[area legend, line width=0.2pt, draw=mycolor1, fill=mycolor1, fill opacity=0.3, line cap=round,line join=round,forget plot]
table[row sep=crcr] {%
x	y	z\\
-0.0285616626666667	-0.0792308639	0.0792308639\\
-0.0285616626666667	0.0792308639	0.0792308639\\
-1.85e-05	0.0792308639	0.0792308639\\
-1.85e-05	-0.0792308639	0.0792308639\\
}--cycle;

\addplot3[area legend, line width=0.2pt, draw=mycolor1, fill=mycolor1, fill opacity=0.3, line cap=round,line join=round,forget plot]
table[row sep=crcr] {%
x	y	z\\
-0.0285616626666667	-0.0792308639	-0.0792308639\\
-0.0285616626666667	-0.0792308639	0.0792308639\\
-1.85e-05	-0.0792308639	0.0792308639\\
-1.85e-05	-0.0792308639	-0.0792308639\\
}--cycle;

\addplot3[area legend, line width=0.2pt, draw=mycolor1, fill=mycolor1, fill opacity=0.3, line cap=round,line join=round,forget plot]
table[row sep=crcr] {%
x	y	z\\
-0.0285616626666667	0.0792308639	-0.0792308639\\
-0.0285616626666667	0.0792308639	0.0792308639\\
-1.85e-05	0.0792308639	0.0792308639\\
-1.85e-05	0.0792308639	-0.0792308639\\
}--cycle;

\addplot3[area legend, line width=0.2pt, draw=mycolor2, fill=white!50!mycolor2, line cap=round,line join=round,forget plot]
table[row sep=crcr] {%
x	y	z\\
0	-0.05888780425	-0.05888780425\\
0	-0.02676718375	-0.05888780425\\
0	-0.02676718375	-0.02676718375\\
0	-0.05888780425	-0.02676718375\\
}--cycle;

\addplot3[area legend, line width=0.2pt, draw=mycolor2, fill=white!50!mycolor2, line cap=round,line join=round,forget plot]
table[row sep=crcr] {%
x	y	z\\
0	-0.05888780425	-0.01606031025\\
0	-0.02676718375	-0.01606031025\\
0	-0.02676718375	0.01606031025\\
0	-0.05888780425	0.01606031025\\
}--cycle;

\addplot3[area legend, line width=0.2pt, draw=mycolor2, fill=white!50!mycolor2, line cap=round,line join=round,forget plot]
table[row sep=crcr] {%
x	y	z\\
0	-0.05888780425	0.02676718375\\
0	-0.02676718375	0.02676718375\\
0	-0.02676718375	0.05888780425\\
0	-0.05888780425	0.05888780425\\
}--cycle;

\addplot3[area legend, line width=0.2pt, draw=mycolor2, fill=white!50!mycolor2, line cap=round,line join=round,forget plot]
table[row sep=crcr] {%
x	y	z\\
0	-0.01606031025	-0.05888780425\\
0	0.01606031025	-0.05888780425\\
0	0.01606031025	-0.02676718375\\
0	-0.01606031025	-0.02676718375\\
}--cycle;

\addplot3[area legend, line width=0.2pt, draw=mycolor2, fill=white!50!mycolor2, line cap=round,line join=round,forget plot]
table[row sep=crcr] {%
x	y	z\\
0	-0.01606031025	-0.01606031025\\
0	0.01606031025	-0.01606031025\\
0	0.01606031025	0.01606031025\\
0	-0.01606031025	0.01606031025\\
}--cycle;

\addplot3[area legend, line width=0.2pt, draw=mycolor2, fill=white!50!mycolor2, line cap=round,line join=round,forget plot]
table[row sep=crcr] {%
x	y	z\\
0	-0.01606031025	0.02676718375\\
0	0.01606031025	0.02676718375\\
0	0.01606031025	0.05888780425\\
0	-0.01606031025	0.05888780425\\
}--cycle;

\addplot3[area legend, line width=0.2pt, draw=mycolor2, fill=white!50!mycolor2, line cap=round,line join=round,forget plot]
table[row sep=crcr] {%
x	y	z\\
0	0.02676718375	-0.05888780425\\
0	0.05888780425	-0.05888780425\\
0	0.05888780425	-0.02676718375\\
0	0.02676718375	-0.02676718375\\
}--cycle;

\addplot3[area legend, line width=0.2pt, draw=mycolor2, fill=white!50!mycolor2, line cap=round,line join=round,forget plot]
table[row sep=crcr] {%
x	y	z\\
0	0.02676718375	-0.01606031025\\
0	0.05888780425	-0.01606031025\\
0	0.05888780425	0.01606031025\\
0	0.02676718375	0.01606031025\\
}--cycle;

\addplot3[area legend, line width=0.2pt, draw=mycolor2, fill=white!50!mycolor2, line cap=round,line join=round,forget plot]
table[row sep=crcr] {%
x	y	z\\
0	0.02676718375	0.02676718375\\
0	0.05888780425	0.02676718375\\
0	0.05888780425	0.05888780425\\
0	0.02676718375	0.05888780425\\
}--cycle;
\addplot3 [color=black, line width=0.3pt]
 table[row sep=crcr] {%
0	-0.042827494	0\\
0	-0.042827494	-0.05\\
};
 \addplot3 [color=black, line width=0.3pt]
 table[row sep=crcr] {%
0	-0.042827494	-0.05\\
0	0.007172506	-0.1\\
};
 \addplot3 [color=black, line width=0.3pt]
 table[row sep=crcr] {%
0	0	0\\
0	0	-0.05\\
};
 \addplot3 [color=black, line width=0.3pt]
 table[row sep=crcr] {%
0	0	-0.05\\
0	0.05	-0.1\\
};
 \addplot3 [color=black, line width=0.3pt]
 table[row sep=crcr] {%
0	0.042827494	0\\
0	0.042827494	-0.05\\
};
 \addplot3 [color=black, line width=0.3pt]
 table[row sep=crcr] {%
0	0.042827494	-0.05\\
0	0.092827494	-0.1\\
};
 \addplot3 [color=black, line width=0.3pt]
 table[row sep=crcr] {%
0	0.007172506	-0.1\\
0	0.110327494	-0.1\\
};
 \node[align=center, inner sep=0]
at (axis cs:0,0.125,-0.1) {$\mathbf{p}_y$};
\addplot3 [color=black, only marks, mark size=1.5pt, mark=*, mark options={solid, black}]
 table[row sep=crcr] {%
0	0	-0.042827494\\
0	0	0\\
0	0	0.042827494\\
};
 \addplot3 [color=black, line width=0.3pt]
 table[row sep=crcr] {%
0	0	-0.042827494\\
0	-0.05	-0.04282744\\
};
 \addplot3 [color=black, line width=0.3pt]
 table[row sep=crcr] {%
0	-0.05	-0.042827494\\
0	-0.1	0.007172506\\
};
 \addplot3 [color=black, line width=0.3pt]
 table[row sep=crcr] {%
0	0	0\\
0	-0.05	0\\
};
 \addplot3 [color=black, line width=0.3pt]
 table[row sep=crcr] {%
0	-0.05	0\\
0	-0.1	0.05\\
};
 \addplot3 [color=black, line width=0.3pt]
 table[row sep=crcr] {%
0	0	0.042827494\\
0	-0.05	0.042827494\\
};
 \addplot3 [color=black, line width=0.3pt]
 table[row sep=crcr] {%
0	-0.05	0.042827494\\
0	-0.1	0.092827494\\
};
 \addplot3 [color=black, line width=0.3pt]
 table[row sep=crcr] {%
0	-0.1	0.007172506\\
0	-0.1	0.110327494\\
};
 \node[align=center, inner sep=0]
at (axis cs:0,-0.1,0.12) {$\mathbf{p}_z$};
\addplot3 [color=black, only marks, mark size=1.5pt, mark=*, mark options={solid, black}]
 table[row sep=crcr] {%
0	-0.042827494	0\\
0	0	0\\
0	0.042827494	0\\
};
 \end{axis}

\end{tikzpicture}%
    \end{center}
    \vspace{-1.2cm}
    \captionof{figure}{Template array layout $\Ptemplate$ in local Cartesian \gls{pa} coordinates $\{x',y',z'\}$ with horizontal and vertical support vectors $\mathbf{p}_y$ and $\mathbf{p}_z$.}
    \label{fig:AP0}
    \vspace{-1mm}
    \end{figure}%
}%
With
$\mathrm{vecnorm}
\big(
    \Rknj{k}{n}{j}
\big) = \big[
    \lVert [\Rknj{k}{n}{j}]_{\scriptscriptstyle:,1} \rVert
    \hdots
    \lVert [\Rknj{k}{n}{j}]_{\scriptscriptstyle:,\Nantennas} \rVert
    \big]\in\realset{1}{\Nantennas}$
we compute the scalar distances from each \gls{pa}/\gls{va} antenna to the \gls{pm} that parameterize the spherical wavefront wideband array response.
Now, let 
$\V{u}\coloneqq\frac{{\V{r}}}{\lVert {\V{r}}\rVert}\in\realsetone{3}$ 
be a unit vector in \textit{global} coordinates defined w.r.t. the phase center of the respective \gls{pa}/\gls{va} through \eqref{eq:range}.
Instead of having a separate vector $\V{u}_m$ for each \gls{pa}/\gls{va} antenna $m$, we approximate that 
$\V{u}_{m}\approx \V{u}_{m'} \forall m,m' \in \{1 \hdots \Nantennas\}$, 
which implies a \textit{planar wavefront} (far-field) approximation per \gls{pa}.
Under this planar wavefront approximation, we can express the $\mathrm{vecnorm}(\cdot)$ operation using an inner product with $\V{u}$:
\begin{align}\label{Seq:planar-wavefront-response}
    \overline{\V{\psi}}(\pos{n},\!\psfvDet{k})
    \!\approx\!
    \vect
    \!\Bigg(\!
        \!\exp
        \!\Big(\!
            \frac{-\mathrm{j}2 \pi}{\lightspeed}
            \big(
                \pos{n} \onematrix{1}{\Nantennas}\!-\! \PlayoutDet{j}{k}
            \big)^{\!\trp}
            \V{u}\,
            \fpb^\trp
        \Big)    
    \!\Bigg)
\end{align}%
\ifthenelse{\equal{\IEEEversion}{true}}%
{%
}{%
    As described in~\cite{Leitinger23mvaSLAM}, the transformation from a \gls{pa} to a \gls{va} phase center position is 
    \begin{align*}
        \posVAdet{k}{j}(\psfvDet{k}) \coloneqq \posPA{j} -\left( 
            \frac{2 \, {\posPA{j}}^\trp \psfvDet{k}}{\lVert \psfvDet{k} \rVert^2} - 1
        \right) \, \psfvDet{k} \, ,
    \end{align*}
    and the Householder matrix
    \begin{align*}
        \houseDet{k}(\psfvDet{k}) \coloneqq \eye{3} - 2 \, \frac{{\psfvDet{k}} {\psfvDet{k}}^\trp}{\lVert \psfvDet{k} \rVert^2}
    \end{align*}
    represents the transformation from the \gls{pa} orientation to the \gls{va} orientation when mirrored across specular surface $k$.
    \begin{figure}
        \begin{center}
        	\setlength{\figurewidth}{1.0\linewidth}
            \input{./figures/geometricModel/AP4}
        \end{center}
        \vspace{-1cm}
        \captionof{figure}{\Gls{pa} layout $\PlayoutDet{j}{0}$ and \gls{va} layout $\PlayoutDet{j}{k}$ in global Cartesian coordinates $\{x,y,z\}$. Planar wavefront model.}
        \label{fig:AP4}
    \vspace{-1mm}
    \end{figure}
    \begin{figure}
        \begin{center}
        	\setlength{\figurewidth}{1.0\linewidth}
            \input{./figures/geometricModel/AP5NLoS}
        \end{center}
        \vspace{-1cm}
        \captionof{figure}{Template \gls{pa} layout $\PlayoutDet{j}{0}$ and \gls{va} layout $\PlayoutDet{j}{k}$ in global Cartesian coordinates $\{x,y,z\}$. Spherical wavefront model.}
        \label{fig:AP5NLoS}
    \vspace{-1mm}
    \end{figure}
}%
Inserting the single-bounce \gls{va} layout from~\eqref{Seq:VA-layout-SB}, 
we obtain for the argument of $\exp(\cdot)$ in \eqref{Seq:planar-wavefront-response} 
\begingroup             
\allowdisplaybreaks[4]  
\begin{align*}
    &\frac{-\mathrm{j}2 \pi}{\lightspeed}
    \Big(
        \onematrix{\Nantennas}{1}
        {\underbrace{
        (\pos{n} \!-\! \posVAdet{k}{j})
        }_{\eqqcolon \range{k}{n}{j}}}^{\!\trp}
        \V{u}
        -
        (\houseDet{k} \rotM{j} \Ptemplate)^{\!\trp}
        \V{u}
    \Big)
    \fpb^\trp
    \\[-3pt]
    &=\frac{-\mathrm{j}2 \pi}{\lightspeed}
        \onematrix{\Nantennas}{1}
        ({\range{k}{n}{j}}^\trp
        \V{u})
        \fpb^\trp
        +
        \frac{\mathrm{j}2 \pi}{\lightspeed}
        \Ptemplate^\trp \rotM{j}^\trp \houseDet{k}^\trp
        \V{u}\,
    \fpb^\trp
    \\
    &\approx
    \frac{-\mathrm{j}2 \pi}{\lightspeed}
        \onematrix{\Nantennas}{1}
        \lightspeed \tau
        \fpb^\trp
        +
        \frac{\mathrm{j}2 \pi}{\lightspeed}
        \Ptemplate^\trp \rotM{j}^\trp \houseDet{k}^\trp
        \V{u}\,
        \onematrix{1}{\Nfrequency}\fc
\end{align*}
\endgroup
where the approximation $\fpb^\trp\approx\onematrix{1}{\Nfrequency}\fc$ in the right term of the last line affects only the spatial response and assumes that the array has an approximately constant electrical aperture (measured in wavelengths squared) across the entire bandwidth.
%
Reinserting, we can reformulate 
$\overline{\V{\psi}}(\pos{n},\!\psfvDet{k})$
as an outer product of the temporal and spatial responses\footnote{For simplicity, we assume a unit-spectrum $S(f)\!=\!1$ over the sampled baseband.}
\begingroup             
\allowdisplaybreaks[4]  
\begin{align}\label{eq:response-outer-product}
    \overline{\V{\psi}}(\pos{n},\!\psfvDet{k})
    \!\approx\!
    \vect
    \!\Bigg(\!
        \underbrace{
        \!\exp
        \!\Big(\!
            \overbrace{
            \frac{\mathrm{j}2 \pi}{\lambda}
            \Ptemplate^\trp \rotM{j}^\trp \houseDet{k}^\trp
            \V{u}
            }^{(\Nantennas\times 1)}
        \!\Big) 
        }_{= \bm{a}(\V{u})}
        ~~~~~
        \nn\\[-4pt]
        \times 
        \underbrace{
        \exp
        \!\Big(\!
            \overbrace{
            -\!\mathrm{j}2 \pi
            \tau
            \mathbf{f}^\trp
            }^{(1 \times \Nfrequency)}
        \!\Big) 
        }_{\eqqcolon \bm{b}^\trp\!(\delay)}
        \!\Bigg)
        \times
        \underbrace{
        \exp
        \!\Big(\!
            -\!\mathrm{j}2 \pi
            \tau
            \fc
        \!\Big) 
        }_{\text{carrier phase term}}
    \nn
    \\
    =
    \bm{b}(\delay) \otimes \bm{a}(\bm{u}) \times 
    \exp \!\Big(\!\!-\!\mathrm{j}\frac{2\pi}{\lightspeed}\fc 
        \lVert \acute{\V{r}}\rVert 
        \Big) \,,
\end{align}
\endgroup
where we have used the identity $\vect(\V{A}\V{B})=\V{B}^\trp \otimes \V{A}$.
%
To arrive at our planar wavefront array response from \eqref{eq:unit-modulus-array-response},
the final step that remains to show is that the spatial array response $\bm{a}(\elevation,\azimuth)$  corresponds to the first term $\bm{a}(\bm{u})$ in \eqref{eq:response-outer-product}.
%
Now, let 
$\acute{\V{u}}\coloneqq\frac{\acute{\V{r}}}{\lVert \acute{\V{r}}\rVert}$ 
be a unit vector that is parameterized through directional cosines (cf.\,\cite[p.235 f.]{VanTrees2002optimumASP})
\begin{align}\label{eq:directional-cosines}
    \acute{\V{u}}(\elevation,\azimuth)
    =
    \begin{bmatrix}
        \acute{u}_x \\ \acute{u}_y \\ \acute{u}_z
    \end{bmatrix}
    =
    \begin{bmatrix}
        \sin \elevation \cos \azimuth \\ 
        \sin \elevation \sin \azimuth  \\ 
        \cos \elevation
    \end{bmatrix}
\end{align}
in \textit{local} spherical \gls{pa}/\gls{va} coordinates defined w.r.t. the phase center of the respective \gls{pa}/\gls{va}.
We reformulate $\bm{a}(\bm{u})$ in terms of $\acute{\V{u}}$ by exploiting that\footnote{Note that this corresponds to the inverse mapping from \eqref{Seq:VA-layout-SB}, where $(\houseDet{k} \rotM{j})^{-1}=(\houseDet{k} \rotM{j})^{\trp}$ due to the orthogonality of both $\houseDet{k}$ and $\rotM{j}$.} $\acute{\V{u}}= \rotM{j}^\trp \houseDet{k}^\trp\V{u}$ which gives
\begin{align}
    \bm{a}(\acute{\V{u}}) &= 
        \!\exp
        \!\Big(
            \frac{\mathrm{j}2 \pi}{\lambda}
            \Ptemplate^\trp \rotM{j}^\trp \houseDet{k}^\trp\V{u}
        \!\Big) 
        = 
        \!\exp
        \!\Big(
            \frac{\mathrm{j}2 \pi}{\lambda}
            \Ptemplate^\trp \acute{\V{u}}
        \Big) 
        \nn\\
        &=
        \!\exp
        \!\Big(
            \frac{\mathrm{j}2 \pi}{\lambda}
            \mathbf{p}_y \otimes \onematrix{\Nantennasz}{1} 
            \acute{u}_y
            +
            \frac{\mathrm{j}2 \pi}{\lambda}
            \onematrix{\Nantennasy}{1} \otimes \mathbf{p}_z
            \acute{u}_z
        \Big) 
        \\
        &=
        \!\exp
        \!\Big(
            \frac{\mathrm{j}2 \pi}{\lambda}
            \mathbf{p}_y \otimes \onematrix{\Nantennasz}{1} 
            \acute{u}_y
        \Big) 
        \odot
        \exp
        \!\Big(
            \frac{\mathrm{j}2 \pi}{\lambda}
            \onematrix{\Nantennasy}{1} \otimes \mathbf{p}_z
            \acute{u}_z
        \Big) 
        \nn
\end{align}
using $(\V{A}\otimes \V{B})^\trp=(\V{A}^\trp\otimes \V{B}^\trp)$ and
after insertion of the template layout from \eqref{eq:Ptemplate}.
%
Now, let 
$\bm{a}_{y}(\acute{u}_y)\coloneqq \exp\big(\frac{\mathrm{j}2 \pi}{\lambda}\mathbf{p}_y  \acute{u}_y\big)$ 
and
$\bm{a}_{z}(\acute{u}_z)\coloneqq 
\exp\big(\frac{\mathrm{j}2 \pi}{\lambda}\mathbf{p}_z\acute{u}_z\big)$.
We obtain
\begin{align}
    \bm{a}(\acute{\V{u}}) 
    &= 
    \big(
        \bm{a}_{y}(\acute{u}_y)
        ~\otimes~
        \onematrix{\Nantennasz}{1} 
    \big)
    ~ \odot ~ 
    \big(
        \onematrix{\Nantennasy}{1}
        ~\otimes~
        \bm{a}_{z}(\acute{u}_z)
    \big)
    \nn\\
    &=
    \big(
        \bm{a}_{y}(\acute{u}_y)
        ~\odot~
        \onematrix{\Nantennasy}{1}
    \big)
    ~ \otimes ~ 
    \big(
        \onematrix{\Nantennasz}{1}
        ~\odot~
        \bm{a}_{z}(\acute{u}_z)
    \big)
    \nn\\
    &= 
    \bm{a}_{y}(\acute{u}_y) \otimes \bm{a}_{z}(\acute{u}_z) 
    \\
    &= 
    \exp\Big(\frac{\mathrm{j}2 \pi}{\lambda}\mathbf{p}_y  \sin \elevation \sin \azimuth\Big)
    \otimes
    \exp\Big(\frac{\mathrm{j}2 \pi}{\lambda} \mathbf{p}_z \cos \elevation\Big)
    \nn
\end{align}
using the identity
$
\big(
        \V{A}
        \otimes
        \V{B}
    \big)
    \odot 
    \big(
        \V{C}
        \otimes
        \V{D}
    \big)
    =
    \big(
        \V{A}
        \odot
        \V{C}
    \big)
     \otimes  
    \big(
        \V{B}
        \odot
        \V{D}
    \big)
$
which confirms that $\bm{a}(\acute{\V{u}}) = \bm{a}_y(\elevation,\azimuth) \otimes \bm{a}_z(\elevation)$
and that \eqref{eq:response-outer-product} indeed corresponds to the array response in \eqref{eq:unit-modulus-array-response}.

Note that extensive tests of our algorithm have shown that it operates both with 
the planar wavefront model from \eqref{eq:unit-modulus-array-response} and
the spherical wavefront model from~\eqref{Seq:spherical-wavefront-response} with indistinguishable performance.
We chose to work with the former, because it leads to a simpler \gls{pcrlb} derivation and explainable Jacobian matrices in Sec.\,\ref{Ssec:Jacobian}.
}%

\section{Particle-Based Representations of Beliefs}\label{Ssec:PR-beliefs}

We derive the \gls{pr} of the \gls{mt} state belief in \eqref{eq:belief-state} as
\begin{align}
    &\belief(\state{n}) 
    \propto \beta(\state{n}) \prod\nolimits_{\scriptscriptstyle j\in \setAnchors} \iota(\state{n};\observation{n}{j}) 
    \nn\\[-2pt]
    &\approx
    \!\sum\nolimits_{\scriptscriptstyle p=1}^{\scriptscriptstyle P}
    \!\!\weight{\beta,n}{p} \delta\big( \state{n}-\particle{\state{n}}{p} \big)
    \prod\nolimits_{\scriptscriptstyle j\in \setAnchors} 
    \widetilde{\iota}(\state{n};\observation{n}{j})
    \\[-4pt]
    &
    =
    \!\sum_{\scriptscriptstyle p=1}^{\scriptscriptstyle P}
    \underbrace{
        \weight{\beta,n}{p} 
        \prod_{\scriptscriptstyle j\in \setAnchors}\! 
        \mathcal{CN}\Big(\!\observation{n}{j};\!\muiota(\particle{\state{n}}{p}),\Ciota(\particle{\state{n}}{p})\!\Big)
    }_{= \weightt{\bm{x},n}{p}}
    \delta\big(\state{n}\!-\!\particle{\state{n}}{p} \big)
    \nn
\end{align}
where the last line 
uses the property 
$f(x)\delta(x-a) \!=\! f(a)\delta(x-a)$ of the Dirac delta function~\cite[p.\,60]{Dirac1947QuantumMechanics}.
%
We derive the \gls{pr} of the belief about the \gls{pf} states in \eqref{eq:belief-PFy} as
\begingroup             
\begin{align}
    &\belief(\PFphi{s}{n}{j},1) \propto \alpha(\PFphi{s}{n}{j},1) 
    \prod\nolimits_{\scriptscriptstyle j\in \setAnchors}
    \kappa(\PFphi{s}{n}{j},1;\observation{n}{j}) 
    \nn\\[-1pt]
    &\approx
    \!\sum\nolimits_{\scriptscriptstyle p=1}^{\scriptscriptstyle P}
    \!\!\weight{\alpha,s,n}{p}
    \delta\big( \PFphi{s}{n}{j}-\particle{\PFphi{s}{n}{j,p}}{p} \big)
    \prod\nolimits_{\scriptscriptstyle j\in \setAnchors}
    \widetilde{\kappa}(\PFphi{s}{n}{j},1;\observation{n}{j}) 
    \nn\\[-3pt]
    &
    =
    \!\sum_{\scriptscriptstyle p=1}^{\scriptscriptstyle P}
    \underbrace{
        \weight{\alpha,s,n}{p}
        \prod\nolimits_{\scriptscriptstyle j\in \setAnchors}
        \mathcal{CN}\Big(\!\observation{n}{j};\!
        \mukappa(\particle{\PFphi{s}{n}{j,p}}{p},1),
        \Ckappa(\particle{\PFphi{s}{n}{j,p}}{p},1)
        \!\Big)
    }_{= \weightt{\bm{y},s,n}{p}}
    \nn\\[-2pt]
    &~\hspace{4.7cm}~\times\delta\big(\PFphi{s}{n}{j}\!-\!\particle{\PFphi{s}{n}{j,p}}{p} \big)\,,
\end{align}
\endgroup
and the \gls{pr} of the noise variance belief in \eqref{eq:belief-eta} as
\begin{align}
    &\belief(\etan{n}^{\scriptscriptstyle(j)}) \propto \xi(\etan{n}^{\scriptscriptstyle(j)}) \nu(\etan{n}^{\scriptscriptstyle(j)};\observation{n}{j})
    \nn\\[-2pt]
    &\approx
    \!\sum\nolimits_{\scriptscriptstyle p=1}^{\scriptscriptstyle P}
    \!\!\weight{\xi,n}{j,p} \delta\big( \etan{n}^{\scriptscriptstyle(j)}-\particle{\etan{n}}{j,p} \big)
    \widetilde{\nu}(\etan{n}^{\scriptscriptstyle(j)};\observation{n}{j})
    \\[-3pt]
    &
    =
    \!\sum_{\scriptscriptstyle p=1}^{\scriptscriptstyle P}
    \underbrace{
        \weight{\xi,n}{j,p} 
        \mathcal{CN}\Big(\!\observation{n}{j};\!\munu,\Cnu(\particle{\etan{n}}{j,p}) \!\Big)
    }_{= \weightt{\eta,n}{j,p}}
    \delta\big(\etan{n}^{\scriptscriptstyle(j)}\!-\!\particle{\etan{n}}{j,p} \big).
    \nn
\end{align}

\section{
The PF Belief Normalization Constant}\label{Ssec:PR-normalization-constant}
Since we require the beliefs to be normalized to integrate and sum to one over their support, making them valid probability distributions, we know that the belief $\belief(\PFy{s}{n}{j}) \!=\! \frac{1}{\normConst{\bm{y},s,n}{}} \alpha(\PFy{s}{n}{j}) \prod_{\scriptscriptstyle j\in\setAnchors}\widetilde{\kappa}(\PFy{s}{n}{j};\observation{n}{j})$ from \eqref{eq:belief-PFy} is properly normalized using a normalization constant $\normConst{\bm{y},s,n}{}$ ensuring that 
\begin{align}\label{Seq:norm-const-general}
    \int \belief(\PFphi{s}{n}{j},1) \mathrm{d}\PFphi{s}{n}{j} 
    &+
    \int \belief(\PFphi{s}{n}{j},0) \mathrm{d}\PFphi{s}{n}{j} = 1\,.
\end{align}
%
Inserting our \gls{pr}
$\alpha(\PFphi{s}{n}{j},1)\approx \sum\nolimits_{\scriptscriptstyle p=1}^{\scriptscriptstyle P}
\weight{\alpha,s,n}{p} \delta(\PFphi{s}{n}{j}\!-\! \particle{\PFphi{s}{n}{j,p}}{p})$
for the \gls{pf} prediction message, the first term $\int \belief(\PFphi{s}{n}{j},1) \mathrm{d}\PFphi{s}{n}{j}$ becomes
\begin{align}\label{Seq:norm-const-1}
    & \int\! \belief(\!\PFphi{s}{n}{j},\!1) \mathrm{d}\PFphi{s}{n}{j}\!\approx\!\!
    \!\int\!\!\frac{\alpha(\PFphi{s}{n}{j},1)}{\normConst{\bm{y},s,n}{}}
    \prod_{\scriptscriptstyle j\in\setAnchors}
    \widetilde{\kappa}(\!\PFphi{s}{n}{j},\!1;\!\observation{n}{j}\!)
    \mathrm{d}\PFphi{s}{n}{j} 
    \nn\\[-3pt]
    &\approx\!
    \sum_{\scriptscriptstyle p=1}^{\scriptscriptstyle P}
    \frac{\weight{\alpha,s,n}{p}}{\normConst{\bm{y},s,n}{}}
    \!\int\!\!\delta(\PFphi{s}{n}{j}\!-\! \particle{\PFphi{s}{n}{j,p}}{p}) 
    \!\prod_{\scriptscriptstyle j\in\setAnchors}
    \widetilde{\kappa}(\PFphi{s}{n}{j},\!1;\observation{n}{j})
    \mathrm{d}\PFphi{s}{n}{j} 
     \\[-3pt]
    &=\!
    \frac{1}{\normConst{\bm{y},s,n}{}}
    \!\!\sum_{\scriptscriptstyle p=1}^{\scriptscriptstyle P}
    \underbrace{\weight{\alpha,s,n}{p}\!
    \prod_{\scriptscriptstyle j\in\setAnchors}
    \mathcal{CN}\big(\observation{n}{j};\!\mukappa(\particle{\PFphi{s}{n}{j,p}}{p},1),\!\Ckappa(\particle{\PFphi{s}{n}{j,p}}{p},1)\big)
    }_{= \, \weightt{\bm{y},s,n}{p}}
    \nn 
\end{align}
while the second term becomes
\begingroup             
\allowdisplaybreaks[4]
\begin{align}\label{Seq:norm-const-0}
    & \int\!\belief(\!\PFphi{s}{n}{j},\!0) \mathrm{d}\PFphi{s}{n}{j}
    \!\approx\!\!
    \frac{1}{\normConst{\bm{y},s,n}{}}
    \!\int\!\!
    \alpha(\PFphi{s}{n}{j},0)
    \prod_{\scriptscriptstyle j\in\setAnchors}
    \underbrace{
        \widetilde{\kappa}(\!\PFphi{s}{n}{j},\!0;\!\observation{n}{j}\!)
    }_{\text{const. in\,}\PFphi{s}{n}{j}}
    \mathrm{d}\PFphi{s}{n}{j} 
    \nn \\[-7pt]
    &=
    \frac{1}{\normConst{\bm{y},s,n}{}}
    \Big(\prod_{\scriptscriptstyle j\in\setAnchors}
    \widetilde{\kappa}(\!\PFphi{s}{n}{j},\!0;\!\observation{n}{j}\!)
    \Big)
    \!\int\!\!
    \alpha(\PFphi{s}{n}{j},0)
    \mathrm{d}\PFphi{s}{n}{j} 
    \nn \\[-1pt]
    &=
    \frac{1}{\normConst{\bm{y},s,n}{}}
    \Big(\prod_{\scriptscriptstyle j\in\setAnchors}
    \widetilde{\kappa}(\!\PFphi{s}{n}{j},\!0;\!\observation{n}{j}\!)
    \Big)
    \Big(1 - \int\!\alpha(\PFphi{s}{n}{j},1) \mathrm{d}\PFphi{s}{n}{j} \Big)
    \nn\\[-3pt]
    &\approx
    \frac{1}{\normConst{\bm{y},s,n}{}}
    \Big(\prod_{\scriptscriptstyle j\in\setAnchors}
    \widetilde{\kappa}(\!\PFphi{s}{n}{j},\!0;\!\observation{n}{j}\!)
    \Big)
    \Big(1 - \sum_{\scriptscriptstyle p=1}^{\scriptscriptstyle P} \weight{\alpha,s,n}{p}  \Big)
\end{align}
\endgroup
using the marginal prior existence probability
$p(\PFr{s}{n}{j}\!=\!0|\observationn{1:n-1})\!=\! 1-p(\PFr{s}{n}{j}\!=\!1|\observationn{1:n-1})
\approx 1\!-\!\int\!\alpha(\PFphi{s}{n}{j},1) \mathrm{d}\PFphi{s}{n}{j}
\!\approx \!
1\!-\!\sum_{\scriptscriptstyle p=1}^{\scriptscriptstyle P} \weight{\alpha,s,n}{p} $.
%
Reinserting \eqref{Seq:norm-const-1} and \eqref{Seq:norm-const-0} into \eqref{Seq:norm-const-general}, we obtain the normalization constant $\normConst{\bm{y},s,n}{}$ 
as
\begin{align}
    &1 = \frac{\sum_{\scriptscriptstyle p=1}^{\scriptscriptstyle P}
    \weightt{\bm{y},s,n}{p}\!}{\normConst{\bm{y},s,n}{}}
    \!+\!
    \frac{1 \!-\!\! \sum_{\scriptscriptstyle p=1}^{\scriptscriptstyle P} \weight{\alpha,s,n}{p} }{\normConst{\bm{y},s,n}{}}
    \prod_{\scriptscriptstyle j\in\setAnchors}
    \widetilde{\kappa}(\!\PFphi{s}{n}{j},\!0;\!\observation{n}{j}\!)
    \\[-3pt]
    &\normConst{\bm{y},s,n}{} 
    = 
    \Big(\!
    \sum_{\scriptscriptstyle p=1}^{\scriptscriptstyle P}
    \weightt{\bm{y},s,n}{p}\!
    \Big)
    \!+\!
    \Big(1 \!-\! \sum_{\scriptscriptstyle p=1}^{\scriptscriptstyle P} \weight{\alpha,s,n}{p} \!\Big)
    \prod_{\scriptscriptstyle j\in\setAnchors}
    \widetilde{\kappa}(\!\PFphi{s}{n}{j},\!0;\!\observation{n}{j}\!)\,.
    \nn
\end{align}
The normalization constant of the \gls{ppr} belief in \eqref{eq:PR-normalization} is derived analogously.

\section{Efficient Implementation}\label{Ssec:efficient-implementation}
Particle-based evaluations of approximate Gaussian messages
$\widetilde{\iota}(\particle{\state{n}}{p};\observation{n}{j})$,
$\widetilde{\kappa}(\particle{\PFphi{s}{n}{j,p}}{p},\PFr{s}{n}{};\observation{n}{j})$, 
$\widetilde{\omega}(\PRr{s}{n}{j};\observation{n}{j})$, 
and
$\widetilde{\nu}(\particle{\etan{n}}{j,p};\observation{n}{j})$ demand the computation of determinants and inverses of 
$\Ciota(\particle{\state{n}}{p})$,
$\Ckappa(\particle{\PFphi{s}{n}{j,p}}{p},\PFr{s}{n}{j})$, 
$\Comega(\PRr{s}{n}{j})$, 
and
$\Cnu(\particle{\etan{n}}{j,p})$, which are of complexity $\mathcal{O}(\Nz^3)$.
We seek an approximation that reduces inverses and determinants from matrices of dimension $\Nz$, the observation length, to matrices of dimension $\Nfeaturest{n}$, the number of features, resulting in 
$\mathcal{O}(\Nfeaturest{n}^3)$.
For this purpose, we use the alternative approximation from \cite{LiaLeiMey:TSP2025-SuppDoc}, leading to the algorithm ``direct-\acrshort{slam}-fast''.
The idea is to find low-rank approximations of the covariance matrices \eqref{eq:Ciota}--\eqref{eq:Comega}.
%
For brevity, we present only the following \glspl{pr} of update message covariance matrices introduced at the beginning of the following subsections, detailed derivations of which are provided in the Extended Derivations~\cite{ThisPaperDerivations}.

For brevity, we define the approximate marginal prior nonexistence probabilities 
$\dot{p}_{s,n}^{(j)}\!\coloneqq\!1\!-\!\zeta(\PRr{s}{n}{j}\!=\!1)$,
$\dot{p}_{s,n}\!\coloneqq\!1\!-\!\sum_{\scriptscriptstyle p=1}^{P} \weight{\alpha,s,n}{p}$,
and
$\ddot{p}_{s,n}^{(j)}\!\coloneqq\!1-\zeta(\PRr{s}{n}{j}=1)\sum_{\scriptscriptstyle p=1}^{\scriptscriptstyle P} \weight{\alpha,s,n}{p}$,
and the \gls{pr} of the noise variance prior mean $\etaXit{n}\!\coloneqq\!\sum_{\scriptscriptstyle p=1}^{\scriptscriptstyle P} \weight{\xi,n}{j,p} \particle{\etan{n}}{j,p}$.

\subsection{\gls{pf} State Update Message}
A \gls{pr} of the \gls{pf} covariance matrix
$\Ckappa\!(\particle{\PFphi{s}{n}{j,p}}{p},\!\PFr{s}{n}{j})$
is
\begin{align*}
    \PFr{s}{n}{j}
    q_{\scriptscriptstyle s,n}^{\scriptscriptstyle (j,p)} 
    \steerVecx{s}{n}{j,p}
    {\steerVecx{s}{n}{j,p}}^\herm
    \!\!+\!
    \overbrace{\eye{\Nz} \etaXit{n}
    \!+\!
    \SMkappa{\SMkappa}^\herm
    }^{\eqqcolon \bm{A}_{\scriptscriptstyle s,n}^{\scriptscriptstyle\kappa(j)}}
\end{align*}
with $q_{\scriptscriptstyle s,n}^{\scriptscriptstyle (j,p)}\!\coloneqq\!\big(
            \particle{\PFgamma{s}{n}{}}{p}
            \!+\!
            |\particle{\PFmu{s}{n}{}}{p}|^2
            \dot{p}_{s,n}^{(j)}
        \big)
        \Mzeta{s}{j}(1)$
and
$\SMkappa\!=\!\big[ 
\mvec{0,n}{(j)} \cdots 
\mvec{s-1,n}{(j)},
\mvec{s+1,n}{(j)} \cdots 
\mvec{S_n,n}{(j)}
\big]\!\!\in\!\!\complexset{\Nz}{\Nfeatures{n}}$
using 
$\mvec{s,n}{(j)}\!\coloneqq\!
        \Big(\!\sum_{\scriptscriptstyle\pA=1}^{\scriptscriptstyle P} \!\weight{\alpha,s\!,n}{\pA}\!\Big)\!
        \sum_{\scriptscriptstyle\pB=1}^{P} 
        \!\weight{\beta,n}{\pB}
            \sqrt{
            \particle{\PFgamma{s\!}{n}{}}{\pB} \!+\! |\particle{\PFmu{s\!}{n}{\pB}}{\pB}|^2\ddot{p}_{s,n}^{(j)}}
        \steerVecx{s}{n}{j,\pB}
        \Mzeta{s}{j}\!(1)$.
%
The 
matrix inversion lemma~\cite[eq.\,(156)]{Cookbook} 
yields
\begin{align*}
    {\Ckappa}^{\!-1}=
    {\bm{A}_{\scriptscriptstyle s,n}^{\scriptscriptstyle\kappa(j)}}^{-1}
    -
    \frac{
        \PFr{s}{n}{j} 
        q_{\scriptscriptstyle s,n}^{\scriptscriptstyle (j,p)} 
        {\bm{A}_{\scriptscriptstyle s,n}^{\scriptscriptstyle\kappa(j)}}^{-1} 
        \steerVecx{s}{n}{j,p} 
        {\steerVecx{s}{n}{j,p}}^\herm
        {\bm{A}_{\scriptscriptstyle s,n}^{\scriptscriptstyle\kappa(j)}}^{-1} 
    }{
    1 + \PFr{s}{n}{j} 
        q_{\scriptscriptstyle s,n}^{\scriptscriptstyle (j,p)} 
        {\steerVecx{s}{n}{j,p}}^\herm
        {\bm{A}_{\scriptscriptstyle s,n}^{\scriptscriptstyle\kappa(j)}}^{-1}
        {\steerVecx{s}{n}{j,p}}
    }\,,
\end{align*}
and the generalized determinant lemma~\cite[p.\,420]{harville2008matrixAlgebra}
gives
\begin{align*}
    \det (\Ckappa)\!=\!
    \big|
        1
        \!+\!
        \PFr{s}{n}{j} 
        q_{\scriptscriptstyle s,n}^{\scriptscriptstyle (j,p)} 
        {\steerVecx{s}{n}{j,p}}^\herm\!\!
        {\bm{A}_{\scriptscriptstyle s,n}^{\scriptscriptstyle\kappa(j)^{-1}}}\!
        \steerVecx{s}{n}{j,p}
    \big|
    \det (
        {\bm{A}_{\scriptscriptstyle s,n}^{\scriptscriptstyle\kappa(j)}}
    ).
\end{align*}
%
Now, we apply the inversion lemma~\cite[eq.\,(159)]{Cookbook} again
\begin{align*}
    {\bm{A}_{\scriptscriptstyle s,n}^{\scriptscriptstyle\kappa(j)}}^{\scriptscriptstyle\!-1}
    \!\!\!&=\!\!
    \frac{\eye{\Nz}}{\etaXit{n}}
    \!-\!
    \frac{
    \SMkappa
    \!(
    \eye{\Nfeatures{n}}
    \!\!+\!
    \left.{\etaXit{n}}\right.^{\!\!\scriptscriptstyle-1}{\SMkappa}^\herm\!\SMkappa
    \!)^{\scriptscriptstyle\!-1}
    \!{\SMkappa}^\herm\!
    }{\left.{\etaXit{n}}\right.^{\!\!\scriptscriptstyle2}}
\end{align*}
and evaluate expressions $\V{a}^\herm{\bm{A}_{\scriptscriptstyle s,n}^{\scriptscriptstyle\kappa(j)}}^{\!-1}\V{b}$ with $\V{a},\V{b}\in\complexsetone{\Nz}$ as
\begin{align}\label{Seq:Maha-expression}
    &\frac{\V{a}^\herm\V{b}}{\etaXit{n}}
    \!-\!
    \frac{\V{a}^\herm\SMkappa
    \!(
    \eye{\Nfeatures{n}}
    \!\!+\!
    \left.{\etaXit{n}}\right.^{\!\!\scriptscriptstyle-1}{{\SMkappa}^\herm\!\SMkappa}
    \!)^{\scriptscriptstyle\!-1}
    \!{\SMkappa}^{\scriptscriptstyle\herm}\V{b}
    }{\left.{\etaXit{n}}\right.^{\!\!\scriptscriptstyle2}}
\end{align}
in an order s.t. only\footnote{Except for the matrix-matrix product forming the Gram matrix ${\SMkappa}^\herm\!\SMkappa$.} matrix-vector products are computed.
Let $\ekappa{(j,p)}\!\coloneqq\!\observation{n}{j}\!-\!\mukappa(\particle{\PFphi{s}{n}{j,p}}{p},\PFr{s}{n}{j})$ with $
\mukappa(\particle{\PFphi{s}{n}{j,p}}{p},\PFr{s}{n}{j})\!=\!\PFr{s}{n}{j} \musnj{2}(\particle{\PFphi{s}{n}{j,p}}{p}) \!+\! \sum\nolimits_{s' \in \setFeaturest{n} \setminus \{s\}} \musnjp{3}$,
the approximate complex Gaussian \gls{pf} update message evaluated at particle $\particle{\PFphi{s}{n}{j,p}}{p}$ is
\begin{align*}
    &\widetilde{\kappa}(\particle{\PFphi{s}{n}{j,p}}{p}\!,\!\PFr{s}{n}{j};\!\observation{n}{j}) 
    \!=\! 
    \mathcal{CN}\!\big(\observation{n}{j}\!;\!\mukappa\!(\!\particle{\PFphi{s}{n}{j,p}}{p}\!,\!\PFr{s}{n}{j}\!),\!\Ckappa\!(\!\particle{\PFphi{s}{n}{j,p}}{p}\!,\!\PFr{s}{n}{j}\!)\!\big)
    \nn\\
    &\!\approx\!
    \frac{
    \exp\!\!
    \left(\!\!
        \frac{
        \PFr{s}{n}{j}
        q_{\scriptscriptstyle s,n}^{\scriptscriptstyle (j,p)} 
        \big|
            {\steerVecx{s}{n}{j,p}}^{\herm}
            {\bm{A}_{\scriptscriptstyle s,n}^{\scriptscriptstyle\kappa(j)}}^{\!-1}
            \ekappa{(j,p)}
        \big|^2
        }{
        1
        +
        \PFr{s}{n}{j}
        q_{\scriptscriptstyle s,n}^{\scriptscriptstyle (j,p)} 
        {\steerVecx{s}{n}{j,p}}^{\herm}
        {\bm{A}_{\scriptscriptstyle s,n}^{\scriptscriptstyle\kappa(j)}}^{\!-1}
        \steerVecx{s}{n}{j,p}
        }
        \!-\!
        \ekappa{(j,p)\herm}
        {\bm{A}_{\scriptscriptstyle s,n}^{\scriptscriptstyle\kappa(j)}}^{\!-1}
        \ekappa{(j,p)}
    \!\!\!\right)
    }
    {
    \pi^{\Nz}
    \Big|
        1
        +
        \PFr{s}{n}{j} 
        q_{\scriptscriptstyle s,n}^{\scriptscriptstyle (j,p)} 
        {\steerVecx{s}{n}{j,p}}^\herm\!
        {\bm{A}_{\scriptscriptstyle s,n}^{\scriptscriptstyle\kappa(j)}}^{\!-1}
        \steerVecx{s}{n}{j,p}
    \Big|
    \det \!\left(
        {\bm{A}_{\scriptscriptstyle s,n}^{\scriptscriptstyle\kappa(j)}}
    \right)
    }
\end{align*}
where $\det \!\left({\bm{A}_{\scriptscriptstyle s,n}^{\scriptscriptstyle\kappa(j)}}\right)$ is never computed explicitly as it cancels in the normalization of $\weightt{\bm{y},s,n}{j,p}$. 
Evaluating terms 
$\ekappa{(j,p)\herm}
{\bm{A}_{\scriptscriptstyle s,n}^{\scriptscriptstyle\kappa(j)}}^{\!-1}
\ekappa{(j,p)}$,
${\steerVecx{s}{n}{j,p}}^{\herm}
{\bm{A}_{\scriptscriptstyle s,n}^{\scriptscriptstyle\kappa(j)}}^{\!-1}
\ekappa{(j,p)}$, and
${\steerVecx{s}{n}{j,p}}^{\herm}
{\bm{A}_{\scriptscriptstyle s,n}^{\scriptscriptstyle\kappa(j)}}^{\!-1}
\steerVecx{s}{n}{j,p}$
using \eqref{Seq:Maha-expression} has complexity 
$\mathcal{O}\big(\Nz \Nfeatures{n}\big)$ for each particle, plus the particle-independent inverse at $\mathcal{O}(\Nz\Nfeatures{n}^2 + \Nfeatures{n}^3)$ which is precomputed once.
These terms have to be computed for $\Nfeaturest{n}$ \glspl{pf}.

\subsection{PR State Update Message}
We find a \gls{pr} of the \gls{ppr} covariance matrix as
\begin{align*}
&\Comega\!(\PRr{s}{n}{j})
    \!\approx\! 
    \PRr{s}{n}{j}
    \bm{m}_{s,n}^{\omega(j)} 
    {\bm{m}_{s,n}^{\omega(j)}}^\herm
    \!\!+\!
    \overbrace{\eye{\Nz} \etaXit{n}
    \!+\!
    \SMomega{\SMomega}^\herm
    }^{\eqqcolon \bm{A}_{\scriptscriptstyle s,n}^{\scriptscriptstyle\omega(j)}}
\end{align*}
with 
$\mvec{s,n}{\omega(j)}\!\coloneqq\!
        \Big(\!\sum_{\scriptscriptstyle\pA=1}^{\scriptscriptstyle P} \!\weight{\alpha,s\!,n}{\pA}\!\Big)\!
        \sum_{\scriptscriptstyle\pB=1}^{P} 
        \!\weight{\beta,n}{\pB}
            \sqrt{
            \particle{\PFgamma{s\!}{n}{}}{\pB} \!+\! |\particle{\PFmu{s\!}{n}{\pB}}{\pB}|^2\dot{p}_{s,n}}
        \steerVecx{s}{n}{j,\pB}$
and
$\SMomega\!=\!
\SMkappa$.
%
The matrix inversion lemma 
yields
\begin{align*}
    {\Comega}^{\!-1}=
    {\bm{A}_{\scriptscriptstyle s,n}^{\scriptscriptstyle\omega(j)}}^{-1}
    -
    \frac{
        \PRr{s}{n}{j} 
        {\bm{A}_{\scriptscriptstyle s,n}^{\scriptscriptstyle\omega(j)}}^{-1} 
        \mvec{s,n}{\omega(j)}
        {\mvec{s,n}{\omega(j)}}^\herm
        {\bm{A}_{\scriptscriptstyle s,n}^{\scriptscriptstyle\omega(j)}}^{-1} 
    }{
    1 + \PRr{s}{n}{j} 
        {\mvec{s,n}{\omega(j)}}^\herm
        {\bm{A}_{\scriptscriptstyle s,n}^{\scriptscriptstyle\omega(j)}}^{-1}
        {\mvec{s,n}{\omega(j)}}
    }\,,
\end{align*}
and the generalized determinant lemma 
gives
\begin{align*}
    \det (\Comega)\!=\!
    \big|
        1
        \!+\!
        \PRr{s}{n}{j} 
        {\mvec{s,n}{\omega(j)}}^\herm\!
        {\bm{A}_{\scriptscriptstyle s,n}^{\scriptscriptstyle\omega(j)^{-1}}}\!
        \mvec{s,n}{\omega(j)}
    \big|
    \det (
        {\bm{A}_{\scriptscriptstyle s,n}^{\scriptscriptstyle\omega(j)}}
    )\,.
\end{align*}
%
Applying the inversion lemma 
again, we obtain ${\bm{A}_{\scriptscriptstyle s,n}^{\scriptscriptstyle\omega(j)}}^{\!-1}$ as
\begin{align*}
    \eye{\Nz} \left.{\etaXit{n}}\right.^{\!\!-1}
    \!\!-\!
    \left.{\etaXit{n}}\right.^{\!\!-2}\!
    \SMomega
    \!(
    \eye{\Nfeatures{n}}
    \!\!+\!
    \left.{\etaXit{n}}\right.^{\!\!-1}\!{\SMomega}^\herm\!\SMomega
    \!)^{\!-1}
    \!{\SMomega}^\herm\!
\end{align*}
and evaluate expressions $\V{a}^\herm{\bm{A}_{\scriptscriptstyle s,n}^{\scriptscriptstyle\omega(j)}}^{\!-1}\V{b}$ 
as in \eqref{Seq:Maha-expression}
in an order s.t. only matrix-vector products are computed.
Let $\eomega{(j)}\!\coloneqq\!\observation{n}{j}\!-\!\muomega(\PRr{s}{n}{j})$ with $
\muomega(\PRr{s}{n}{j})\!=\!\PRr{s}{n}{j} \musnj{4} \!+\! \sum\nolimits_{s' \in \setFeaturest{n} \setminus \{s\}} \musnjp{3}$,
the approximate complex Gaussian \gls{pr} update message evaluated is
\begin{align*}
    &\widetilde{\omega}(\PRr{s}{n}{j};\!\observation{n}{j}) 
    \!=\! 
    \mathcal{CN}\!\big(\observation{n}{j}\!;\!\muomega\!(\PRr{s}{n}{j}\!),\!\Comega\!(\PRr{s}{n}{j}\!)\!\big)
    \nn\\
    &\approx
    \frac{
    \exp\!
    \left(
        \frac{
        \PRr{s}{n}{j}
        \big|
            {\bm{m}_{s,n}^{\omega(j)}}^{\herm}
            {\bm{A}_{\scriptscriptstyle s,n}^{\scriptscriptstyle\omega(j)}}^{\!-1}
            \eomega{(j)}
        \big|^2
        }{
        1
        +
        \PRr{s}{n}{j} 
        {\bm{m}_{s,n}^{\omega(j)}}^{\herm}
        {\bm{A}_{\scriptscriptstyle s,n}^{\scriptscriptstyle\omega(j)}}^{\!-1}
        \bm{m}_{s,n}^{\omega(j)}
        }
        \!-\!
        \eomega{(j)\herm}
        {\bm{A}_{\scriptscriptstyle s,n}^{\scriptscriptstyle\omega(j)}}^{\!-1}
        \eomega{(j)}
    \right)
    }
    {
    \pi^{\Nz}
    \Big|
        1
        +
        \PRr{s}{n}{j} 
        {\bm{m}_{s,n}^{\omega(j)}}^\herm\!
        {\bm{A}_{\scriptscriptstyle s,n}^{\scriptscriptstyle\omega(j)}}^{\!-1}
        \bm{m}_{s,n}^{\omega(j)}
    \Big|
    \det \!\left(
        {\bm{A}_{\scriptscriptstyle s,n}^{\scriptscriptstyle\omega(j)}}
    \right)
    }
\end{align*}
where $\det \!\left({\bm{A}_{\scriptscriptstyle s,n}^{\scriptscriptstyle\omega(j)}}\right)$ is never computed explicitly, as it cancels in the normalization of the belief $\beliefp(r_{s,n}^{(j)}\!=\!1)$.
Forming the Gram matrix and the inverse has complexity $\mathcal{O}(\mathrm{N_z}S_n^2 +S_n^3)$.

\subsection{\gls{mt} State Update Message}%

We find a \gls{pr} of the \gls{mt} covariance matrix as 
\begin{align}
    \Ciota(\particle{\state{n}}{p}) \approx 
    \eye{\Nz}\etaXit{n} + \SMiota{\SMiota}^H
    \eqqcolon \bm{A}_{\scriptscriptstyle n}^{\scriptscriptstyle\iota(j,p)}
\end{align}
with
$\SMiota\!=\!\big[ 
\mvec{0,n}{\iota(j,p)} \cdots 
\mvec{S_n,n}{\iota(j,p)}
\big]\!\!\in\!\!\complexset{\Nz}{\Nfeaturest{n}}$
using 
$\mvec{s,n}{\iota(j,p)}\!\coloneqq\!
        \sqrt{
        \Mzeta{s}{j}\!(1)
        \big(\!\sum_{\scriptscriptstyle\pA=1}^{\scriptscriptstyle P} \!\weight{\alpha,s\!,n}{\pA}\!
        \big)
        \big(
            \particle{\PFgamma{s\!}{n}{}}{p} \!+\! |\particle{\PFmu{s\!}{n}{}}{p}|^2
            \ddot{p}_{s,n}^{(j)}
        \big)
        }
        \steerVecx{s}{n}{j,p}$.
%
%
Let
$\eiota{(j,p)}\!\coloneqq\!\observation{n}{j}\!-\!\muiota(\particle{\state{n}}{p})$ with 
$\muiota(\particle{\state{n}}{p}) 
=\sum\nolimits_{s \in \setFeaturest{n}} \musnj{1}(\particle{\state{n}}{p})$,
the approximate complex Gaussian \gls{mt} state update message evaluated at particle $\particle{\state{n}}{p}$ is 
\begin{align*}
    \widetilde{\iota}(\particle{\state{n}}{p};\!\observation{n}{j}) 
    &=
    \mathcal{CN}\!\big(\observation{n}{j};\muiota\!(\particle{\state{n}}{p}),\!\Ciota\!(\particle{\state{n}}{p})\big)
    \nn\\
    &\approx
    \frac{
    \exp\!
    \left(
        -
        \eiota{(j,p)\herm}
        {\bm{A}_{\scriptscriptstyle n}^{\scriptscriptstyle\iota(j,p)}}^{\!-1}
        \eiota{(j,p)}
    \right)
    }
    {
    \big(\pi \etaXit{n}\big)^{\Nz}
    \det \!\left(
        \eye{\Nfeaturest{n}}
        +
        \left.{\etaXit{n}}\right.^{-1}
        {\SMiota}^\herm\SMiota
    \right)
    }
\end{align*}
with 
$\eiota{(j,p)\herm}
{\bm{A}_{\scriptscriptstyle n}^{\scriptscriptstyle\iota(j,p)}}^{\!-1}
\eiota{(j,p)}$
computed as
\begin{align*}
    &\frac{\lVert\eiota{(j,p)}\rVert^2}{\etaXit{n}}
    \!-\!
    \frac{
        \Big\|
        \!\Big(
        \eye{\Nfeaturest{n}}
        \!\!+\!
        \left.{\etaXit{n}}\right.^{\!\!\scriptscriptstyle-1}\!{\SMiota}^\herm\!\SMiota
        \!\Big)^{\!\!-\frac{1}{2}}\!
        {\SMiota}^\herm\eiota{(j,p)}
        \Big\|^2
    }{\left.{\etaXit{n}}\right.^{2}}
\end{align*}
which has complexity $\mathcal{O}\big(\Nz \Nfeaturest{n}^2 \!+\! \Nfeaturest{n}^3\big)$ for each parallelized \gls{pa}-particle-pair $(j,p)$. 
Here, the inverse and determinant of complexity $\mathcal{O}(\Nfeaturest{n}^3)$ are now particle- and \gls{pa}-dependent 
but only need to be computed once for all \glspl{pf}.

\subsection{Noise Variance Update Message}
We find a \gls{pr} of the covariance matrix 
\begin{align}
    \Cnu(\particle{\etan{n}}{j,p}) \approx 
    \eye{\Nz}\particle{\etan{n}}{j,p} + \SMnu{\SMnu}^H
    \eqqcolon \bm{A}_{\scriptscriptstyle n}^{\scriptscriptstyle\nu(j,p)}
\end{align}
with
$\SMnu\!=\!\big[ 
\mvec{0,n}{(j)} \cdots 
\mvec{S_n,n}{(j)}
\big]\!\!\in\!\!\complexset{\Nz}{\Nfeaturest{n}}$.
Let 
$\enu{(j)}\!\coloneqq\!\observation{n}{j}\!-\!\munu$ with 
$\munu=\sum\nolimits_{s \in \setFeaturest{n}} \musnj{3}$,
the approximate 
noise variance update message evaluated at particle $\particle{\etan{n}}{j,p}$ is 
\begin{align*}
    \widetilde{\nu}(\particle{\etan{n}}{j,p};\!\observation{n}{j}) 
    &=
    \mathcal{CN}\!\big(\observation{n}{j};\munu,\!\Cnu\!(\particle{\etan{n}}{j,p})\big)
    \nn\\
    &\approx
    \frac{
    \exp\!
    \left(
        -
        \enu{(j)\herm}
        {\bm{A}_{\scriptscriptstyle n}^{\scriptscriptstyle\nu(j,p)}}^{\!-1}
        \enu{(j)}
    \right)
    }
    {
    \big(\pi \particle{\etan{n}}{j,p}\big)^{\Nz}
    \det \!\left(
        \eye{\Nfeaturest{n}}\!
        +
        {\particle{\etan{n}}{j,p}}^{-1}
        {\SMnu}^\herm\SMnu
    \right)
    }
\end{align*}
with 
$\enu{(j)\herm}
{\bm{A}_{\scriptscriptstyle n}^{\scriptscriptstyle\nu(j,p)}}^{\!-1}
\enu{(j)}$ 
computed as
\begin{align*}
    &~\frac{\lVert\enu{(j)}\rVert^2}{\particle{\etan{n}}{j,p}}
    \!-\!
    \frac{
        \Big\|
        \!\Big(
        \eye{\Nfeaturest{n}}
        \!\!+\!
        {\particle{\etan{n}}{j,p}}^{-1}
        {\SMnu}^\herm\SMnu
        \!\Big)^{\!\!-\frac{1}{2}}
        {\SMnu}^\herm\enu{(j)}
        \Big\|^2
    }
    {
        {\particle{\etan{n}}{j,p}}^{2}
    }
\end{align*}
which has complexity $\mathcal{O}\big(\Nz \Nfeaturest{n} \!+\! \Nfeaturest{n}^3\big)$ for each parallelized \gls{pa}-particle-pair $(j,p)$. 
Again, the inverse and determinant of complexity $\mathcal{O}(\Nfeaturest{n}^3)$ are particle- and \gls{pa}-dependent but do not need to be computed for each \gls{pf}.

\ifthenelse{\equal{\IEEEversion}{true}}
{
}{
\section{PR Existence Revival Across Time}
Our \gls{ppr}-existence model accommodates partial visibility, where rays can quickly disappear or reappear in dynamic scenarios depending on obstructions or limited surface extents.
Below, we show that even in the severe case where the old marginal posterior \gls{ppr} existence probability $\existenceProbPA{s}{n-1}{j}\ll1$ at time $n-1$, the \gls{ppr} existence can be ``revived,'' i.e.,
$\existenceProbPA{s}{n}{j} \approx 1$, at time $n$.

The 
marginal posterior \gls{ppr} existence probability is
\begin{align}
    \existenceProbPA{s}{n}{j} \!\approx\! 
    \beliefp(\PRr{s}{n}{j}\!=\!1)
    &=
    \frac{
        \Mzeta{s}{j}\!(1)\widetilde{\omega}(1;\observation{n}{j})
    }{
        \Mzeta{s}{j}\!(1)\widetilde{\omega}(1;\observation{n}{j})
        +
        (1\!-\!\Mzeta{s}{j}\!(1))\widetilde{\omega}(0;\observation{n}{j})
    }
    \nonumber \\
    &=
    \frac{
        1
    }{
        1+
        \frac{
        (1-\Mzeta{s}{j}\!(1))
        }
        {
        \Mzeta{s}{j}\!(1)
        }
        \frac{
        \widetilde{\omega}(0;\observation{n}{j})
        }
        {
        \widetilde{\omega}(1;\observation{n}{j})
        }
    }
    \nonumber \\
    &=
    \frac{1}{1 + \exp(-u) } = \sigma(u)
\end{align}
where we define
\begin{align}
    u \coloneqq & 
    \log \left(
    \frac{
         \Mzeta{s}{j}\!(1)
    }
    {
       (1-\Mzeta{s}{j}\!(1))
    }
    \frac{
        \widetilde{\omega}(1;\observation{n}{j})
    }
    {
        \widetilde{\omega}(0;\observation{n}{j})
    }
    \right)
    \nn\\
    =&
    \log 
    \underbrace{
    \frac{
        \Mzeta{s}{j}\!(1)
    }
    {
        1-\Mzeta{s}{j}\!(1)
    }
    }_{\text{``prior odds''}}
    +
    \log 
    \underbrace{
    \frac{
        \widetilde{\omega}(\mathcal{H}_1^{\scriptscriptstyle\mathrm{PR}};\observation{n}{j})
    }
    {
        \widetilde{\omega}(\mathcal{H}_0^{\scriptscriptstyle\mathrm{PR}};\observation{n}{j})
    }
    }_{\text{likelihood ratio}}
\end{align}
where $\sigma(u)$ is the standard logistic function depicted in Fig.\,\ref{fig:logisticCurve}.
%
\begin{figure}[t]
    \setlength{\plotWidth}{0.99\linewidth}
    \setlength{\plotHeight}{0.35\linewidth}
    \centering
%
%

\pgfplotsset{every axis/.append style={
  label style={font=\footnotesize},
  legend style={font=\footnotesize},
  tick label style={font=\footnotesize},
}}

\begin{tikzpicture}

\begin{axis}[%
width=0.951\plotWidth,
height=\plotHeight,
at={(0\plotWidth,0\plotHeight)},
scale only axis,
xmin=-9.99,
xmax=9.99,
xtick={-8,-6,-4,-2,0,...,8},
xlabel={$u$},
ymin=0,
ymax=1.15,
ylabel={$\sigma(u)$},
line cap = round,
line join = round,
axis lines=middle,
axis line style={->, arrows = {-Stealth[inset=0pt, scale=1.05, angle'=25]}, line width = 0.5pt},  
ylabel style={xshift=1mm,yshift=1mm},
xlabel style={yshift=1mm}
]
\addplot [color=IEEEblue, line width=1.2pt, forget plot, line cap = round, line join = round,]
  table[row sep=crcr]{%
-10	4.53978687024344e-05\\
-9.48717948717949	7.5811898756789e-05\\
-8.97435897435897	0.00012659904630485\\
-8.46153846153846	0.00021140181779907\\
-7.94871794871795	0.000352989939663747\\
-7.43589743589744	0.000589351981207617\\
-6.92307692307692	0.00098382626779454\\
-6.41025641025641	0.00164190247765048\\
-5.8974358974359	0.00273895556801556\\
-5.38461538461538	0.00456566317227383\\
-4.87179487179487	0.00760138135556447\\
-4.35897435897436	0.012629944471589\\
-3.84615384615385	0.020914959270220\\
-3.33333333333333	0.0344451956662112\\
-2.82051282051282	0.0562257148748818\\
-2.30769230769231	0.090487887990891\\
-1.79487179487179	0.1424764652054\\
-1.28205128205128	0.217201252300403\\
-0.769230769230769	0.316645529812217\\
-0.256410256410256	0.436246350533022\\
0.256410256410256	0.563753649466978\\
0.769230769230769	0.683354470187783\\
1.28205128205128	0.782798747699597\\
1.79487179487179	0.85752353479451\\
2.30769230769231	0.909512112009109\\
2.82051282051282	0.943774285125118\\
3.33333333333333	0.965554804333789\\
3.84615384615385	0.979085040729779\\
4.35897435897436	0.987370055528411\\
4.87179487179487	0.992398618644435\\
5.38461538461538	0.995434336827726\\
5.8974358974359	0.997261044431984\\
6.41025641025641	0.99835809752235\\
6.92307692307692	0.999016173732205\\
7.43589743589744	0.999410648018792\\
7.94871794871795	0.999647010060336\\
8.46153846153846	0.999788598182201\\
8.97435897435897	0.999873400953695\\
9.48717948717949	0.999924188101243\\
10	0.999954602131298\\
};
\end{axis}

\end{tikzpicture}%
    \caption{Standard logistic function $\sigma(u)$.}
    \label{fig:logisticCurve}
    \vspace{-1mm}
\end{figure}
%
The approximate marginal posterior existence probability $\existenceProbPA{s}{n}{j} \approx \beliefp(\PRr{s}{n}{j}\!=\!1)$ implicitly implements a binary hypothesis test between the null hypothesis $\mathcal{H}_0^{\scriptscriptstyle\mathrm{PR}} :\PRr{s}{n}{j}=0$ and the alternative hypothesis
$\mathcal{H}_1^{\scriptscriptstyle\mathrm{PR}} :\PRr{s}{n}{j}=1$.

Assume now that a \gls{ppr} is very close to \textit{not} existing at time $n-1$, i.e., $\beliefp(\PRr{s}{n-1}{j}\!=\!1)\!\approx\!0$, then the predicted existence probability $p(\PRr{s}{n}{j}=1| \observationn{1:n-1})\! \approx\!\zeta(\PRr{s}{n}{j}\!=\!1) 
\! \ll\! 1$, hence the \gls{ppr} is also very close to \textit{not} existing at time $n$, before updating using the observation $\observation{n}{j}$.
%
An almost nonexistent \gls{ppr} can be revived if the log-likelihood ratio outweighs the ``prior odds'':
Even for a \textit{very small} predicted existence probability $p(\PRr{s}{n}{j}=1| \observationn{1:n-1}) \approx \zeta(\PRr{s}{n}{j}\!=\!1) \ll 1$ it can be possible that
\begin{align*}
    -\log
    \frac{
        \Mzeta{s}{j}\!(1)
    }
    {
        1-\Mzeta{s}{j}\!(1)
    }
    \!\approx\!
    -\log
        \Mzeta{s}{j}\!(1)
    \!\ll \!
    \log
    \frac{
        \widetilde{\omega}(\mathcal{H}_1^{\scriptscriptstyle\mathrm{PR}};\observation{n}{j})
    }
    {
        \widetilde{\omega}(\mathcal{H}_0^{\scriptscriptstyle\mathrm{PR}};\observation{n}{j})
    }
\end{align*}
i.e., if the alternative hypothesis is much better supported by the observations $\observation{n}{j}$ leading to $u\! \gg\! 0$ and thus $p(\PRr{s}{n}{j}=1| \observationn{1:n}) \!\approx\! \beliefp(\PRr{s}{n}{j}\!=\!1) \!=\! \sigma(u) \!\approx\! 1$, as shown in Fig.\,\ref{fig:logisticCurve}, hence the marginal posterior existence probability can be ``revived'' even if the prior existence $\Mzeta{s}{j}\!(1) \!\ll\! 1$ was very low.
This means that our method supports partial visibility, a common propagation effect in \gls{dmimo}.
}

\section{Posterior Cram\'er--Rao Lower Bound}\label{Ssec:PCRLB}  
Note that the (i) noncoherent and (ii) coherent bounds derived in the following do \textit{not} represent the \gls{pcrlb} for the statistical model under which the proposed \gls{nzm} method has been derived, but merely serve as a benchmark to compare its estimation performance against.
Note that a closed-form solution for (i) the noncoherent \gls{pcrlb} can be found under the respective statistical model of the \gls{zm} method, however, the derivation of which is 
involved and therefore omitted in this Supplementary Material.
%
However, our \gls{mc} analysis demonstrates that \gls{zm} tends to perform close to (i) the noncoherent bound, while \gls{nzm} tends to perform close to (ii) the coherent bound derived hereafter.
This is important for \gls{dmimo} infrastructures because only estimators performing at the latter bound are able to retain the jointly coherent aperture formed by distributed \glspl{pa}, as we discuss in~\cite{RelatedPaper}.
%
As mentioned in Sec.\,\ref{sec:PCRLB}, throughout the \gls{pcrlb} derivation, the amplitudes 
$\widetilde{{\varrho}}_{\!\scriptscriptstyle k,n}^{\scriptscriptstyle(j)}
\coloneqq
\varrho_{\!\scriptscriptstyle k,n}^{\scriptscriptstyle(j)}
g_{\scriptscriptstyle k,n}^{\scriptscriptstyle (j)}
\exp (\!-\mathrm{j}\frac{2\pi}{\lightspeed}\fc \lVert \acute{\V{r}}\rVert )$
absorb the pathloss $g_{\scriptscriptstyle k,n}^{\scriptscriptstyle (j)}\!=\!\frac{\lambda}{4 \pi \lVert \acute{\V{r}} \rVert}$ instead of the steering vectors $\steerVec{j}$ as well as the carrier-phase term $\exp (\!-\mathrm{j}\frac{2\pi}{\lightspeed}\fc \lVert \acute{\V{r}}\rVert)$ from~\eqref{eq:unit-modulus-array-response}. 

{\slshape Approximations.} 
In the following derivation, these amplitudes $\widetilde{\unslant[-.25]{\varrho}}_{\!\scriptscriptstyle k,n}^{\scriptscriptstyle(j)}$ at time $n$ and $n-1$ are going to be treated as independent \glspl{rv}, i.e., $\widetilde{\unslant[-.25]{\varrho}}_{\!\scriptscriptstyle k,n}^{\scriptscriptstyle(j)} \perp \widetilde{\unslant[-.25]{\varrho}}_{\!\scriptscriptstyle k,n-1}^{\scriptscriptstyle(j)}$. 
For the polar form of the amplitudes 
$\widetilde{\unslant[-.25]{\varrho}}_{\!\scriptscriptstyle k,n}^{\scriptscriptstyle(j)}=
\rv{a}_{\!\scriptscriptstyle k,n}^{\scriptscriptstyle(j)}
\exp(\mathrm{j} \unslant[-.25]{\varphi}_{\!\scriptscriptstyle k,n}^{\scriptscriptstyle(j)})$, as well as for the noise variance $\RVetan{n}$ we introduce pseudo state-transition \glspl{pdf} 
$f(\modulivec{n}{}|\modulivec{n-1}{})=\mathcal{N}(\modulivec{n}{};\mathbf{0},\sigma_a^2 \eye{\widetilde{K}J})$,
$f(\phasevec{n}{}|\phasevec{n-1}{})=\mathcal{N}(\phasevec{n}{};\mathbf{0},\sigma_\varphi^2 \eye{\dimPhase})$,
and
$f(\etan{n}|\etan{n-1})=\mathcal{N}(\etan{n};0,\sigma_\eta^2)$, with $\{\sigma_a^2,\sigma_\varphi^2,\sigma_\eta^2\}$ large enough such that the information about $\{{\RVphasevec{n}{}}, {\RVmodulivec{n}{}}, \RVetan{n}\}$ injected through $\FIMstep{n}{n\!-\!1}$ into $\FIMstep{n}{n}$ is negligible compared to the information injected through $\FIMglobal{n}$, yet allowing to compute $\FIMstep{n}{n\!-\!1}$ and the inverses involved in its computation~\eqref{eq:priorFIM} without rank-deficiency.
%
Furthermore, we make the simplifying approximation that the amplitude moduli 
$\RVmodulivec{n}{(j)}$
do not contribute information on the parameters of interest 
$\{\RVstate{n} , {\RVpMVAposStacked}\}$, hence the respective Jacobians are zero matrices, i.e., 
$\jacobP{\modulus} = \frac{\partial {\modulivec{n}{(j)}}^\trp}{\partial \pos{n}} \approx 
\mathbf{0}$
and
$\jacobM{\modulus}=\frac{\partial {\modulivec{n}{(j)}}^\trp}{\partial \psfv{s}{n}{j}}\approx \mathbf{0}$. 
This is a valid approximation because the information contributed by the \textit{other} local channel parameters 
$\big\{
\RVelVecx{n}{j}, 
\RVazVecx{n}{j},
\RVdelayVecx{n}{j}, 
\RVphasevec{n}{(j)} 
\big\}$
on the parameters of interest is generally much larger.

The \gls{pcrlb} matrix $\PCRLB$ is a lower bound on the 
\gls{mse} matrix 
$\mathbb{E}_{\,\RVetaglobalSmall{n}, \RVobservationn{n}|\RVobservationn{1:n-1}}
\!\big((\RVetaglobalHat{n}\!-\RVetaglobal{n} ) (\RVetaglobalHat{n}\!-\RVetaglobal{n} )^\trp\big)$
of any \textit{Bayesian} estimator~\cite[eq.\,(29)]{VanTrees2007PCRLB}, implying that \textit{both} the observations $\RVobservationn{n}$ and the state $\RVetaglobal{n}$ are \glspl{rv}. 
For the \gls{pcrlb} $\PCRLB$ to lower-bound the estimation error, the \gls{mt} position $\mathbf{p}_{\scriptscriptstyle n}$ and \glspl{sfv} positions $\psfvDet{k}$ are no longer deterministic unknowns as in a ``classic'' estimation problem, but they must become \glspl{rv} $\RVpos{n}$ and $\RVpsfv{k}{n}{j}$ not only in the estimator, but also in the data-generating process~\cite{Fritsche16ParametricCRLB}.

{\slshape Synthetic Data Generation:} 
That is, for each \gls{mc} run, at each time $n$, an 
\gls{mt} state $\state{n} \!\coloneqq\! [\pos{n}^\trp \iist \vel{n}^\trp]^\trp$
realization is drawn from its state-transition \gls{pdf} $f(\state{n}|\state{n-1})=\mathcal{N}(\state{n};\transitionmatrix_{\text{\tiny a}}\state{n-1},\processNoiseCov_{\text{\tiny a}})$
leading to a different \gls{ncv} trajectory $\big\{\pos{n}\big\}_{\scriptscriptstyle n=1}^{\scriptscriptstyle N}$.
Likewise, \gls{sfv} positions $\psfv{k}{n}{1}$ are drawn from their state-transition \glspl{pdf} $f(\psfv{k}{n}{1}|\psfv{k}{n-1}{J})=\mathcal{N}(\psfv{k}{n}{1};\eye{3}\psfv{k}{n-1}{J},\sigma_{\text{\tiny sfv}}^2\eye{3})$
s.t. $\transitionmatrix_{\text{\tiny sfv}}=\eye{3K}$
and $\processNoiseCov_{\text{\tiny sfv}}=\sigma_{\text{\tiny sfv}}^2\eye{3K}$
leading to different random walk trajectories $\big\{\psfv{k}{n}{1}\big\}_{\scriptscriptstyle n=1}^{\scriptscriptstyle N}~\forall k \in \{1 \hdots K\}$.
We draw $\pos{1}$ and $\psfv{k}{1}{1}$ at time $n=1$ from the state-transition \glspl{pdf}
$f(\state{1};\mathbf{x}_{\scriptscriptstyle 0})$ and $f(\psfv{k}{1}{1};\psfvDet{k})$ parameterized by fictional starting points $\mathbf{x}_{\scriptscriptstyle 0}\!\coloneqq\! [\mathbf{p}_{\scriptscriptstyle 0}^\trp \iist \mathbf{v}_{\scriptscriptstyle 0}^\trp]^\trp$ and $\psfvDet{k}$ at time $n\!=\!0$, which define $\FIMstep{1}{0}$ as the starting-point of the recursion in~\eqref{eq:posteriorCRLB}--\eqref{eq:priorFIM}.
We use the \gls{mc} analysis both to compute the \gls{mse} matrices of our estimators, and to numerically implement the expectation $\FIMglobal{n} \!=\! \mathbb{E}_{\RVetaglobalSmall{n}|\RVobservationn{1:n-1}}\!(\FIMclassic{n})$ in the Bayesian snapshot information matrix from Sec.\,\ref{sec:FIMg} by means of \gls{mc} integration (cf.\,\cite{Hernandez02PCRLB}).
%
Under the above approximations, the linear Gaussian state-transition \gls{pdf} factorizes as
$f(\etaglobal{n}|\etaglobal{n-1})\!=\!
f(\state{n}|\state{n-1})
\big(\prod\nolimits_{\scriptscriptstyle k=1}^{\scriptscriptstyle K}f(\psfv{k}{n}{1}|\psfv{k}{n-1}{J})\big)
f(\phasevec{n}{}|\phasevec{n-1}{})
f(\modulivec{n}{}|\modulivec{n-1}{})$
$f(\etan{n}|\etan{n-1})$
such that
$f(\etaglobal{n}|\etaglobal{n-1})\!=\!\mathcal{N}(\etaglobal{n};\transitionmatrix \etaglobal{n-1},\processNoiseCov)$ is parameterized by state transition matrix and process noise covariance matrix
\begin{align*}
    \transitionmatrix \!=\! 
    \begin{bmatrixs}
        \transitionmatrix_{\text{\tiny a}} & \mathbf{0}     & \mathbf{0} & \mathbf{0} & 0 \\
        \mathbf{0} & \transitionmatrix_{\text{\tiny sfv}}   & \mathbf{0} & \mathbf{0} & 0 \\
        \mathbf{0} & \mathbf{0} & \mathbf{0}       & \mathbf{0} & 0 \\
        \mathbf{0} & \mathbf{0} & \mathbf{0} & \mathbf{0}   & 0      \\
        {0} & {0} & {0} & {0}   & 0
    \end{bmatrixs}
\text{,} \,
    \processNoiseCov \!=\! 
    \begin{bmatrixs}
        \processNoiseCov_{\text{\tiny a}} & \mathbf{0}     & \mathbf{0} & \mathbf{0} & 0 \\
        \mathbf{0} & \processNoiseCov_{\text{\tiny sfv}}   & \mathbf{0} & \mathbf{0} & 0 \\
        \mathbf{0} & \mathbf{0} & \sigma_\varphi^2 \eye{\dimPhase}       & \mathbf{0} & 0 \\
        \mathbf{0} & \mathbf{0} & \mathbf{0} & \sigma_a^2 \eye{\widetilde{K}J}   & 0      \\
        {0} & {0} & {0} & {0}   & \sigma_\eta^2 
    \end{bmatrixs} \,
\end{align*}
respectively.
Under this (approximate) linear Gaussian state-transition \gls{pdf}, the predicted information matrix from \eqref{eq:priorFIM} takes the form
$\FIMstep{n}{n\!-\!1} = \big(\transitionmatrix \, \FIMstep{n\!-\!1}{n\!-\!1}^{-1} \, \transitionmatrix^\trp + \processNoiseCov\big)^{-1} \,,$
as shown by Hernandez et al.\,\cite[eq.\,(16)]{Hernandez02PCRLB}.

With the prior information matrix defined, what is missing to complete the recursion in~\eqref{eq:posteriorCRLB}--\eqref{eq:priorFIM} are the entries of the local channel \glspl{fim} $\FIMch{n}{j}$ from~\eqref{eq:FIMclassic} and the Jacobian matrices $\jacobgn{j}= \nicefrac{\partial {\etach{n}{j}}^{\!\!\!\trp}}{\partial \etaglobal{n}}$ from~\eqref{eq:FIMclassic} mapping from \gls{pa}-local channel parameters $\etach{n}{j}$ to global (infrastructure) parameters $\etaglobal{n}$.

Recall the definitions of the delay, elevation, and azimuth
\begin{align}\label{eq:delay}
    \delay(\acute{\V{r}}) &= \nicefrac{\lVert \acute{\V{r}} \rVert}{\lightspeed} \,,
    \\
    \label{eq:elevation}
    \elevation(\acute{\V{r}}) &= \arccos (\rz/ \lVert \acute{\V{r}} \rVert)\,,
    \\
    \label{eq:azimuth}
    \azimuth(\acute{\V{r}}) &= \arctantwo (\ry,\rx)\,,
\end{align}
representing local spherical \gls{pa}-coordinates parameterized by 
\begin{align}\label{eq:rangepSupp}
    \rangep{k}{n}{j}(\state{n},\psfv{k}{n}{},\posPA{j},\rotM{j}) 
    &\coloneqq \rotM{j}^{-1} \ranget{k}{n}{j} 
    = \rotM{j}^{\trp} \ranget{k}{n}{j} \,,
\\
    \label{eq:ranget}
    \ranget{k}{n}{j}(\state{n},\psfv{k}{n}{},\posPA{j}) &\coloneqq \houseDet{k} \range{k}{n}{j}\,,
\\
\label{eq:range}
    \range{k}{n}{j}(\state{n},\psfv{k}{n}{},\posPA{j}) &\coloneqq \pos{n} - \posVA{k}{n}{j}\,.
\end{align}
Here, the vector $\range{k}{n}{j}$ points from a \gls{va} to the \gls{mt} in global coordinates, $\ranget{k}{n}{j}$ points from the \gls{pa} to the \gls{vm} in global coordinates, and $\rangep{k}{n}{j}$ points from the \gls{pa} to the \gls{vm} in local \gls{pa}-coordinates.

\subsection{Derivation of Local Channel FIM Terms}\label{Ssec:FIMch}  
The \gls{pa}-local channel parameter vector $\RVetach{n}{j}$ in Sec.\,\ref{sec:FIMch} contains the stacked elevation vector
$\elVecx{n}{j} \!=\! \big[\elx{0}{n}{j} \ist \hdots \ist \elx{K}{n}{j} \big]^{\trp}\!\in\!\realsetone{\widetilde{K}}$, the stacked azimuth vector
$\azVecx{n}{j} \!=\! \big[ \azx{0}{n}{j} \ist \hdots \ist \azx{K}{n}{j}\big]^{\trp}\!\in\!\realsetone{\widetilde{K}}$,
and the stacked delay vector
$\delayVecx{n}{j} \!=\! \big[ \delayx{0}{n}{j} \ist \hdots \ist \delayx{K}{n}{j}\big]^{\trp} \!\in\!\realsetone{\widetilde{K}}$.
Here, 
$\delayx{k}{n}{j}\coloneqq \tau(\rangep{k}{n}{j})$ is defined through \eqref{eq:delay},
$\elx{k}{n}{j}\coloneqq \elevation(\rangep{k}{n}{j})$ is defined through \eqref{eq:elevation}, and
$\azx{k}{n}{j}\coloneqq \azimuth(\rangep{k}{n}{j})$ is defined through \eqref{eq:azimuth}.
%
For the derivation of the channel \gls{fim} terms in \eqref{eq:FIMch}, we assume a complex Gaussian likelihood with spatially and temporally uncorrelated, circular \gls{awgn} $\RVnoise{n}{j}$ leading to
$f(\observation{n}{j}|\etach{n}{j})\!\coloneqq\!\CN\big(\observation{n}{j}; \sum\nolimits_{k = 0}^{K}  \widetilde{\varrho}_{\!\scriptscriptstyle k,n}^{\scriptscriptstyle(j)} \signalatomnl{k}{n}{j},\etan{n} \eye{\Nz}\big)$
where the amplitudes are decomposed into polar form as vectors of moduli $\modulivec{n}{(j)}\!\in\!\realsetone{\widetilde{K}}$ and phases $\phasevec{n}{(j)}\!\in\!\realsetone{\widetilde{K}}$.
Evaluating~\eqref{eq:FIMch}, the entries of our local channel \gls{fim} are given by Table~\ref{Stab:channel-fim}.
%
\begin{table}[t]
    \setlength{\tabcolsep}{2.2pt} 
    \renewcommand{\arraystretch}{1.05} 
    \centering \small
    \caption
    {Individual entries $[\FIMch{n}{j}]_{\scriptstyle \grave{\imath},\acute{\imath}}$ of the channel \gls{fim} in~\eqref{eq:FIMch}.
    Iterating over the listed row and column indices 
    $\widetilde{k},\widetilde{k}'$ 
    yields the entries that correspond to the upper triangular matrix of $\FIMch{n}{j}$.
        }%
\label{Stab:channel-fim}
    \begin{tabularx}{1\columnwidth}{@{}cc|c|c|c}
        \toprule
        $[\etach{n}{j}]_{\scriptscriptstyle \grave{\imath}}$ & $[\etach{n}{j}]_{\scriptscriptstyle \acute{\imath}}$
        & Term & \scalebox{0.9}{Row\,$\grave{\imath}$} & \scalebox{0.9}{Column\,$\acute{\imath}$}  \\
        \midrule 
        $\elx{k}{n}{j}$&$\elx{k'\!}{n}{j}$ &
            $\frac{2}{\etan{n}}\Re\{
                     \left.\complexamplitude{\ncomponent}\right.^{\!\ast}\complexamplitude{\ncomponent'\!}
                    {\atomelnl{(j) \herm}{\ncomponent,n}}
                    \atomelnl{(j)}{\ncomponent'\!,n}
                \}$  &
            $\widetilde{k}$ &  
            $\widetilde{k}' $ \\ 
        $\elx{k}{n}{j}$&$\azx{k'\!}{n}{j}$ &
            $\frac{2}{\etan{n}}\Re\{
                    \left.\complexamplitude{\ncomponent}\right.^{\!\ast}\complexamplitude{\ncomponent'\!} 
                    {\atomelnl{(j) \herm}{\ncomponent,n}}
                    \atomaznl{(j)}{\ncomponent'\!,n}
                \}$  &
            $\widetilde{k}$ &  
            $\widetilde{k}' + \widetilde{K}$ \\ 
        $\elx{k}{n}{j}$&$\delayx{k'\!}{n}{j}$ & 
            $\frac{2}{\etan{n}}\Re\{
                    \left.\complexamplitude{\ncomponent}\right.^{\!\ast}\complexamplitude{\ncomponent'\!} 
                    {\atomelnl{(j) \herm}{\ncomponent,n}} 
                    \atomdelaynl{(j)}{\ncomponent'\!,n}
                \}$  &
            $\widetilde{k}$ &  
            $\widetilde{k}' + 2\widetilde{K}$ \\ 
        $\elx{k}{n}{j}$&$\phasex{k'\!}{n}{j}$ & 
            $\frac{2}{\etan{n}}\Re\{
                    \mathrm{j} \, \left.\complexamplitude{\ncomponent}\right.^{\!\ast}\complexamplitude{\ncomponent'\!}
                    {\atomelnl{(j) \herm}{\ncomponent,n}}
                    {\signalatomnl{k'\!}{n}{j}}
                \}$  &
            $\widetilde{k}$ &  
            $\widetilde{k}' + 3\widetilde{K}$ \\ 
        $\elx{k}{n}{j}$&$\modulusx{k'\!}{n}{j}$ & 
            $\frac{2}{\etan{n}}\Re\{
                    \left.\complexamplitude{\ncomponent}\right.^{\!\ast} e^{\mathrm{j}\phasex{\ncomponent'\!}{n}{j}} 
                    {\atomelnl{(j) \herm}{\ncomponent,n}}
                    {\signalatomnl{k'\!}{n}{j}}
                \}$  &
            $\widetilde{k}$ &  
            $\widetilde{k}' + 4\widetilde{K}$ \\ 
        $\elx{k}{n}{j}$&$\etan{n}$ & 
            $0$  &
            $\widetilde{k}$ &  
            $1 + 5\widetilde{K}$ \\ 
   \midrule
        $\azx{k}{n}{j}$&$\azx{k'\!}{n}{j}$ & 
            $\frac{2}{\etan{n}}\Re\{
                    \left.\complexamplitude{\ncomponent}\right.^{\!\ast} \complexamplitude{\ncomponent'\!} 
                    {\atomaznl{(j) \herm}{\ncomponent,n}} 
                    \atomaznl{(j)}{\ncomponent'\!,n}
                \}$  &
            $\widetilde{k}+ \widetilde{K}$ &  
            $\widetilde{k}' + \widetilde{K}$ \\ 
        $\azx{k}{n}{j}$&$\delayx{k'\!}{n}{j}$ & 
            $\frac{2}{\etan{n}}\Re\{
                    \left.\complexamplitude{\ncomponent}\right.^{\!\ast} \complexamplitude{\ncomponent'\!} 
                    {\atomaznl{(j) \herm}{\ncomponent,n}}
                    \atomdelaynl{(j)}{\ncomponent'\!,n}
                \}$  &
            $\widetilde{k}+ \widetilde{K}$ &  
            $\widetilde{k}' + 2\widetilde{K}$ \\ 
        $\azx{k}{n}{j}$&$\phasex{k'\!}{n}{j}$ & 
            $\frac{2}{\etan{n}}\Re\{
                    \mathrm{j} \, \left.\complexamplitude{\ncomponent}\right.^{\!\ast} \complexamplitude{\ncomponent'\!} 
                    {\atomaznl{(j) \herm}{\ncomponent,n}}
                    {\signalatomnl{k'\!}{n}{j}}
                \}$  &
            $\widetilde{k}+ \widetilde{K}$ &  
            $\widetilde{k}' + 3\widetilde{K}$ \\ 
        $\azx{k}{n}{j}$&$\modulusx{k'\!}{n}{j}$ & 
            $\frac{2}{\etan{n}}\Re\{
                    \left.\complexamplitude{\ncomponent}\right.^{\!\ast} e^{\mathrm{j}\phasex{\ncomponent'\!}{n}{j}} 
                    {\atomaznl{(j) \herm}{\ncomponent,n}}
                    {\signalatomnl{k'\!}{n}{j}}
                \}$  &
            $\widetilde{k}+ \widetilde{K}$ &  
            $\widetilde{k}' + 4\widetilde{K}$ \\ 
        $\azx{k}{n}{j}$&$\etan{n}$ & 
            $0$  &
            $\widetilde{k}+ \widetilde{K}$ &  
            $1 + 5\widetilde{K}$ \\ 
   \midrule
        $\delayx{k}{n}{j}$&$\delayx{k'\!}{n}{j}$ & 
            $\frac{2}{\etan{n}}\Re\{
                    \left.\complexamplitude{\ncomponent}\right.^{\!\ast} \complexamplitude{\ncomponent'\!} 
                    {\atomdelaynl{(j) \herm}{\ncomponent,n}}
                    \atomdelaynl{(j)}{\ncomponent'\!,n}
                \}$  &
            $\widetilde{k} + 2\widetilde{K}$ &  
            $\widetilde{k}' + 2\widetilde{K}$ \\ 
        $\delayx{k}{n}{j}$&$\phasex{k'\!}{n}{j}$ & 
            $\frac{2}{\etan{n}}\Re\{
                    \mathrm{j} \, \left.\complexamplitude{\ncomponent}\right.^{\!\ast} \complexamplitude{\ncomponent'\!} 
                    {\atomdelaynl{(j) \herm}{\ncomponent,n}}
                    {\signalatomnl{k'\!}{n}{j}}
                \}$  &
            $\widetilde{k}+ 2\widetilde{K}$ &  
            $\widetilde{k}' + 3\widetilde{K}$ \\ 
        $\delayx{k}{n}{j}$&$\modulusx{k'\!}{n}{j}$ & 
            $\frac{2}{\etan{n}}\Re\{
                    \left.\complexamplitude{\ncomponent}\right.^{\!\ast} e^{\mathrm{j}\phasex{\ncomponent'\!}{n}{j}} 
                    {\atomdelaynl{(j) \herm}{\ncomponent,n}}
                    {\signalatomnl{k'\!}{n}{j}}
                \}$  &
            $\widetilde{k}+ 2\widetilde{K}$ &  
            $\widetilde{k}'+ 4\widetilde{K}$ \\ 
        $\delayx{k}{n}{j}$&$\etan{n}$ & 
            $0$  &
            $\widetilde{k}+ 2\widetilde{K}$ &  
            $1 + 5\widetilde{K}$ \\ 
    \midrule
        $\phasex{k}{n}{j}$&$\phasex{k'\!}{n}{j}$ & 
            $\frac{2}{\etan{n}}\Re\{
                    \left.\complexamplitude{\ncomponent}\right.^{\!\ast} \complexamplitude{\ncomponent'\!} 
                    \left.\signalatomnl{k}{n}{j}\right.^{\!\herm}
                    {\signalatomnl{k'\!}{n}{j}}
                \}$  &
            $\widetilde{k}+ 3\widetilde{K}$ &  
            $\widetilde{k}' + 3\widetilde{K}$ \\ 
        $\phasex{k}{n}{j}$&$\modulusx{k'\!}{n}{j}$ & 
            $\frac{2}{\etan{n}}\Re\{
                    - \mathrm{j}  \!\left.\complexamplitude{\ncomponent}\right.^{\!\ast} \!\!e^{\mathrm{j}\phasex{\ncomponent'\!}{n}{j}}
                    \!\left.\signalatomnl{k}{n}{j}\right.^{\!\herm}
                    \!{\signalatomnl{k'\!}{n}{j}}
                \}$  &
            $\widetilde{k}+ 3\widetilde{K}$ &  
            $\widetilde{k}' + 4\widetilde{K}$ \\ 
            $\phasex{k}{n}{j}$&$\etan{n}$ & 
            $0$  &
            $\widetilde{k}+ 3\widetilde{K}$ &  
            $1 + 5\widetilde{K}$ \\ 
            \midrule
        $\modulusx{k}{n}{j}$&$\modulusx{k'\!}{n}{j}$ & 
            $\frac{2}{\etan{n}}\Re\{
                    e^{-\mathrm{j}\phasex{k}{n}{j}}  e^{\mathrm{j}\phasex{k'\!}{n}{j}}
                    \left.\signalatomnl{k}{n}{j}\right.^{\!\herm}
                    {\signalatomnl{k'\!}{n}{j}}
                \}$  &
            $\widetilde{k}+ 4\widetilde{K} $ &  
            $\widetilde{k}'+ 4\widetilde{K} $ \\ 
            $\modulusx{k}{n}{j}$&$\etan{n}$ & 
            $0$  &
            $\widetilde{k}+ 4\widetilde{K} $ &  
            $1 + 5\widetilde{K}$ \\ 
            \midrule
            $\etan{n}$&$\etan{n}$ & 
            $\frac{\Nz }{\etan{n}^2}$  &
            $1 + 5\widetilde{K}$ &  
            $1 + 5\widetilde{K}$ \\
            \bottomrule
    \end{tabularx}
    \vspace{-8mm}
\end{table}
%
We use the indices $\grave{\imath}$ and $\acute{\imath}$ to describe the rows and columns, respectively, of $\left[\FIMch{n}{j}\right]_{\grave{\imath},\acute{\imath}}$ depending on their component numbers $\ncomponent,\ncomponent' \in \{0 \hdots {K}\}$.
The entries in Table~\ref{Stab:channel-fim} describe the upper triangular part of the channel \gls{fim}, while the lower triangular part is to be completed by exploiting the symmetry property of \glspl{fim}, i.e., 
$\left[\FIMch{n}{j}\right]_{\grave{\imath},\acute{\imath}}= \left[\FIMch{n}{j}\right]_{\acute{\imath},\grave{\imath}}$. 
%
Contrary to the unit-modulus array response from~\eqref{eq:unit-modulus-array-response},
with the carrier-phase term $\exp (\!-\mathrm{j}\frac{2\pi}{\lightspeed}\fc \lVert \acute{\V{r}}\rVert)$ absorbed into the amplitudes $\widetilde{{\varrho}}_{\!\scriptscriptstyle k,n}^{\scriptscriptstyle(j)}$, the array response used for deriving the \glspl{pcrlb} factorizes as 
\begin{align}\label{Seq:signalatom}
    \signalatom 
    \big(\delay,\elevation,\azimuth\big) \coloneqq 
        \bm{b}(\delay)\otimes 
        \bm{a}_y(\elevation,\azimuth) \otimes \bm{a}_z(\elevation)
        \quad \in \complexsetone{\Nz} \,,
\end{align}
again, under the assumption of the \gls{ura} layout from~\eqref{eq:Ptemplate} and per-anchor plane-wave propagation.
For notational brevity, the dependencies of $\bm{b}(\delay)$, 
$\bm{a}_y(\elevation,\azimuth)$, and $\bm{a}_z(\elevation)$ on their parameters are occasionally omitted.
To completely describe the terms in Table~\ref{Stab:channel-fim}, we further derive the following partial derivatives of 
the array response w.r.t. the channel parameters in $\etach{n}{j}$:
%
The derivatives of the array response in~\eqref{Seq:signalatom} w.r.t. incidence angles $(\elevation, \azimuth)$ and delay $\delay$ compute to 
\begin{align}
    \atomel\big(\delay,\elevation,\azimuth\big) \coloneqq \frac{\partial \signalatom}{\partial \elevation}
    &= \bm{b}  \otimes \dot{\bm{a}}_{y,\elevation} \otimes \bm{a}_z + 
     \bm{b} \otimes \bm{a}_y \otimes \dot{\bm{a}}_{z,\elevation} 
     \\
    \atomaz\big(\delay,\elevation,\azimuth\big) \coloneqq \frac{\partial \signalatom}{\partial \azimuth}   
    &=  \bm{b}  \otimes \dot{\bm{a}}_{y,\azimuth} \otimes \bm{a}_z 
    {+ \underbrace{ \bm{b}  \otimes \bm{a}_y \otimes \dot{\bm{a}}_{z,\azimuth}}_{=\mathbf{0}} } 
    \\[-13pt]
    \atomdelay\big(\delay,\elevation,\azimuth\big) \coloneqq \frac{\partial \signalatom}{\partial \delay}   
    &= \dot{\bm{b}} \otimes \bm{a}_y \otimes \bm{a}_z 
\end{align}
%
with the individual derivatives of temporal (denoted $\dot{\bm{b}}$) and spatial (denoted $\dot{\bm{a}}$) array responses below.
%
The partial derivative of the horizontal spatial array response in~\eqref{eq:spatial-response-y} w.r.t. elevation angle $\elevation$ becomes
\begin{align*}
    \dot{\bm{a}}_{y,\elevation}(\elevation,\azimuth)  = \frac{\partial \bm{a}_y}{\partial \elevation } 
    = 
    \mathrm{j} 
    \frac{2 \pi}{\lambda} 
    \cos (\elevation) 
    \sin (\azimuth) 
    \,\mathbf{p}_y 
    \!\odot\!
    \bm{a}_y(\elevation,\azimuth)  
    \in \complexsetone{\Nantennasy} ,
\end{align*}
and the partial derivative w.r.t. azimuth angle $\azimuth$ becomes
\begin{align*}
    \dot{\bm{a}}_{y,\azimuth}(\elevation,\azimuth)  = \frac{\partial \bm{a}_y}{\partial \azimuth }
    = 
    \mathrm{j} \frac{2 \pi}{\lambda} 
    \sin (\elevation) 
    \cos (\azimuth) 
    \,\mathbf{p}_y
    \!\odot\! 
    \bm{a}_y(\elevation,\azimuth)  
    \in \complexsetone{\Nantennasy}.
\end{align*}
%
The partial derivative of the vertical spatial array response in \eqref{eq:spatial-response-z} w.r.t. elevation angle $\elevation$ becomes
\begin{align*}
    \dot{\bm{a}}_{z,\elevation}(\elevation) = \frac{\partial \bm{a}_z}{\partial \elevation }
      =  -\mathrm{j} \frac{2 \pi}{\lambda}  \sin (\elevation) ~ \mathbf{p}_z 
    \odot \bm{a}_z(\elevation)    
    \,,
\end{align*}
and the partial derivative w.r.t. azimuth angle $\azimuth$ becomes
\begin{align}\label{Seq:derivative-vertical-response-wrt-azimuth}
    \dot{\bm{a}}_{z,\azimuth} = \frac{\partial \bm{a}_z}{\partial \azimuth }
    =  \bm{0} \quad \in \complexsetone{\Nantennasz} \,,
\end{align}
%
For the temporal response in \eqref{eq:delay-array-response}, the derivative w.r.t. delay is
\begin{align}
    \dot{\bm{b}}(\delay) = \frac{\partial \bm{b}}
    {\partial \delay }
    =  -\mathrm{j} 2 \pi \mathbf{f} \odot \bm{b}(\delay) \quad \in \complexsetone{\Nfrequency} \,.
\end{align}
%
We further define the shorthand notation
$\signalatomnl{k}{n}{j}\coloneqq\signalatom\big(\delayx{k}{n}{j},\elx{k}{n}{j},\azx{k}{n}{j}\big)$,
$\atomdelaynl{(j)}{k,n}\coloneqq\atomdelay\big(\delayx{k}{n}{j},\elx{k}{n}{j},\azx{k}{n}{j}\big)$,
$\atomelnl{(j)}{k,n}\coloneqq\atomel\big(\delayx{k}{n}{j},\elx{k}{n}{j},\azx{k}{n}{j}\big)$, and
$\atomaznl{(j)}{k,n}\coloneqq\atomaz\big(\delayx{k}{n}{j},\elx{k}{n}{j},\azx{k}{n}{j}\big)$.

\subsection{Derivation of Jacobian Matrices}\label{Ssec:Jacobian}  

As mentioned in Sec.\,\ref{sec:PCRLB}, the difference between (i) the noncoherent \gls{pcrlb} and the (ii) coherent \gls{pcrlb} is fundamentally tied to whether the $\widetilde{K}$ component phases in 
$\RVphasevec{n}{(j)}$ 
map to (i) nuisance phases 
$\unslant[-.25]{\varphi}_{\!\scriptscriptstyle k,n}^{\scriptscriptstyle(j)}$ \textit{separate} across \glspl{pa} $j$ in $\RVetaglobal{n}$, 
or (ii) to a single nuisance phase 
$\unslant[-.25]{\varphi}_{\!\scriptscriptstyle k,n}$ \textit{common} among all $J$ \glspl{pa}~\cite[Sec.\,IV]{Fascista23RadioStripesICC}.
%
Notably, the inversion in~\eqref{eq:posteriorCRLB} in the case of (i) the noncoherent \gls{pcrlb} eventually costs the information on the global parameters of interest that the $\widetilde{K}J$ phases of all $J$ \glspl{pa} in $\RVetach{n}{j}$ could have contributed in the first place, because they map to $\widetilde{K}J$ nuisance phases in $\RVetaglobal{n}$ that are different across the $J$ \glspl{pa} (cf.\,\cite[eq.\,(8.43)]{VanTrees2002optimumASP}).
%
In contrast, in the case of (ii) the coherent \gls{pcrlb}, the $\widetilde{K}J$ phases of all $J$ \glspl{pa} in $\RVetach{n}{j}$ map to only $\widetilde{K}$ nuisance phases in $\RVetaglobal{n}$, where the excess information not consumed in the estimation of nuisance phases can be used for positioning and mapping.
%
Notably, this information gain comes from retaining the jointly coherent aperture of a \gls{dmimo} infrastructure of distributed \glspl{pa}.
The parameter \textit{mapping} from local channel parameters $\RVetach{n}{j}$ 
to global parameters $\RVetaglobal{n}$
is contained in the Jacobian matrices $\jacobgn{j}$ that we derive in the following.
%
Mapping from local channel parameters $\etach{n}{j}$ to global parameters $\etaglobal{n}$, the $(\dimGlobal\!\times\!\dimLocal)$ Jacobian matrices are defined as\footnote{Note that the unusual definition of the Jacobian matrices $\jacobgn{j}\!=\! \nicefrac{\partial {\etach{n}{j}}^{\!\!\!\trp}}{\partial \etaglobal{n}} \!\in\!  \realset{\dimGlobal}{\dimLocal}$ comes from the fact that they are defined to propagate Fisher information (i.e., precision) rather than covariance.}
\begin{align}\label{eq:jacobian-main}
    \jacobgn{j} \!\coloneqq\!   
    \frac{\partial \etach{n}{j}\!^\trp }{\partial \etaglobal{n}} \!=\!
    \begin{bmatrixs}
        \jacobP{\elevation} & \jacobP{\azimuth}& \jacobP{\delay}& \jacobP{\varphi}& \bm{0} & 0 \\
        \bm{0} & \bm{0} & \bm{0}            &\bm{0}                 & \bm{0}    & 0 \\
        \jacobM{\elevation}& \jacobM{\azimuth}& \jacobM{\delay}& \jacobM{\varphi}& \bm{0} & 0 \\
        \bm{0} & \bm{0} & \bm{0}            & \jacobC{\varphi}      & \bm{0}    & 0 \\
        \bm{0} & \bm{0} & \bm{0}            & \bm{0}                & \jacobA{a}& 0 \\
        0 & 0 & 0            & 0            & 0    & 1 \\
    \end{bmatrixs}
\end{align}
with 
Jacobian submatrices $\jacobP{\cdot}$ mapping information from the respective channel parameters to the \gls{mt} position $\pos{n}$,
submatrices $\jacobM{\cdot}$ mapping information to \gls{sfv} positions $\psfv{k}{n}{j}$,
submatrix $\jacobC{\varphi}$ maps information to nuisance phases $\phasevec{n}{(j)}$,
and 
$\jacobA{a}$ maps information to nuisance amplitudes in $\modulivec{n}{(j)}$ in $\etaglobal{n}$.
%
The zero rows belonging to the \gls{mt} velocity $\vel{n}$ in the Jacobian $\jacobgn{j}$ make it rank-deficient and the respective global per-snapshot \gls{fim} $\FIMclassic{n}$ in~\eqref{eq:FIMclassic} singular. 
Information about the \gls{mt} velocity $\RVvel{n}$ in the posterior \gls{fim} $\FIMstep{n}{n}$ of~\eqref{eq:posteriorCRLB}---restoring its full rank---is then obtained only by propagating information about the position over the state-space model in~\eqref{eq:priorFIM} (via off-diagonal elements in $\transitionmatrix_{\text{\tiny a}}$).

\ifthenelse{\equal{\IEEEversion}{true}}
{
}{
\begin{table}
	\centering
    \caption{Spatial vectors in global or local Cartesian coordinates from~\cite{Deutschmann25Asilomar}.}%
	\label{tab:spatial-vectors}
    \begin{tabularx}{0.78\columnwidth}{ccc|ccc}
        \toprule 
        \multicolumn{3}{c|}{Global coordinates} & \multicolumn{3}{c}{Local coordinates}   \\
		Vector & from & to & Vector & from & to \\
    \midrule 
		$\range{k}{n}{j}$ 	& \gls{va} 	& \gls{pm} \\
		$\ranget{k}{n}{j}$ & \gls{pa} 	& \gls{vm} & $\rangep{k}{n}{j}$ 		& \gls{pa} 		& \gls{vm} \\
    \bottomrule
    \end{tabularx}
    \vspace{-7mm}
\end{table}
}

\subsubsection{Fundamental Jacobian Matrices}
As a result of the chain rule, the Jacobian submatrices in~\eqref{eq:jacobian-main} compute as the product of more fundamental Jacobian building blocks, which we derive first:
%

Abbreviating $\rx:=[\rangep{k}{n}{j}]_{\scriptscriptstyle 1}$, $\ry:=[\rangep{k}{n}{j}]_{\scriptscriptstyle 2}$, and $\rz:=[\rangep{k}{n}{j}]_{\scriptscriptstyle 3}$ for notational brevity, we derive the mapping of Fisher information in \textit{local} spherical \gls{pa} coordinates $\{\delayx{k}{n}{j},\elx{k}{n}{j},\azx{k}{n}{j}\}$ to \textit{local} Cartesian \gls{pa} coordinates $\rangep{k}{n}{j}\in\realsetone{3}$.
%
The mapping in elevation $\elx{k}{n}{j} = \arccos (\rz/ \lVert \rangep{k}{n}{j} \rVert)$ from \eqref{eq:elevation} is 
\begin{align}
    \frac{\partial \elx{k}{n}{j} }{\partial\rangep{k}{n}{j} } = \frac{1}{\lVert \rangep{k}{n}{j} \rVert^2 \, \sqrt{ \rx^2 + \ry^2} } 
    \begin{bmatrix}
        \rx \rz \\
        \ry \rz \\
        -\!\left(\rx^2 + \ry^2\right) 
    \end{bmatrix}
    \quad \in \realset{3}{1} \,,
\end{align}
in azimuth $\azx{k}{n}{j} = \arctantwo(\ry,\rx)$ from \eqref{eq:azimuth} 
\begin{align}
     \frac{\partial \azx{k}{n}{j} }{\partial\rangep{k}{n}{j} } = 
    \frac{1}{\rx^2 + \ry^2}
    \begin{bmatrix}
        -\ry \\
        \rx \\
        0
    \end{bmatrix} 
    \quad \in \realset{3}{1} \,.
\end{align}
%
The mapping of Fisher information in \textit{local} spherical \gls{pa} coordinates to \textit{local} Cartesian \gls{pa} coordinates in perceived delay $\delayx{k}{n}{j} =  \frac{\lVert \rangep{k}{n}{j} \rVert}{\lightspeed} $ from~\eqref{eq:delay} is 
\begin{align}
     \frac{\partial \delayx{k}{n}{j} }{\partial\rangep{k}{n}{j} } = \frac{\rangep{k}{n}{j}}{\lightspeed ~ \lVert \rangep{k}{n}{j} \rVert} \quad \in \realset{3}{1} \,,
\end{align}
and in carrier phase 
$\phasex{k}{n}{j} = -\frac{2\pi}{\lightspeed}\fc \lVert \acute{\V{r}}\rVert + \angle \varrho_{\!\scriptscriptstyle k,n}^{\scriptscriptstyle(j)}$
is 
\begin{align}
     \frac{\partial \phasex{k}{n}{j} }{\partial\rangep{k}{n}{j} } = 
    \frac{-2 \pi ~ \rangep{k}{n}{j}}{\lambda ~ \lVert \rangep{k}{n}{j} \rVert} \quad \in \realset{3}{1} \,,
\end{align}
 with wavelength $\lambda = \frac{\lightspeed}{\fc}$.
%
As mentioned above, we assume no mapping from amplitude moduli to \gls{mt} position, i.e., 
$\jacobP{\modulus} = \frac{\partial {\modulivec{n}{(j)}}^\trp}{\partial \pos{n}} \approx \mathbf{0}$ under the approximation $\frac{\partial g_{\scriptscriptstyle k,n}^{\scriptscriptstyle (j)}}{\partial \rangep{k}{n}{j}}\approx \mathbf{0}$, because their information contribution on the \gls{mt} position is negligible compared to the contributions from the other channel parameters 
$\big\{
\RVelVecx{n}{j}, 
\RVazVecx{n}{j},
\RVdelayVecx{n}{j}, 
\RVphasevec{n}{(j)} 
\big\}$.

The following Jacobian terms map between coordinate systems:
The mapping from $\rangep{k}{n}{j}$ defined in~\eqref{eq:rangepSupp} in \textit{local} Cartesian \gls{pa} coordinates to \textit{global} Cartesian coordinates is 
\begin{align}
    \frac{\partial\rangep{k}{n}{j}\!^\trp}{\partial\ranget{k}{n}{j}}  = \rotM{j} 
    \quad \in SO(3) \,,
\end{align}
due to the orthogonality of rotation matrices, i.e., $\rotM{j}^{-1} = \rotM{j}^\trp$.
%
Through~\eqref{eq:ranget} and~\eqref{eq:range}, information about the vector $\ranget{k}{n}{j}$, pointing from \gls{pa} $j$ to the \gls{vm}, maps to the \gls{mt} position $\pos{n}$ via
\begin{align}\label{eq:Jacob-ranget-pos}
    \frac{\partial\ranget{k}{n}{j}\!^\trp}{\partial\pos{n}}  = 
    \frac{\partial\range{k}{n}{j}\!^\trp}{\partial\pos{n}} \frac{\partial\ranget{k}{n}{j}\!^\trp}{\partial\range{k}{n}{j}} = \house{k}
    \quad \in \realset{3}{3} \,.
\end{align}
using the symmetry of Householder matrices, i.e., $\house{k}^\trp = \house{k}$, and knowing that information about $\range{k}{n}{j}$, pointing from \gls{va} position $\posVAdet{k}{j}$ to the \gls{mt}, maps to \gls{mt} position $\pos{n}$ simply through $\partial\range{k}{n}{j}\!^\trp/\partial\pos{n} = \eye{3}$ per~\eqref{eq:range}.

These fundamental Jacobian building blocks will reappear in the submatrix blocks of the Jacobians $\jacobgn{j}$ in~\eqref{eq:jacobian-main}, which we define next.

\subsubsection{Positioning Submatrices}       
Through the chain rule of derivatives, the Jacobians mapping to \gls{mt} position are now assembled by building chains of these fundamental building blocks.
Each component $\widetilde{\ncomponent}=\ncomponent+1\in\{1 \hdots \widetilde{K}\}$ impinging at anchor $j$ leads to a $(3\times1)$ Jacobian submatrix in \eqref{eq:jacobian-main} mapping Fisher information to \gls{mt} position $\pos{n}$ via
\begingroup             
\allowdisplaybreaks[4]
\begin{align}
    \left[\jacobP{\elevation} \right]_{\scriptscriptstyle :,\widetilde{\ncomponent}}
    &= 
    \frac{\partial\ranget{k}{n}{j}\!^\trp}{\partial\pos{n}}
    \frac{\partial\rangep{k}{n}{j}\!^\trp}{\partial\ranget{k}{n}{j}}
    \frac{\partial \elx{k}{n}{j} }{\partial\rangep{k}{n}{j} } \quad \in \realset{3}{1}  &&\\
    &= 
    \house{k} ~ \rotM{j} ~ 
    \frac{1}{\lVert \rangep{k}{n}{j} \rVert^2 \, \sqrt{ \rx^2 + \ry^2} } 
    \begin{bmatrix}
        \rx \rz \\
        \ry \rz \\
        -\!\left(\rx^2 + \ry^2\right) 
    \end{bmatrix}  & 
    \nonumber&\\
    \left[\jacobP{\azimuth} \right]_{\scriptscriptstyle :,\widetilde{\ncomponent}}
    &= 
    \frac{\partial\ranget{k}{n}{j}\!^\trp}{\partial\pos{n}}
    \frac{\partial\rangep{k}{n}{j}\!^\trp}{\partial\ranget{k}{n}{j}}
     \frac{\partial \azx{k}{n}{j} }{\partial\rangep{k}{n}{j} }\nonumber &&\\
    &= 
    \house{k} ~ \rotM{j} ~ 
    \frac{1}{\rx^2 + \ry^2}
    \begin{bmatrix}
        -\ry \\
        \rx \\
        0
    \end{bmatrix} \quad \in \realset{3}{1} & 
    &\\
    \left[\jacobP{\delay} \right]_{\scriptscriptstyle :,\widetilde{\ncomponent}}
    &= 
    \frac{\partial\ranget{k}{n}{j}\!^\trp}{\partial\pos{n}}
    \frac{\partial\rangep{k}{n}{j}\!^\trp}{\partial\ranget{k}{n}{j}}
    \frac{\partial \delayx{k}{n}{j} }{\partial\rangep{k}{n}{j} } \nonumber &&\\
    &= 
    \house{k} ~ \rotM{j} ~ 
    \frac{\rangep{k}{n}{j}}{\lightspeed ~ \lVert \rangep{k}{n}{j} \rVert} \quad \in \realset{3}{1}     \, & 
    &\\
    \left[\jacobP{\varphi} \right]_{\scriptscriptstyle :,\widetilde{\ncomponent}}
    &= 
    \frac{\partial\ranget{k}{n}{j}\!^\trp}{\partial\pos{n}}
    \frac{\partial\rangep{k}{n}{j}\!^\trp}{\partial\ranget{k}{n}{j}}
    \frac{\partial \phasex{k}{n}{j} }{\partial\rangep{k}{n}{j} } \nonumber &&\\
    &= 
    \house{k} ~ \rotM{j} ~ 
    \frac{-2 \pi ~ \rangep{k}{n}{j}}{\lambda ~ \lVert \rangep{k}{n}{j} \rVert} \quad \in \realset{3}{1}  \,.   
    &
\end{align}
\endgroup

\subsubsection{Mapping submatrices}           
\ifthenelse{\equal{\IEEEversion}{true}}
{
}{
    In~\cite{Deutschmann25Asilomar,Li25adaptiveDMIMOslam} we have introduced the \gls{meb} $\MEB{k,n}\!\coloneqq\!\sqrt{\big[\PCRLB\big]_{\scriptscriptstyle\bm{i}_k,\bm{i}_k}}$, which lower-bounds the estimation error on \gls{sfv} positions $\RVpsfv{k}{n}{j}$.
    In this work, we extend the respective \gls{pcrlb} to nondiagonal channel \glspl{fim} $\FIMch{n}{j}$ with elements from Table~\ref{Stab:channel-fim} and to 3D using the Jacobian matrices below.
}
For mapping, there is another fundamental Jacobian block that maps Fisher information in $\range{k}{n}{j}$ from \eqref{eq:range}
to \gls{sfv} position 
$\V{p}_{\!\scriptscriptstyle k}^{\text{\tiny sfv}}$ via~\cite{Deutschmann25Asilomar}
\begin{align}
      \frac{\partial {\range{k}{n}{j}}^\trp }{\partial\V{p}_{\!\scriptscriptstyle k}^{\text{\tiny sfv} }  } 
      =
      2 \frac{\posPA{j}{\V{p}_{\!\scriptscriptstyle k}^{\text{\tiny sfv}}}^\trp}{\lVert\V{p}_{\!\scriptscriptstyle k}^{\text{\tiny sfv}}\rVert^2} + 
      2 \frac{{\posPA{j}}^\trp \V{p}_{\!\scriptscriptstyle k}^{\text{\tiny sfv}}}{\lVert\V{p}_{\!\scriptscriptstyle k}^{\text{\tiny sfv}}\rVert^2} \house{k} -
        \eye{3} \eqqcolon \jacobMVAblock{}{} \,.
\end{align}
Partitioning the $(3\Ncomponents \!\times\! \widetilde{\Ncomponents})$ Jacobian matrices
\begin{align}
    \jacobM{\theta^{\text{\tiny ch}}} \!=\!
    \begin{bmatrixs}
        \mathbf{0}  & \jacobM{\theta^{\text{\tiny ch}}\!,1} & \dots & \mathbf{0}\\
        \vdots      & \vdots                                &  \ddots & \vdots \\
        \mathbf{0}  & \mathbf{0}  &  \dots & \jacobM{\theta^{\text{\tiny ch}}\!,{K}}
    \end{bmatrixs}\quad \in \realset{3\Ncomponents }{\widetilde{\Ncomponents}}
\end{align}
mapping to all \glspl{sfv} into $(3 \!\times\! 1)$ vectors $\jacobM{\theta^{\text{\tiny ch}}\!,{k}}$, each mapping to a single \gls{sfv} ${k}$, with $\theta^{\text{\tiny ch}}\!\in\!\{\elevation,\azimuth,\delay,\varphi\}$ denoting the respective local channel parameter,
each \gls{smc} ${\ncomponent}\! \in\! \{1\hdots{\Ncomponents}\}$ impinging at \gls{pa} $j$ 
maps Fisher information to 
\gls{sfv} position $\V{p}_{\!\scriptscriptstyle k}^{\text{\tiny sfv} }$ via
\begingroup             
\allowdisplaybreaks[4]
\begin{align}
    \jacobM{\elevation,{\ncomponent}} &= 
    \frac{\partial\range{k}{n}{j}\!^\trp}{\partial 
    \V{p}_{\!\scriptscriptstyle k}^{\text{\tiny sfv} }
    }
    \frac{\partial\ranget{k}{n}{j}\!^\trp}{\partial\range{k}{n}{j}}
    \frac{\partial\rangep{k}{n}{j}\!^\trp}{\partial\ranget{k}{n}{j}}
    \frac{\partial \elx{k}{n}{j} }{\partial\rangep{k}{n}{j} }  \\
    &=  
    \frac{\jacobMVAblock{}{} ~ \house{k} ~ \rotM{j}}{\lVert \rangep{k}{n}{j} \rVert^2 \, \sqrt{ \rx^2 + \ry^2} } 
    \begin{bmatrix}
        \rx \rz \\
        \ry \rz \\
        -\!\left(\rx^2 + \ry^2\right) 
    \end{bmatrix}\quad \in \realset{3}{1} \nn
    \\
    \jacobM{\azimuth,{\ncomponent}} &= 
    \frac{\partial\range{k}{n}{j}\!^\trp}{\partial\V{p}_{\!\scriptscriptstyle k}^{\text{\tiny sfv}}}
    \frac{\partial\ranget{k}{n}{j}\!^\trp}{\partial\range{k}{n}{j}}
    \frac{\partial\rangep{k}{n}{j}\!^\trp}{\partial\ranget{k}{n}{j}}
     \frac{\partial \azx{k}{n}{j} }{\partial\rangep{k}{n}{j} }  \\
    &= 
    \jacobMVAblock{s}{s'} ~ \house{k} ~ \rotM{j} ~ 
    \frac{1}{\rx^2 + \ry^2}
    \begin{bmatrix}
        -\ry \\
        \rx \\
        0
    \end{bmatrix} \quad \in \realset{3}{1} \nn
    \\
    \jacobM{\delay,{\ncomponent}} &= 
    \frac{\partial\range{k}{n}{j}\!^\trp}{\partial\V{p}_{\!\scriptscriptstyle k}^{\text{\tiny sfv}}}
    \frac{\partial\ranget{k}{n}{j}\!^\trp}{\partial\range{k}{n}{j}}
    \frac{\partial\rangep{k}{n}{j}\!^\trp}{\partial\ranget{k}{n}{j}}
    \frac{\partial \delayx{k}{n}{j} }{\partial\rangep{k}{n}{j} } \nonumber \\
    &= 
    \jacobMVAblock{s}{s'} ~ \house{k} ~ \rotM{j} ~ 
    \frac{\rangep{k}{n}{j}}{\lightspeed ~ \lVert \rangep{k}{n}{j} \rVert} \quad \in \realset{3}{1}     
    \\
    \jacobM{\varphi,{\ncomponent}} &= 
    \frac{\partial\range{k}{n}{j}\!^\trp}{\partial\V{p}_{\!\scriptscriptstyle k}^{\text{\tiny sfv}}}
    \frac{\partial\ranget{k}{n}{j}\!^\trp}{\partial\range{k}{n}{j}}
    \frac{\partial\rangep{k}{n}{j}\!^\trp}{\partial\ranget{k}{n}{j}}
    \frac{\partial \phasex{k}{n}{j} }{\partial\rangep{k}{n}{j} } \nonumber \\
    &= 
    \jacobMVAblock{s}{s'} ~ \house{k} ~ \rotM{j} ~ 
    \frac{-2 \pi ~ \rangep{k}{n}{j}}{\lambda ~ \lVert \rangep{k}{n}{j} \rVert} \quad \in \realset{3}{1}     \,.
\end{align}
\endgroup

\subsubsection{Nuisance Parameter Submatrices}       
The Jacobian matrices mapping from phases in $\etach{n}{j}$ to phases in $\etaglobal{n}$ are
\begin{align*}
    \jacobC{\varphi} = \frac{\partial \phasevec{n}{(j)}\!^\trp}{\partial \phasevec{n}{}} = 
    \begin{cases}
        \unitvector{j} \otimes \eye{\widetilde{\Ncomponents}}  &\in \binset{J\widetilde{\Ncomponents}}{\widetilde{\Ncomponents}} ~~ \text{noncoherent}
        \\
        ~~~~\eye{\widetilde{\Ncomponents}} &\in \binset{\widetilde{\Ncomponents}}{\widetilde{\Ncomponents}} ~~~~ \text{coherent}
    \end{cases}
\end{align*}
and the Jacobian matrices mapping from moduli are
\begin{align}
    \jacobA{a} = \frac{\partial \modulivec{n}{(j)}\!^\trp}{\partial \modulivec{n}{}} = \unitvector{j} \otimes \eye{\widetilde{\Ncomponents}} \quad \in \binset{J\widetilde{\Ncomponents}}{\widetilde{\Ncomponents}}\,,
\end{align}
with $\unitvector{j}\!\in\!\binsetone{J}$ denoting a unit vector where the $j$\textsuperscript{th} entry is one.

\bibliographystyle{IEEEtran}
\balance
\bibliography{IEEEabrv,bibliography,ThisPaper}